\newif\iflncs
    \newcommand{\href}[2]{#2}
\newtheorem{observation}[theorem]{Observation}
\theoremstyle{definition}
\newtheorem{theorem}{Theorem}[section]
\newtheorem{lemma}[theorem]{Lemma}
\newtheorem{definition}[theorem]{Definition}
\newtheorem{observation}[theorem]{Observation}
\newtheorem{corollary}[theorem]{Corollary}
\newtheorem{claim}[theorem]{Claim}
\newcommand{\ta}{\tilde{\alpha}}
\newcommand{\tb}{\tilde{\beta}}
\newcommand{\tg}{\tilde{\gamma}}
\newif\ifabstract
\newif\iffull
\newcounter{section-preserve}
\newcounter{theorem-preserve}
\newcounter{lemma-preserve}
\newcounter{corollary-preserve}
\newcommand{\blank}[1]{}
\newtoks\magicAppendix
\newtoks\magictoks
\newif\iflater
\long\def\later#1{\magictoks={#1}%
  \edef\magictodo{\noexpand\magicAppendix={\the\magicAppendix \par
    \the\magictoks%
  }}
  \magictodo}
\long\def\both#1{\magictoks={#1}%
  \edef\magictodo{\noexpand\magicAppendix={\the\magicAppendix \par
    \noexpand\setcounter{theorem-preserve}{\noexpand\arabic{theorem}}%
    \noexpand\setcounter{lemma-preserve}{\noexpand\arabic{lemma}}%
    \noexpand\setcounter{corollary-preserve}{\noexpand\arabic{corollary}}%
    \noexpand\setcounter{theorem}{\arabic{theorem}}%
    \noexpand\setcounter{lemma}{\arabic{lemma}}%
    \noexpand\setcounter{corollary}{\arabic{corollary}}%
    \noexpand\setcounter{section-preserve}{\noexpand\arabic{section}}%
    \noexpand\setcounter{section}{\arabic{section}}%
	\noexpand\let\noexpand\oldsection=\noexpand\thesection
	\noexpand\def\noexpand\thesection{\thesection}
	\noexpand\let\noexpand\oldlabel=\noexpand\label
	\noexpand\let\noexpand\label=\noexpand\blank
    \the\magictoks%
    \noexpand\setcounter{theorem}{\noexpand\arabic{theorem-preserve}}%
    \noexpand\setcounter{lemma}{\noexpand\arabic{lemma-preserve}}%
    \noexpand\setcounter{corollary}{\noexpand\arabic{corollary-preserve}}%
    \noexpand\setcounter{section}{\noexpand\arabic{section-preserve}}%
	\noexpand\let\noexpand\thesection=\noexpand\oldsection
	\noexpand\let\noexpand\label=\noexpand\oldlabel
  }}
  \magictodo
  \the\magictoks}
\long\def\later#1{#1}
\long\def\both#1{#1}
\long\def\magicappendix{
	\latertrue%
	\the\magicAppendix%
}
\title{The Simulation Powers and Limitations of Higher Temperature Hierarchical Self-Assembly Systems\thanks{Supported in part by National Science Foundation Grant CCF-1422152.}}
\author{
  Jacob Hendricks%
        \thanks{Dept. of Comp. Sci. and Comp. Eng., University of Arkansas,
      \protect\url{jhendric@uark.edu}
     }
\and
  Matthew J. Patitz%
    \thanks{Dept. of Comp. Sci. and Comp. Eng., University of Arkansas,
      \protect\url{patitz@uark.edu}
      }
\and
 Trent A. Rogers%
        \thanks{Dept. of Comp. Sci. and Comp. Eng., University of Arkansas,
      \protect\url{tar003@uark.edu}
      This author's research was supported by the National Science Foundation Graduate Research Fellowship Program under Grant No. DGE-1450079.}
      }
\institute{}
\date{}
\begin{document}

\maketitle

\begin{abstract}
\vspace{-10pt}
In this paper, we extend existing results about simulation and intrinsic universality in a model of tile-based self-assembly.  Namely, we work within the 2-Handed Assembly Model (2HAM), which is a model of self-assembly in which assemblies are formed by square tiles that are allowed to combine, using glues along their edges, individually or as pairs of arbitrarily large assemblies in a hierarchical manner, and we explore the abilities of these systems to simulate each other when the simulating systems have a higher ``temperature'' parameter, which is a system wide threshold dictating how many glue bonds must be formed between two assemblies to allow them to combine.  It has previously been shown that systems with lower temperatures cannot simulate arbitrary systems with higher temperatures, and also that systems at some higher temperatures can simulate those at particular lower temperatures, creating an infinite set of infinite hierarchies of 2HAM systems with strictly increasing simulation power within each hierarchy.  These previous results relied on two different definitions of simulation, one (\emph{strong} simulation) seemingly more restrictive than the other (\emph{standard} simulation), but which have previously not been proven to be distinct.  Here we prove distinctions between them by first fully characterizing the set of pairs of temperatures such that the high temperature systems are intrinsically universal for the lower temperature systems (i.e. one tile set at the higher temperature can simulate any at the lower) using strong simulation.  This includes the first impossibility result for simulation downward in temperature.  We then show that lower temperature systems which cannot be simulated by higher temperature systems using the strong definition, can in fact be simulated using the standard definition, proving the distinction between the types of simulation.
\end{abstract}

\section{Introduction}

In computational theory, a powerful and widely used tool for determining the relative powers of systems is simulation.  For instance, in order to prove the equivalence, in terms of computational power, of Turing machines and various abstract models such as tag systems, counter machines, cellular automata, and tile-based self-assembly, systems have been developed in each which demonstrate their abilities to simulate arbitrary Turing machines, and vice versa.  This has been used to prove that whatever can be computed by a system within one model can also be computed by a system in another.  Additionally, the notion of a universal Turing machine is based upon the fact that there exist Turing machines which can simulate others.

The methods of simulation which are typically employed involve mappings of behaviors and states in one model or system to those in another, often following some ``natural'' mapping function, and also often in such a way that the simulation is guaranteed to generate the same final result as the simulated system, and maybe even some or all of its intermediate states.  Nonetheless, there is usually no requirement that the simulator ``do it the same way,'' i.e. the dynamical behavior of the simulator need not mirror that of the simulated. For instance, as one Turing machine $A$ simulates another, $B$, its head movements may be in a significantly different pattern than $B$'s since, for instance, it may frequently move to a special portion of the tape which encodes $B$'s transition table, then back to the ``data'' section.

While such types of simulation can be informative when asking questions about the equivalence of computational powers of systems, oftentimes it is the behavior of a system which is of interest, not just its ``output.''  Self-assembling systems, which are those composed of large numbers of relatively simple components which autonomously combine to form structures using only local interactions, often fall into this category since the actual ways in which they evolve and build structures are of key importance.  In this paper, we focus our attention on tile-based self-assembling systems in a model known as the 2-Handed Assembly Model (2HAM) \cite{Versus}, which is a generalization of the abstract Tile Assembly Model (aTAM) \cite{Winf98} in which the basic components are square \emph{tiles} which are able to bind to each other when they possess matching \emph{glues} on their edges.  In the aTAM, assembly occurs as tiles autonomously combine, with one tile at a time attaching to a growing assembly.  In the 2HAM, similar growth can occur, but it is possible for pairs of arbitrarily large assemblies (a.k.a. supertiles) to combine as well.
Because the dynamical behaviors of these systems are of such importance, work in these models (e.g. \cite{IUSA,2HAMIU,IUNeedsCoop,EquivCAandTAM,Signals3D,WoodsIU2013}) has turned to a notion of simulation developed within the domain of cellular automata, whose dynamical behaviors are also often of central importance.  This notion of simulation, called \emph{intrinsic universality} (see %
\cite{mazoyer1998inducing,DurandRoka,bulkingI,Delorme2011,Goles-etal-2011,Ollinger-CSP08,ollingerRichard2011four,arrighi2012intrinsic,arrighi2012intrinsically}
for some examples related to various models such as cellular automata), is defined in such a way that the simulations performed are essentially ``in place'' simulations which mirror the dynamics of the simulated systems, modulo a scale factor allowed the simulator.  Intrinsic universality has been used to show the existence of ``universal'' systems, somewhat analogous to universal Turing machines, which can simulate all other systems within a given model or class of systems, but in a dynamics-preserving way.  Previous work \cite{IUSA} has shown that there exists a single aTAM tile set which is capable of simulating any arbitrary aTAM system, and thus that tile set is intrinsically universal (IU) for the aTAM (and we also say that the aTAM is IU).  Further work in \cite{2HAMIU} showed that the 2HAM is much more complicated in terms of IU, with there existing hierarchies of 2HAM systems with strictly-increasing power of simulation.  These simulations are performed by scaled blocks of tiles known as macrotiles in the simulator used to simulate individual tiles in the simulated systems.  The simulation hierarchy in the 2HAM is based on a classification of systems separated by a system parameter known as the \emph{temperature}, which is the global threshold that specifies the minimum strength of glue bindings required for pairs of tiles or supertiles to combine.  It was proven in \cite{2HAMIU} that for every temperature $\tau \ge 2$, there exists a system at temperature $\tau$ such that no system at temperature $\tau' < \tau$ can simulate it.  However, they also showed that for each $\tau \ge 2$, the class of 2HAM systems at $\tau$ is IU.

The motivation of the current paper is to extend and further develop the results of \cite{2HAMIU}, especially Theorem 4 which states:  "There exists an infinite number of infinite hierarchies of 2HAM systems with strictly-increasing power (and temperature) that can simulate downward within their own hierarchy.''  Our results elucidate more details about this hierarchy, including proving important differences between different notions of simulation used to characterize intrinsically universal systems.  More specifically, different definitions of simulation have been used even within the IU results of \cite{2HAMIU}, with one referred to as \emph{strong} simulation and one as (\emph{standard}) simulation.  Strong simulation is a stricter notion essentially stating that whenever two supertiles in the simulated system $\calT$ can combine, every pair of macrotiles that represents them in the simulator $\calS$ must be able to (eventually) combine.  However, standard simulation simply requires that for each half of such a pair in the simulator, there must exist some mate with which it can eventually combine.  While both notions of simulation were utilized in \cite{2HAMIU}, no concrete distinction was proven in terms of what is or isn't possible between them.  Here, we first prove that higher temperature systems can strongly simulate lower temperature systems if and only if there is a relationship between the temperature values which we call a \emph{uniform mapping}.  We show that it is easy to find whether such a mapping exists between two temperatures and, if so, what one is, and prove that for each pair of temperatures $2 < \tau < \tau'$ where a uniform mapping exists from $\tau$ to $\tau'$, that there exists a tile set which, at temperature $\tau'$, is IU for the class of 2HAM systems at $\tau$.  We then prove that if no uniform mapping exists from $\tau$ to $\tau'$, then there exist systems at $\tau$ which cannot be strongly simulated by any system at $\tau'$, which is the first impossibility result for simulating downward in temperature that we are aware of, and is of interest because a natural intuition is that higher temperature systems are strictly more powerful. (However, we also show that for any given $\tau$ there are only a finite number of $\tau' > \tau$ to which a uniform mapping does not exist.)  Finally, we show that some systems which cannot be strongly simulated by higher temperature systems when no uniform mapping exists between temperatures can in fact be simulated from the higher temperature using the standard definition of simulation.  This shows the first clear distinction between what is possible under the various definitions, and that the notion of strong simulation is provably more restrictive than that of (standard) simulation since the set of systems which can be simulated by a higher temperature system is strictly greater than that which can be strongly simulated.

In the next section we provide the definitions of the model and framework for our results, then provide an overview of our results in the following sections.  Please note that due to space constraints, proofs have been placed in the Appendix.

\section{Definitions}\label{sec:definitions}
\subsection{Informal definition of the 2HAM}

Here we give a brief, informal, sketch of the 2HAM.  Please see Section~\ref{sec:2ham-formal} for a more formal definition.
The 2HAM \cite{AGKS05g,DDFIRSS07} is a generalization of the aTAM \cite{Winf98}, and in both the basic components are ``tiles''.
A \emph{tile type} is a unit square with four sides, each having a \emph{glue} consisting of a \emph{label} (a finite string) and \emph{strength} (a non-negative integer).   We assume a finite set $T$ of tile types, but an infinite number of copies of each tile type, each copy referred to as a \emph{tile}.
A \emph{supertile} is (the set of all translations of) a positioning of tiles on the integer lattice $\Z^2$.  Two adjacent tiles in a supertile \emph{interact} if the glues on their abutting sides are equal and have positive strength.
Each supertile induces a \emph{binding graph}, a grid graph whose vertices are tiles, with an edge between two tiles if they interact.
The supertile is \emph{$\tau$-stable} if every cut of its binding graph has strength at least $\tau$, where the weight of an edge is the strength of the glue it represents.
That is, the supertile is stable if at least energy $\tau$ is required to separate the supertile into two parts.
A 2HAM \emph{tile assembly system} (TAS) is a triple $\calT = (T,S,\tau)$, where $T$ is a finite tile set, $S$ is a set of \emph{seed} supertiles over $T$, and $\tau$ is the \emph{temperature}, usually 1 or 2.  When $S$ is solely an infinite number of each of the singleton tiles of $T$, we call that the default initial state, and for shorthand notion refer to a TAS with a default initial state simply as a pair $\calT = (T,\tau)$.
Given a TAS $\calT=(T,S,\tau)$, a supertile is \emph{producible}, written as $\alpha \in \prodasm{T}$ if either it is a (super)tile in $S$, or it is the $\tau$-stable result of translating two producible assemblies without overlap.  That is, any $\tau$-stable supertile which can result from some positioning of two producible supertiles, so that they do not overlap and they bind with at least strength $\tau$, is itself a producible supertile. This potentially allows for the combination of pairs of arbitrary large supertiles.
A supertile $\alpha$ is \emph{terminal}, written as $\alpha \in \termasm{T}$ if for every producible supertile $\beta$, $\alpha$ and $\beta$ cannot be $\tau$-stably attached.

\ifabstract
\later{

\section{Formal definition of the 2HAM}\label{sec:2ham-formal}

We now formally define the 2HAM.

Two assemblies $\alpha$ and $\beta$ are \emph{disjoint} if $\dom \alpha \cap \dom \beta = \emptyset.$
For two assemblies $\alpha$ and $\beta$, define the \emph{union} $\alpha \cup \beta$ to be the assembly defined for all $\vec{x}\in\Z^2$ by $(\alpha \cup \beta)(\vec{x}) = \alpha(\vec{x})$ if $\alpha(\vec{x})$ is defined, and $(\alpha \cup \beta)(\vec{x}) = \beta(\vec{x})$ otherwise. Say that this union is \emph{disjoint} if $\alpha$ and $\beta$ are disjoint.

The \emph{binding graph of} an assembly $\alpha$ is the grid graph
$G_\alpha = (V, E )$, where $V =
\dom{\alpha}$, and $\{\vec{m}, \vec{n}\} \in E$ if and only if (1)
$\vec{m} - \vec{n} \in U_2$, (2)
$\lab_{\alpha(\vec{m})}\left(\vec{n} - \vec{m}\right) =
\lab_{\alpha(\vec{n})}\left(\vec{m} - \vec{n}\right)$, and (3)
$\strength_{\alpha(\vec{m})}\left(\vec{n} -\vec{m}\right) > 0$.
Given $\tau \in \mathbb{N}$, an
assembly is $\tau$-\emph{stable} (or simply \emph{stable} if $\tau$ is understood from context), if it
cannot be broken up into smaller assemblies without breaking bonds
of total strength at least $\tau$; i.e., if every cut of $G_\alpha$
has weight at least $\tau$, where the weight of an edge is the strength of the glue it represents. In contrast to the model of Wang tiling, the nonnegativity of the strength function implies that glue mismatches between adjacent tiles do not prevent a tile from binding to an assembly, so long as sufficient binding strength is received from the (other) sides of the tile at which the glues match.

For assemblies $\alpha,\beta:\Z^2 \dashrightarrow T$ and $\vec{u} \in \Z^2$, we write $\alpha+\vec{u}$ to denote the assembly defined for all $\vec{x}\in\Z^2$ by $(\alpha+\vec{u})(\vec{x}) = \alpha(\vec{x}-\vec{u})$, and write $\alpha \simeq \beta$ if there exists $\vec{u}$ such that $\alpha + \vec{u} = \beta$; i.e., if $\alpha$ is a translation of $\beta$. Given two assemblies $\alpha,\beta:\Z^2 \dashrightarrow T$, we say $\alpha$ is a \emph{subassembly} of $\beta$, and we write $\alpha \sqsubseteq \beta$, if $S_\alpha \subseteq S_\beta$ and, for all points $p \in S_\alpha$, $\alpha(p) = \beta(p)$.
Define the \emph{supertile} of $\alpha$ to be the set $\ta = \setr{\beta}{\alpha \simeq \beta}$.
A supertile $\ta$ is \emph{$\tau$-stable} (or simply \emph{stable}) if all of the assemblies it contains are $\tau$-stable; equivalently, $\ta$ is stable if it contains a stable assembly, since translation preserves the property of stability. Note also that the notation $|\ta| \equiv |\alpha|$ is the size of the supertile (i.e., number of tiles in the supertile) is well-defined, since translation preserves cardinality (and note in particular that even though we define $\ta$ as a set, $|\ta|$ does not denote the cardinality of this set, which is always $\aleph_0$).

For two supertiles $\ta$ and $\tb$, and temperature $\tau\in\N$, define the \emph{combination} set $C^\tau_{\ta,\tb}$ to be the set of all supertiles $\tg$ such that there exist $\alpha \in \ta$ and $\beta \in \tb$ such that (1) $\alpha$ and $\beta$ are disjoint (steric protection), (2) $\gamma \equiv \alpha \cup \beta$ is $\tau$-stable, and (3) $\gamma \in \tg$. That is, $C^\tau_{\ta,\tb}$ is the set of all $\tau$-stable supertiles that can be obtained by ``attaching'' $\ta$ to $\tb$ stably, with $|C^\tau_{\ta,\tb}| > 1$ if there is more than one position at which $\beta$ could attach stably to $\alpha$.

It is common with seeded assembly to stipulate an infinite number of copies of each tile, but our definition allows for a finite number of tiles as well. Our definition also allows for the growth of infinite assemblies and finite assemblies to be captured by a single definition, similar to the definitions of \cite{jSSADST} for seeded assembly.

Given a set of tiles $T$, define a \emph{state} $S$ of $T$ to be a multiset of supertiles, or equivalently, $S$ is a function mapping supertiles of $T$ to $\N \cup \{\infty\}$, indicating the multiplicity of each supertile in the state. We therefore write $\ta \in S$ if and only if $S(\ta) > 0$.

A \emph{(two-handed) tile assembly system} (\emph{TAS}) is an ordered triple $\mathcal{T} = (T, S, \tau)$, where $T$ is a finite set of tile types, $S$ is the \emph{initial state}, and $\tau\in\N$ is the temperature. If not stated otherwise, assume that the initial state $S$ is defined $S(\ta) = \infty$ for all supertiles $\ta$ such that $|\ta|=1$, and $S(\tb) = 0$ for all other supertiles $\tb$. That is, $S$ is the state consisting of a countably infinite number of copies of each individual tile type from $T$, and no other supertiles. In such a case we write $\calT = (T,\tau)$ to indicate that $\calT$ uses the default initial state.  For notational convenience we sometimes describe $S$ as a set of supertiles, in which case we actually mean that  $S$ is a multiset of supertiles with infinite count of each supertile. We also assume that, in general, unless stated otherwise, the count for any supertile in the initial state is infinite.

Given a TAS $\calT=(T,S,\tau)$, define an \emph{assembly sequence} of $\calT$ to be a sequence of states $\vec{S} = (S_i \mid 0 \leq i < k)$ (where $k = \infty$ if $\vec{S}$ is an infinite assembly sequence), and $S_{i+1}$ is constrained based on $S_i$ in the following way: There exist supertiles $\ta,\tb,\tg$ such that (1) $\tg \in C^\tau_{\ta,\tb}$, (2) $S_{i+1}(\tg) = S_{i}(\tg) + 1$,\footnote{with the convention that $\infty = \infty + 1 = \infty - 1$} (3) if $\ta \neq \tb$, then $S_{i+1}(\ta) = S_{i}(\ta) - 1$, $S_{i+1}(\tb) = S_{i}(\tb) - 1$, otherwise if $\ta = \tb$, then $S_{i+1}(\ta) = S_{i}(\ta) - 2$, and (4) $S_{i+1}(\tilde{\omega}) = S_{i}(\tilde{\omega})$ for all $\tilde{\omega} \not\in \{\ta,\tb,\tg\}$.
That is, $S_{i+1}$ is obtained from $S_i$ by picking two supertiles from $S_i$ that can attach to each other, and attaching them, thereby decreasing the count of the two reactant supertiles and increasing the count of the product supertile. If $S_0 = S$, we say that $\vec{S}$ is \emph{nascent}.

Given an assembly sequence $\vec{S} = (S_i \mid 0 \leq i < k)$ of $\calT=(T,S,\tau)$ and a supertile $\tg \in S_i$ for some $i$, define the \emph{predecessors} of $\tg$ in $\vec{S}$ to be the multiset $\pred_{\vec{S}}(\tg) = \{\ta,\tb\}$ if $\ta,\tb \in S_{i-1}$ and $\ta$ and $\tb$ attached to create $\tg$ at step $i$ of the assembly sequence, and define $\pred_{\vec{S}}(\tg) = \{ \tg \}$ otherwise. Define the \emph{successor} of $\tg$ in $\vec{S}$ to be $\succ_{\vec{S}}(\tg)=\ta$ if $\tg$ is one of the predecessors of $\ta$ in $\vec{S}$, and define $\succ_{\vec{S}}(\tg)=\tg$ otherwise. A sequence of supertiles $\vec{\ta} = (\ta_i \mid 0 \leq i < k)$ is a \emph{supertile assembly sequence} of $\calT$ if there is an assembly sequence $\vec{S} = (S_i \mid 0 \leq i < k)$ of $\calT$ such that, for all $1 \leq i < k$, $\succ_{\vec{S}}(\ta_{i-1}) = \ta_i$, and $\vec{\ta}$ is \emph{nascent} if $\vec{S}$ is nascent.

The \emph{result} of a supertile assembly sequence $\vec{\ta}$ is the unique supertile $\res{\vec{\ta}}$ such that there exist an assembly $\alpha \in \res{\vec{\ta}}$ and, for each $0 \leq i < k$, assemblies $\alpha_i \in \ta_i$ such that $\dom{\alpha} = \bigcup_{0 \leq i < k}{\dom{\alpha_i}}$ and, for each $0 \leq i < k$, $\alpha_i \sqsubseteq \alpha$.  For all supertiles $\ta,\tb$, we write $\ta \to_\calT \tb$ (or $\ta \to \tb$ when $\calT$ is clear from context) to denote that there is a supertile assembly sequence $\vec{\ta} = ( \ta_i \mid 0 \leq i < k )$ such that $\ta_0 = \ta$ and $\res{\vec{\ta}} = \tb$. It can be shown using the techniques of \cite{Roth01} for seeded systems that for all two-handed tile assembly systems $\calT$ supplying an infinite number of each tile type, $\to_\calT$ is a transitive, reflexive relation on supertiles of $\calT$. We write $\ta \to_\calT^1 \tb$ ($\ta \to^1 \tb$) to denote an assembly sequence of length 1 from $\ta$ to $\tb$ and $\ta \to_\calT^{\leq 1} \tb$ ($\ta \to^{\leq 1} \tb$) to denote an assembly sequence of length 1 from $\ta$ to $\tb$ if $\ta \ne \tb$ and an assembly sequence of length 0 otherwise.

A supertile $\ta$ is \emph{producible}, and we write $\ta \in \prodasm{\calT}$, if it is the result of a nascent supertile assembly sequence. A supertile $\ta$ is \emph{terminal} if, for all producible supertiles $\tb$, $C^\tau_{\ta,\tb} = \emptyset$.\footnote{Note that a supertile $\ta$ could be non-terminal in the sense that there is a producible supertile $\tb$ such that $C^\tau_{\ta,\tb} \neq \emptyset$, yet it may not be possible to produce $\ta$ and $\tb$ simultaneously if some tile types are given finite initial counts, implying that $\ta$ cannot be ``grown'' despite being non-terminal. If the count of each tile type in the initial state is $\infty$, then all producible supertiles are producible from any state, and the concept of terminal becomes synonymous with ``not able to grow'', since it would always be possible to use the abundant supply of tiles to assemble $\tb$ alongside $\ta$ and then attach them.} Define $\termasm{\calT} \subseteq \prodasm{\calT}$ to be the set of terminal and producible supertiles of $\calT$. $\calT$ is \emph{directed} (a.k.a., \emph{deterministic}, \emph{confluent}) if $|\termasm{\calT}| = 1$.

} %
\fi

\subsection{Definitions for simulation}\label{sec:defsSim}
In this subsection, we formally define what it means for one 2HAM TAS to ``simulate'' another 2HAM TAS. The definitions presented in this (and the next) subsection are based on the simulation definitions from \cite{Versus, IUSA, IUNeedsCoop} and are included here for the sake of completeness.  We will be describing how the assembly process followed by a system $\mathcal{T}$ is simulated by a system $\mathcal{U}$, which we will call the \emph{simulator}.  The simulation performed by $\mathcal{U}$ will be such that the assembly process followed by $\mathcal{U}$ mirrors that of the simulated system $\mathcal{T}$, but with the individual tiles of $\mathcal{T}$ represented by (potentially large) square blocks of tiles in $\mathcal{U}$ called \emph{macrotiles}.  We now provide the definitions necessary to define $\mathcal{U}$ as a valid simulator of $\mathcal{T}$.
For a tileset $T$, let $A^T$ and $\tilde{A}^T$ denote the set of all assemblies over $T$ and all supertiles over $T$ respectively. Let $A^T_{< \infty}$ and $\tilde{A}^T_{< \infty}$ denote the set of all finite assemblies over $T$ and all finite supertiles over $T$ respectively.

In what follows, let $U$ be a tile set. An $m$-\emph{block assembly}, or {\em macrotile},  over tile set $U$ is a partial function $\gamma : \mathbb{Z}_m \times \mathbb{Z}_m \dashrightarrow U$, where $\mathbb{Z}_m = \{ 0,1,\ldots m-1 \}$.  Let $B^U_m$ be the set of all $m$-block assemblies over $U$. The $m$-block with no domain is said to be $\emph{empty}$.  For an arbitrary assembly $\alpha \in A^U$ define $\alpha^m_{x,y}$ to be the $m$-block defined by $\alpha^m_{x,y}(i,j) = \alpha(mx+i,my+j)$ for $0\leq i,j < m$.

For a partial function $R: B^{U}_m \dashrightarrow T$, define the \emph{assembly representation function} $R^*: A^{U} \dashrightarrow A^T$ such that $R^*(\alpha) = \beta$ if and only if $\beta(x,y) = R(\alpha^m_{x,y})$ for all $x,y \in \mathbb{Z}^2$.
    Further,
     $\alpha$ is said to map \emph{cleanly} to $\beta$ under $R^*$ if either (1) for all non empty blocks $\alpha^m_{x,y}$, $(x+u,y+v) \in \dom{\beta}$ for some $u,v \in \{-1,0,1\}$ such that $u^2+v^2 < 2$, or (2) $\alpha$ has at most one non-empty $m$-block $\alpha^m_{x,y}$. In other words, we allow for the existence of simulator ``fuzz'' directly north, south, east or west of a simulator  macrotile, but we exclude the possibility of diagonal fuzz.

For a given \emph{assembly representation function} $R^*$, define the \emph{supertile representation function} $\tilde{R}: \tilde{A}^{U} \dashrightarrow \mathcal{P}(A^T)$ such that $\tilde{R}(\ta) = \{R^*(\alpha) | \alpha \in \ta \}$. $\ta$ is said to \emph{map cleanly} to $\tilde{R}(\ta)$ if $\tilde{R}(\ta)\in \tilde{A}^T$ and $\alpha$ maps cleanly to $R^*(\alpha)$ for all~$\alpha \in \ta$.
In the following definitions, let $\mathcal{T} = \left(T,S,\tau\right)$  be a 2HAM TAS and, for some initial configuration $S_{\mathcal{T}}$, that depends on $\mathcal{T}$, let $\mathcal{U} = \left(U,S_{\mathcal{T}},\tau'\right)$ be a 2HAM TAS, and let $R$ be an $m$-block representation function $R: B^U_m \dashrightarrow T$.

\begin{definition}\label{scott-defn:alt-equiv-prod}
We say that $\mathcal{U}$ and $\mathcal{T}$ have \emph{equivalent productions} (at scale factor $m$), and we write $\mathcal{U} \Leftrightarrow_R \mathcal{T}$ if the following conditions hold:
\begin{enumerate}
    \item \label{scott-defn:simulate:equiv_prod_a}$\left\{\tilde{R}(\ta) | \ta \in \prodasm{\mathcal{U}}\right\} = \prodasm{\mathcal{T}}$.
    \item \label{scott-defn:simulate:equiv_prod_a2}$\left\{\tilde{R}(\ta) | \ta \in \termasm{\mathcal{U}}\right\} = \termasm{\mathcal{T}}$.    \item \label{scott-defn:simulate:equiv_prod_b}For all $\ta \in \prodasm{\mathcal{U}}$, $\ta$ maps cleanly to $\tilde{R}(\ta)$
\end{enumerate}
\end{definition}

Equivalent production tells us that a simulating system $\mathcal{U}$ produces exactly the same set of assemblies as the simulated system $\mathcal{T}$, modulo scale factor (with the representation function providing the mapping of assemblies between the systems).  While this is a powerful set of conditions ensuring that the simulator makes the same assemblies, it does not provide a guarantee that the simulator makes them in the \emph{same way}.  Namely, we desire a simulator to make the same assemblies, but also by following the same assembly sequences (again modulo scale and application of the representation function).  We call this the \emph{dynamics} of the systems and capture the necessary equivalence in the next few definitions.  It is notable that the conditions required for the dynamics of the systems to be equivalent, \emph{following} and \emph{modeling}, are strong enough that equivalent production follows in a straightforward way from them, and therefore is redundant.  However, we include it for completeness and clarity.\begin{definition}\label{scott-defn:alt-equiv-dynamic-t-to-s}
We say that $\mathcal{T}$ \emph{follows} $\mathcal{U}$ (at scale factor $m$), and we write $\mathcal{T} \dashv_R \mathcal{U}$ if, for any $\ta, \tb \in \prodasm{\mathcal{U}}$ such that $\ta \rightarrow_{\mathcal{U}}^1 \tb$, $\tilde{R}(\ta) \rightarrow_\mathcal{T}^{\leq 1} \tilde{R}\left(\tb\right)$.
\end{definition}

\begin{definition}\label{scott-defn:alt-equiv-dyanmic-s-to-t-weak}
We say that $\mathcal{U}$ \emph{weakly models} $\mathcal{T}$ (at scale factor $m$), and we write $\mathcal{U} \models^-_R \mathcal{T}$ if, for any $\ta, \tb \in \prodasm{\mathcal{T}}$ such that $\ta \rightarrow_\mathcal{T}^1 \tb$, for all $\ta' \in \prodasm{\mathcal{U}}$ such that $\tilde{R}(\ta')=\ta$, there exists an $\ta'' \in \prodasm{\mathcal{U}}$ such that $\tilde{R}(\ta'')=\ta$, $\ta' \rightarrow_{\mathcal{U}} \ta''$, and $\ta'' \rightarrow_{\mathcal{U}}^1 \tb'$ for some $\tb' \in \prodasm{\mathcal{U}}$ with $\tilde{R}\left(\tb'\right)=\tb$.
\end{definition}
\begin{definition}\label{scott-defn:alt-equiv-dyanmic-s-to-t-strong}
We say that $\mathcal{U}$ \emph{strongly models} $\mathcal{T}$ (at scale factor $m$), and we write $\mathcal{U} \models^+_R \mathcal{T}$ if for any $\ta$, $\tb \in \prodasm{\mathcal{T}}$ such that $\tilde{\gamma} \in C^{\tau}_{\ta , \tb}$, then for all $\ta', \tb' \in \prodasm{\mathcal{U}}$ such that $\tilde{R}(\ta')=\ta$ and $\tilde{R}\left(\tb'\right)=\tb$, it must be that there exist $\ta'', \tb'', \tilde{\gamma}' \in \prodasm{\mathcal{U}}$, such that $\ta' \rightarrow_{\mathcal{U}} \ta''$, $\tb' \rightarrow_{\mathcal{U}} \tb''$, $\tilde{R}(\ta'')=\ta$, $\tilde{R}\left(\tb''\right)=\tb$, $\tilde{R}(\tilde{\gamma}')=\tilde{\gamma}$, and $\tilde{\gamma}' \in C^{\tau'}_{\ta'', \tb''}$.
\end{definition}

\begin{definition}\label{scott-defn:alt-simulate}
Let $\mathcal{U} \Leftrightarrow_R \mathcal{T}$ and $\mathcal{T} \dashv_R \mathcal{U}$.
\begin{enumerate}
    \item \label{scott-defn:alt-weak-simulate} $\mathcal{U}$ \emph{simulates} $\mathcal{T}$ (at scale factor $m$) if $\mathcal{U} \models^-_R \mathcal{T}$.
    \item \label{scott-defn:alt-strong-simulate} $\mathcal{U}$ \emph{strongly simulates} $\mathcal{T}$ (at scale factor $m$) if $\mathcal{U} \models_R^+ \mathcal{T}$.
\end{enumerate}
\end{definition}
For simulation, we require that when a simulated supertile $\ta$ may grow, via one combination attachment, into a second supertile $\tb$, then any simulator supertile that maps to $\ta$ must also grow into a simulator supertile that maps to $\tb$. The converse should also be true.  For strong simulation, in addition to requiring that all supertiles mapping to $\ta$ must be capable of growing into a supertile mapping to $\tb$ when $\ta$ can grow into $\tb$ in the simulated system, we further require that this growth can take place by the attachment of $\emph{any}$ supertile mapping to $\tilde{\gamma}$, where $\tilde{\gamma}$ is the supertile that attaches to $\ta$ to get $\tb$.

Note that, by these definitions, strong simulation implies simulation.  That is, if system $\calT_1$ strongly simulates $\calT_2$ then it also simulates $\calT_2$.

\subsection{Intrinsic universality}
\newcommand{\REPL}{\mathsf{REPR}}
\newcommand{\frakC}{\mathfrak{C}}

Let $\REPL$ denote the set of all $m$-block (or macrotile) representation functions.
Let $\frakC$ be a class of tile assembly systems, and let $U$ be a tile set. %
We say $U$ is \emph{intrinsically universal} for $\frakC$ if there are computable functions $\mathcal{R}:\frakC \to \REPL$ and $\mathcal{S}:\frakC \to \left(A^U_{< \infty} \rightarrow \mathbb{N} \cup \{\infty\}\right)$, and a $\tau'\in\Z^+$ such that, for each $\mathcal{T} = (T,S,\tau) \in \frakC$, there is a constant $m\in\N$ such that, letting $R = \mathcal{R}(\mathcal{T})$, $S_\mathcal{T}=\mathcal{S}(\mathcal{T})$, and $\mathcal{U}_\mathcal{T} = (U,S_\mathcal{T},\tau')$, $\mathcal{U}_\mathcal{T}$ simulates $\mathcal{T}$ at scale $m$ and using macrotile representation function $R$.
That is, $\mathcal{R}(\mathcal{T})$ gives a representation function $R$ that interprets macrotiles (or $m$-blocks) of $\mathcal{U}_\mathcal{T}$ as assemblies of $\mathcal{T}$, and $\mathcal{S}(\mathcal{T})$ gives the initial state used to create the necessary macrotiles from $U$ to represent $\mathcal{T}$ subject to the constraint that no macrotile in $S_{\calT}$ can be larger than a single $m \times m$ square.

\section{Uniform Mappings}\label{sec:uniform-mappings}

In this section, we define \emph{uniform mapping} and \emph{almost linear} uniform mapping, which will provide the basis for our results related to strong simulation.  We then prove a set of facts about pairs of temperatures and these mappings, most notably that it is ``easy'' to find a uniform mapping between temperatures if one exists.

\ifabstract
\later{
\section{Proofs from Section~\ref{sec:uniform-mappings}: Uniform Mappings}
}%
\fi

\begin{definition}
Let $E = \{n | n \in \mathbb{N} \textmd{ and } n \leq Q\}$ and $F = \{n | n \in \mathbb{N} \textmd{ and } n \leq R\}$ for some $Q, R \in \mathbb{Z}^+$ with $Q \leq R$.  Let $S$ be a multiset consisting of members from $E$.  Then we say that there is a \emph{uniform mapping} $M$ from $E$ to $F$ if there exists a function $M:E \rightarrow F$ such that $\sum\limits_{x \in S} M(x) \geq R$ if and only if $\sum\limits_{x \in S} x \geq Q$.
\end{definition}

We say that there is a uniform mapping from $\tau$ to $\tau'$ provided that there exists a uniform mapping from $\{1, 2, ..., \tau\}$ to $\{1, 2, ..., \tau'\}$.

\begin{definition}
Let $E = \{n | n \in \mathbb{N} \textmd{ and } n \leq Q\}$ and $F = \{n | n \in \mathbb{N} \textmd{ and } n \leq R\}$ for some $Q, R \in \mathbb{Z}^+$ with $Q \leq R$, and let $M:E \rightarrow F$ be a uniform mapping from $E$ to $F$.  We say that $M$ is \emph{almost linear} if there exists a $c \in \mathbb{N}$ such that for all $e \in (E - \{Q\})$, $M(e) = ce$, and $M(Q) = R$.
\end{definition}

If a uniform mapping is almost linear, that means that other than for the greatest value in the domain of the mapping, the mapping of a number $x$ is simply $x$ times some constant $c$, where $c$ is constant for the mapping.

\both{
\begin{lemma}\label{lem:almost-linear}
There exists a uniform mapping from $E = \{1, ..., \tau\}$ to $F=\{1, ..., \tau'\}$ if and only if there exists an almost linear uniform mapping from $E$ to $F$.
\end{lemma}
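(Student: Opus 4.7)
The plan is to handle the two directions of the biconditional separately, with $(\Leftarrow)$ being immediate since an almost linear uniform mapping is a uniform mapping by definition. The substantive direction is $(\Rightarrow)$, and my strategy is to use a given uniform mapping $M$ only at the single input $1$ in order to extract an integer $c$ that will serve as the linear slope of the almost linear mapping I construct.

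First I would establish that any uniform mapping satisfies $M(\tau)=\tau'$: the singleton multiset $S=\{\tau\}$ has sum $\tau\geq\tau$, so $M(\tau)\geq\tau'$, and since $M$ maps into $F$ we have $M(\tau)\leq\tau'$. Next, I would set $c=M(1)$ and exploit multisets consisting of $k$ copies of $1$. Taking $k=\tau-1$ gives sum $\tau-1<\tau$, so uniformity forces $(\tau-1)c<\tau'$; taking $k=\tau$ gives sum $\tau\geq\tau$, so uniformity forces $\tau c\geq\tau'$. Together these give the sandwich $c(\tau-1)<\tau'\leq c\tau$, and since $c=M(1)\in F$ we also have $c\geq 1$.

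With this $c$ in hand I would define $M':E\to F$ by $M'(e)=ce$ for $e<\tau$ and $M'(\tau)=\tau'$, then verify the two things that need checking. The range condition for $e<\tau$ follows from $1\leq c\leq ce\leq c(\tau-1)<\tau'$. To verify uniformity, I would write an arbitrary multiset $S$ as $k$ copies of $\tau$ together with a sub-multiset of elements strictly less than $\tau$ whose sum is $m$, so the original sum of $S$ is $k\tau+m$ and its image sum is $k\tau'+cm$. If $k\geq 1$ both thresholds are met trivially; if $k=0$ the biconditional reduces to $cm\geq\tau'$ iff $m\geq\tau$, which follows directly from the sandwich because $m$ is a nonnegative integer, the smallest such $m$ with $m\geq\tau$ being $\tau$ and the largest with $m<\tau$ being $\tau-1$.

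The main subtlety I anticipate is resisting the temptation to reason about $M$'s values at inputs other than $1$ and $\tau$. The key observation that keeps the argument short is that multisets built from the single element $1$ already pin down a slope $c$ which automatically satisfies every multi-element constraint, so the multi-element behavior of the original $M$ is never actually inspected, and the almost linear $M'$ can be constructed essentially from the value $M(1)$ alone.
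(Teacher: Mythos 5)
Your proof is correct and follows essentially the same approach as the paper: both extract the slope $c = M(1)$ from the given uniform mapping, derive the sandwich $c(\tau-1) < \tau' \leq c\tau$, and use it to define the almost linear map. You are somewhat more explicit than the paper in deriving the sandwich from multisets of $1$'s (the paper asserts the two inequalities without derivation) and in verifying the threshold condition directly for the patched map $M'$ via the case split on copies of $\tau$, whereas the paper verifies it for the unpatched linear map $M_1(x) = M'(1)x$ and leaves the transfer to the almost linear map implicit.
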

}%

\ifabstract
\later{

\begin{proof}
For the first direction, suppose that there exists a uniform mapping $M'$ from $E$ to $F$.  This implies that $M'$ is such that $\tau' \leq \tau M'(1)$ and $\tau' > (\tau-1)M'(1)$.  We can now construct a mapping $M_1:E \rightarrow \mathbb{N}$ in the following manner.  For all $x \in E$, set $M_1(x) = M'(1)x$.  Let $S$ be an arbitrary multiset consisting of members from $E$.  We claim that $M_1$ is such that $\sum\limits_{x \in S} M_1(x) \geq \tau'$ if and only if $\sum\limits_{x \in S} x \geq \tau$.

Suppose $S$ is such that $\sum\limits_{x \in S} x \geq \tau$.  To see that $\sum\limits_{x \in S} M_1(x) \geq \tau'$, observe that
\begin{eqnarray*}
\sum\limits_{x \in S} M_1(x) &=& \sum\limits_{x \in S} xM'(1) \\
                             &=& M'(1) \sum\limits_{x \in S} x \\
                             &\geq& M'(1) \tau \\
                             &\geq& \tau'. \\
\end{eqnarray*}

Now, assume $S$ is such that $\sum\limits_{x \in S} M_1(x) \geq \tau'$.  Notice that this implies
\begin{eqnarray*}
\sum\limits_{x \in S} x &=& \sum\limits_{x \in S} \frac{M_1(x)}{M'(1)} \\
                        &=& \frac{1}{M'(1)}\sum\limits_{x \in S} M_1(x) \\
                        &\geq& \frac{1}{M'(1)} \tau' \\
                        &>& (\tau - 1) \\
                        &\geq& \tau. \\
\end{eqnarray*}
The second to last inequality comes from a simple rearrangement of the above observation $\tau' > (\tau-1)M'(1)$.  %

We can now construct an almost linear uniform mapping $M: E \rightarrow F$ defined by
\begin{displaymath}
   M(x) = \left\{
     \begin{array}{lr}
       M'(1)x & : x \neq \tau\\
       \tau'  & : x = \tau.
     \end{array}
   \right.
\end{displaymath}

This map is clearly almost linear, since the $c$ of the definition of almost linear is $M'(1)$ here, and $P = \tau$ and $Q = \tau'$.  Additionally, the range of $M$ is $F$ because $\tau \le \tau'$ and, other than $M(\tau) = \tau'$, the maximum value of $M$ occurs at $M(\tau-1) = M'(1)(\tau-1) < \tau'$.

The other direction of the proof follows directly from the fact that an almost linear uniform mapping is a uniform mapping.
\qed
\end{proof}

}%
\fi

\both{
\begin{corollary}\label{cor:mapping-constant}
For $\tau, \tau' \in \mathbb{Z}^+$ where $\tau \le \tau'$, a uniform mapping from $\tau$ to $\tau'$ exists if and only if there exists a constant $c \in \mathbb{N}$ such that $c(\tau-1) < \tau' \le c\tau$.
\end{corollary}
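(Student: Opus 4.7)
The plan is to deduce the corollary directly from Lemma~\ref{lem:almost-linear}, so that the only real work is translating the existence of an almost linear uniform mapping into the two arithmetic inequalities $c(\tau-1) < \tau' \le c\tau$.

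For the forward direction, suppose a uniform mapping from $\tau$ to $\tau'$ exists. By Lemma~\ref{lem:almost-linear}, there is an almost linear uniform mapping $M \colon \{1,\dots,\tau\} \to \{1,\dots,\tau'\}$ with constant $c \in \mathbb{N}$, so that $M(x) = cx$ for $x \in \{1,\dots,\tau-1\}$ and $M(\tau) = \tau'$. I extract the two required inequalities by applying the uniform mapping condition to two carefully chosen multisets of $1$'s. First, take $S$ to be the multiset consisting of $\tau$ copies of $1$; then $\sum_{x \in S} x = \tau$, so the uniform mapping condition forces $\sum_{x \in S} M(x) = c\tau \ge \tau'$. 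Next, take $S'$ to be $\tau-1$ copies of $1$; then $\sum_{x \in S'} x = \tau-1 < \tau$, and the contrapositive of the uniform mapping condition forces $\sum_{x \in S'} M(x) = c(\tau-1) < \tau'$. Combining these gives exactly $c(\tau-1) < \tau' \le c\tau$.

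For the backward direction, suppose $c \in \mathbb{N}$ satisfies $c(\tau-1) < \tau' \le c\tau$. I define $M$ to be almost linear with this constant, that is, $M(x) = cx$ for $x \in \{1,\dots,\tau-1\}$ and $M(\tau) = \tau'$. The inequality $c(\tau-1) < \tau'$ guarantees $M(x) \le \tau'$ for every $x \in E$, so $M$ indeed maps into $F$. To verify the uniform mapping condition, let $S$ be any multiset drawn from $\{1,\dots,\tau\}$ and split into cases. If $\tau \in S$, then $\sum x \ge \tau$ and $\sum M(x) \ge M(\tau) = \tau'$, so both sides of the ``iff'' hold. If $\tau \notin S$, then $M$ acts as multiplication by $c$ on $S$, giving $\sum M(x) = c \sum x$; the condition $\sum x \ge \tau$ then yields $\sum M(x) \ge c\tau \ge \tau'$, while $\sum x \le \tau - 1$ yields $\sum M(x) \le c(\tau-1) < \tau'$. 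Hence $\sum M(x) \ge \tau'$ iff $\sum x \ge \tau$, so $M$ is a uniform mapping.

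I do not expect a real obstacle here: the entire argument is a bookkeeping exercise once Lemma~\ref{lem:almost-linear} lets us restrict attention to almost linear mappings, and once one notices that the multisets consisting of all $1$'s (of sizes $\tau$ and $\tau-1$) are the tight witnesses that force, respectively, the upper and lower bounds on $\tau'$ in terms of $c$.
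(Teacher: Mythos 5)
Your proposal is correct and takes essentially the same route as the paper: invoke Lemma~\ref{lem:almost-linear} to restrict attention to almost linear mappings, then read off the constant $c$ and the inequalities $c(\tau-1) < \tau' \le c\tau$ from the uniform mapping condition applied to multisets of all $1$'s. You spell out both directions in more detail than the paper (which simply defers to ``the proof of that lemma''); the only nitpick is that the identity $M(1)=c$ in your forward direction silently assumes $\tau\ge 2$, since for $\tau=1$ the almost linear definition forces $M(1)=\tau'$ rather than $c$, but that degenerate case is trivial and irrelevant to the paper's applications.
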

}%

\ifabstract
\later{

\begin{proof}
The proof of Corollary~\ref{cor:mapping-constant} is a direct result of Lemma~\ref{lem:almost-linear} and its proof. First, by that lemma we know that a uniform mapping from $\tau$ to $\tau'$ exists if and only if an almost linear uniform mapping exists from $\tau$ to $\tau'$. Second, by the proof of that lemma we see that an almost linear uniform mapping exists if there is some value $M'(1)$ such that $\tau' \leq \tau M'(1)$ and $\tau' > (\tau-1)M'(1)$, and since $M'(1) \in \mathbb{N}$, we can simply define $c = M'(1)$ and thus see that $c(\tau-1) < \tau' \le c\tau$.
\qed
\end{proof}

}%
\fi

\both{
\begin{corollary}\label{cor:no-uniform-mapping}
Let $\tau \in \mathbb{Z}^+$ and suppose that $\tau < \tau' < 2\tau - 1$ for some $\tau' \in \mathbb{Z}^+$.  Then there does not exist a uniform mapping from $\{1, 2, ... \tau\}$ to $\{1, 2, ..., \tau'\}$.
\end{corollary}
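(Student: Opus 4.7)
The plan is to reduce the corollary immediately to Corollary~\ref{cor:mapping-constant}, which characterizes the existence of a uniform mapping from $\tau$ to $\tau'$ as the existence of some $c \in \mathbb{N}$ satisfying $c(\tau-1) < \tau' \le c\tau$. With this in hand, the corollary becomes a finite case analysis over $c$, showing that no such $c$ can sit inside the strict band $\tau < \tau' < 2\tau - 1$.

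First I would note that the hypothesis $\tau < \tau' < 2\tau - 1$ implies $\tau \ge 3$, since $\tau < 2\tau - 1$ forces $\tau > 1$ and the existence of an integer strictly between $\tau$ and $2\tau-1$ requires $\tau + 1 \le 2\tau - 2$, i.e., $\tau \ge 3$. This rules out degenerate small cases so that the arithmetic below is non-vacuous.

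Next I would split on $c$. For $c = 1$, the condition $c(\tau-1) < \tau' \le c\tau$ becomes $\tau - 1 < \tau' \le \tau$, forcing $\tau' = \tau$ and contradicting $\tau' > \tau$. For $c = 2$, the condition becomes $2\tau - 2 < \tau' \le 2\tau$, forcing $\tau' \ge 2\tau - 1$ and contradicting $\tau' < 2\tau - 1$. For $c \ge 3$, I would combine the required lower bound $c(\tau - 1) < \tau'$ with the hypothesized upper bound $\tau' < 2\tau - 1$ to obtain
\[
c(\tau - 1) < 2\tau - 1,
\]
which rearranges to $(c-2)\tau < c - 1$. Since $c \ge 3$ and $\tau \ge 3$, the left side satisfies $(c-2)\tau \ge (c-2)\cdot 3 = 3c - 6$, and $3c - 6 < c - 1$ rearranges to $c < 5/2$, contradicting $c \ge 3$.

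Since no natural number $c$ can witness the condition of Corollary~\ref{cor:mapping-constant}, no uniform mapping from $\{1,\dots,\tau\}$ to $\{1,\dots,\tau'\}$ exists. There is no real obstacle here; the only thing to be mildly careful about is not to forget the $c = 2$ case, which is precisely the boundary case that explains why the corollary states $\tau' < 2\tau - 1$ rather than $\tau' \le 2\tau - 1$: at $\tau' = 2\tau - 1$ the constant $c = 2$ does witness a uniform mapping.
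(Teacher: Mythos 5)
Your proof is correct, but it takes a genuinely different route from the paper's. The paper argues directly from the definition of a uniform mapping: if $M$ were uniform, then $M(1) \ge 2$ (otherwise the multiset of $\tau$ copies of $1$ fails), yet the multiset of $\tau-1$ copies of $1$ then maps to a sum of at least $2(\tau-1) = 2\tau - 2 \ge \tau'$ while the original sum $\tau-1 < \tau$, contradicting the ``iff'' in the definition. You instead factor the whole problem through Corollary~\ref{cor:mapping-constant} and rule out each candidate $c$; this is a valid shortcut (Corollary~\ref{cor:mapping-constant} is stated and proved before this corollary), and it is the same strategy the paper later uses for Corollary~\ref{cor:finite-no-mapping}. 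One place your write-up is slightly more elaborate than necessary: for $c\ge 3$ you don't actually need the lower bound $\tau\ge 3$ that you carefully establish, since $c(\tau-1) < \tau' < 2\tau-1$ already gives $(c-2)(\tau-1) < 1$, which is impossible for $c\ge 3$ and $\tau\ge 2$ without any further appeal to $\tau\ge 3$; the paper's direct argument similarly uses only $\tau' \le 2\tau-2$. Your observation that $c=2$ is precisely the boundary case explaining the strict inequality $\tau' < 2\tau - 1$ is a nice extra sanity check not present in the paper.
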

}%

\ifabstract
\later{

\begin{proof}
Suppose for the sake of contradiction that there does exist a uniform mapping from $\{1, 2, ... \tau\}$ to $\{1, 2, ..., \tau'\}$, say $M$.  Since $\tau < \tau'$ and $M$ is uniform, it must be the case that $M(1) \neq 1$ since otherwise $\tau * 1 = \tau = \tau * M(1) < \tau'$.  But, observe that $\sum\limits_{i=1}^{\tau-1} 1 = (\tau-1) < \tau$ but $\sum\limits_{i=1}^{\tau-1} M(1) \geq \sum\limits_{i=1}^{\tau-1}2=2(\tau-1)=2\tau-2\geq\tau'$.  This contradicts the assumption that $M$ is uniform.
\qed
\end{proof}

}%
\fi

\both{
\begin{corollary}\label{cor:finite-no-mapping}
For any $\tau \in \mathbb{Z}^+$, there are a finite number of $\tau' \in \mathbb{Z}^+$ with $\tau' > \tau$ such that a uniform mapping cannot be found from $\tau$ to $\tau'$.
\end{corollary}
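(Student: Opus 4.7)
The plan is to invoke Corollary~\ref{cor:mapping-constant} and reduce the question to a counting argument on integer intervals. By that corollary, a uniform mapping from $\tau$ to $\tau'$ exists if and only if there is some $c \in \mathbb{N}$ such that $\tau'$ lies in the interval $I_c := (c(\tau-1),\, c\tau]$. So the ``bad'' values of $\tau'$ are precisely those positive integers greater than $\tau$ that lie in none of the intervals $I_1, I_2, I_3, \ldots$

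The key observation is that consecutive intervals $I_c$ and $I_{c+1}$ eventually stop leaving integer gaps. Concretely, the gap between the right endpoint of $I_c$ and the left endpoint (exclusive) of $I_{c+1}$ is
\[
(c+1)(\tau-1) - c\tau \;=\; \tau - 1 - c.
\]
Thus for every $c \ge \tau - 1$ this quantity is non-positive, so $I_c$ and $I_{c+1}$ are adjacent or overlapping and together contain every integer in the range they span. Iterating, the union $\bigcup_{c \ge \tau - 1} I_c$ covers every integer strictly greater than $c(\tau-1)$ evaluated at $c = \tau - 1$, i.e.\ every integer greater than $(\tau-1)^2$.

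The first step I would carry out is therefore to fix $\tau$ and apply Corollary~\ref{cor:mapping-constant} to rephrase the claim purely as a statement about covering $\mathbb{Z}^+$ by the intervals $I_c$. Second, I would perform the elementary arithmetic above to pin down the threshold: every $\tau' > (\tau-1)^2$ lies in some $I_c$ with $c \ge \tau-1$, hence admits a uniform mapping. Finally, I would conclude that the set of $\tau' > \tau$ for which no uniform mapping exists is contained in the finite set $\{\tau+1, \tau+2, \ldots, (\tau-1)^2\}$, giving the result.

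There is no real obstacle here beyond making sure the interval bookkeeping is tight (in particular, handling the strict vs.\ non-strict endpoints of $I_c$ correctly, and noting the harmless edge case $\tau = 1$ where $(\tau-1)^2 = 0$ so the bound is vacuous). The proof is essentially a one-line consequence of Corollary~\ref{cor:mapping-constant} once the interval-covering picture is written down.
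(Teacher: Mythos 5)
Your proposal is correct and takes essentially the same route as the paper: both reduce the claim to Corollary~\ref{cor:mapping-constant} and then do elementary arithmetic to show every $\tau' > (\tau-1)^2$ admits a uniform mapping. The only cosmetic difference is the dual framing — you fix $c$ and show the intervals $I_c = (c(\tau-1), c\tau]$ of admissible $\tau'$ eventually tile $\mathbb{Z}^+$ without gaps once $c \geq \tau-1$, whereas the paper fixes $\tau'$ and shows the open interval $\bigl(\tfrac{\tau'-1}{\tau},\, \tfrac{\tau'}{\tau-1}\bigr)$ of admissible $c$ has length exceeding $1$ precisely when $\tau' > \tau^2 - 2\tau + 1$; both computations land on the identical threshold $(\tau-1)^2$.
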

}%

\ifabstract
\later{

\begin{proof}
By Corollary~\ref{cor:mapping-constant}, for any given $\tau$ and $\tau'$ a uniform mapping exists if and only if there exists a $c \in \mathbb{N}$ that satisfies $\frac{\tau' - 1}{\tau} < c < \frac{\tau'}{\tau - 1}$. Such a $c$ exists whenever $\frac{\tau'}{\tau - 1} > \frac{\tau' - 1}{\tau}$ and $\frac{\tau'}{\tau - 1} - \frac{\tau' - 1}{\tau} > 1$, which is true when $\tau + \tau' > 1$ and $\tau' > \tau^2 - 2\tau + 1$. Therefore, a uniform mapping can be found from any $\tau \ge 1$ to any $\tau' > \tau^2 - 2\tau + 1$.
qed
\end{proof}

}%
\fi

\both{
\begin{theorem}\label{thm:uniform-mapping}
Given $\tau,\tau' \in \mathbb{Z}^+$ with $\tau \le \tau'$, there exists an algorithm which runs in time $O(\log^2 \tau')$ and (1) determines whether or not a uniform mapping from $\tau$ to $\tau'$ exists, and (2) if so, produces that mapping.
\end{theorem}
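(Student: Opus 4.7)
The plan is to reduce the problem, via Corollary~\ref{cor:mapping-constant}, to the question of whether there exists a positive integer $c$ satisfying the two inequalities $c(\tau-1) < \tau' \le c\tau$. If such a $c$ exists, then by Lemma~\ref{lem:almost-linear} we can write down an almost linear uniform mapping explicitly as $M(x) = cx$ for $x < \tau$ and $M(\tau) = \tau'$, so producing the mapping reduces to producing the integer $c$.

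The algorithm I would propose is: compute the candidate $c^* := \lceil \tau' / \tau \rceil$, then check whether $c^*(\tau-1) < \tau'$ holds. If yes, output the almost linear mapping with constant $c^*$; if no, report that no uniform mapping exists. I would justify the correctness of this candidate by the following short argument. The ceiling definition gives $c^* \tau \ge \tau'$ directly, so the right-hand inequality in Corollary~\ref{cor:mapping-constant} is automatic. For the other direction, I would show that $c^*$ is the smallest positive integer satisfying $c \tau \ge \tau'$: if \emph{any} $c$ satisfies both inequalities, then in particular $c \tau \ge \tau'$, so $c \ge c^*$, and therefore $c^*(\tau-1) \le c(\tau-1) < \tau'$. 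Thus $c^*$ itself satisfies both inequalities whenever any $c$ does, so the single check $c^*(\tau-1) < \tau'$ is sound and complete.

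For the runtime analysis, note that $\tau \le \tau'$, so both inputs have $O(\log \tau')$ bits. Computing $c^* = \lceil \tau'/\tau \rceil$ requires one integer division of $\tau'$ by $\tau$, which by schoolbook long division takes $O(\log^2 \tau')$ time. The verification step consists of one integer subtraction and one integer multiplication, each on $O(\log \tau')$-bit numbers, costing $O(\log^2 \tau')$ in the worst case. Outputting $c^*$ together with the special value $M(\tau) = \tau'$ (the compact description of the almost linear mapping) takes $O(\log \tau')$ time, which is subsumed by the arithmetic cost. The total runtime is therefore $O(\log^2 \tau')$, as claimed.

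The main obstacle, such as it is, is justifying that the single candidate $c^* = \lceil \tau'/\tau \rceil$ suffices: a naive approach might search over many candidate values of $c$, which would not meet the logarithmic time bound. Once the monotonicity observation above is in hand, however, the algorithm and its analysis are straightforward. I would conclude by noting that when the check fails, no uniform mapping exists by the contrapositive of the argument, completing the proof of both parts of the theorem.
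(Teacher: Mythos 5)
Your proposal takes essentially the same route as the paper: reduce via Corollary~\ref{cor:mapping-constant} to finding a constant $c$ with $c(\tau-1) < \tau' \le c\tau$, observe that it suffices to test a single canonical candidate obtained from one integer division, and bound the runtime by two or three $O(\log^2 \tau')$ arithmetic operations. You are correct, and in fact your version is more careful than the paper's. The paper's proof sets the candidate to $\lfloor \tau'/\tau \rfloor$ and tests whether it is less than $\lfloor \tau'/(\tau-1) \rfloor$; but $\lfloor \tau'/\tau \rfloor$ need not satisfy $\tau' \le c\tau$ at all when $\tau \nmid \tau'$, and the floor-versus-strict-inequality bookkeeping in the second comparison is off. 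For example, at $\tau = 3$, $\tau' = 4$ the paper's literal test takes $c = \lfloor 4/3\rfloor = 1 < \lfloor 4/2 \rfloor = 2$ and would report a mapping, even though Corollary~\ref{cor:no-uniform-mapping} shows none exists. Your choice $c^* = \lceil \tau'/\tau \rceil$ is the smallest integer satisfying the right-hand constraint, and your monotonicity argument ($c \ge c^*$ for any valid $c$, so $c^*(\tau-1) \le c(\tau-1) < \tau'$) cleanly justifies testing only $c^*$ against the original inequality $c^*(\tau-1) < \tau'$ rather than a derived floor expression. This gives the same $O(\log^2 \tau')$ bound and fixes the paper's slip.
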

}%

\ifabstract
\later{

\begin{proof}
By Corollary~\ref{cor:mapping-constant}, we know that we must simply determine whether or not there exists some constant $c \in \mathbb{N}$ such that $c(\tau-1) < \tau' \le c\tau$ to determine whether or not a uniform mapping exists.  If we find such a $c$, then a uniform mapping exists and we can define an almost linear uniform mapping using that $c$.  Therefore, we must find $c$ such that $c \ge \lfloor \tau'/\tau \rfloor$ and $c < \lfloor \tau'/(\tau-1) \rfloor$.  This is done by letting $c$ equal the floor of $\tau'$ divided by $\tau$ and determining if that $c$ is less than $\lfloor \tau'/(\tau-1) \rfloor$.  If not, no uniform mapping exists from $\tau$ to $\tau'$.  If so, one does and it is simply:

\begin{displaymath}
   M(x) = \left\{
     \begin{array}{lr}
       cx & : x \neq \tau\\
       \tau'  & : x = \tau.
     \end{array}
   \right.
\end{displaymath}

Finally, since $\tau \le \tau'$, the algorithm to determine $c$, and thus the mapping, requires only two division operations with $\tau$ and $\tau'$ and a comparison of the results, this can be done in time $O(\log^2 \tau')$.
\qed
\end{proof}

}%
\fi

The following corollary will be used later in the proof of Lemma~\ref{lem:StrongImpossible}.

\both{
\begin{corollary}\label{cor:no-uniform-mapping-implic}
Given $\tau,\tau' \in \mathbb{N}$ such that $1 < \tau < \tau'$, if no uniform mapping exists from $\tau$ to $\tau'$, then $(\tau-1)\lceil \frac{\tau'}{\tau} \rceil \ge \tau'$.
\end{corollary}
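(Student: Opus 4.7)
The plan is to invoke Corollary~\ref{cor:mapping-constant} directly and extract the contrapositive. That corollary tells us that a uniform mapping from $\tau$ to $\tau'$ exists if and only if some $c \in \mathbb{N}$ satisfies $c(\tau-1) < \tau' \le c\tau$. So, assuming no uniform mapping exists, for every $c \in \mathbb{N}$ at least one of these two inequalities must fail: either $c(\tau-1) \ge \tau'$, or $c\tau < \tau'$.

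The key move is to plug in a specific value of $c$ that automatically takes care of the second inequality, forcing the first to be the one that fails. The natural candidate is $c^* = \lceil \tau'/\tau \rceil$, because by the definition of the ceiling we have $c^*\tau \ge \tau'$, so the inequality $c^*\tau < \tau'$ cannot hold. Also $c^* \in \mathbb{Z}^+$ since $\tau' > \tau > 1$ forces $c^* \ge 2$, so $c^*$ is a legal value for the quantifier in Corollary~\ref{cor:mapping-constant}.

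With the second alternative ruled out, the first alternative must hold for $c^*$, giving $c^*(\tau-1) \ge \tau'$, i.e.\ $(\tau-1)\lceil \tau'/\tau \rceil \ge \tau'$, which is exactly the stated inequality. I do not anticipate any real obstacle: the only subtlety is to confirm that $c^*$ is a positive integer (handled by $\tau > 1$) and to handle the edge case where $\tau \mid \tau'$, in which case $c^*\tau = \tau'$ still satisfies $c^*\tau \ge \tau'$, so the argument is unaffected.
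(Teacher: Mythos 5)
Your proof is correct and essentially identical to the paper's: both invoke Corollary~\ref{cor:mapping-constant}, both choose $c = \lceil \tau'/\tau \rceil$ so that $c\tau \ge \tau'$ holds automatically, and both conclude $(\tau-1)\lceil \tau'/\tau \rceil \ge \tau'$. The only difference is presentational---the paper phrases it as a proof by contradiction while you argue directly via the contrapositive.
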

}%

\ifabstract
\later{

\begin{proof}
To prove Corollary~\ref{cor:no-uniform-mapping-implic}, we assume the opposite and prove by contradiction.  Therefore, assume that, given $\tau,\tau' \in \mathbb{N}$ such that $1 < \tau < \tau'$ and no uniform mapping exists from $\tau$ to $\tau'$, that $(\tau-1)\lceil \frac{\tau'}{\tau} \rceil < \tau'$.  We first note that $\tau' = \tau(\frac{\tau'}{\tau}) \le \tau \lceil \frac{\tau'}{\tau} \rceil$.  We thus have the inequality  $(\tau-1)\lceil \frac{\tau'}{\tau} \rceil < \tau' \le \tau \lceil \frac{\tau'}{\tau} \rceil$.  By Corollary~\ref{cor:mapping-constant}, we know that if there exists a constant $c \in \mathbb{N}$ such that $c(\tau-1) < \tau' \le c\tau$, then there exists a uniform mapping from $\tau$ to $\tau'$.  By setting $c = \lceil \frac{\tau'}{\tau} \rceil$, we see that there must be a uniform mapping between $\tau$ and $\tau'$, which is a contradiction, and thus Corollary~\ref{cor:no-uniform-mapping-implic} is proven.
\qed
\end{proof}

}%
\fi

\section{Strong Simulation Via Uniform Mappings}\label{sec:strong-sim-with-mappings}

\ifabstract
\later{
\section{Proofs from Section~\ref{sec:strong-sim-with-mappings}: Strong Simulation Via Uniform Mappings}

}%
\fi

In this section, we provide positive results showing that for any pair of temperatures $\tau,\tau' \in \mathbb{Z}^+$ such that $\tau < \tau'$ and there is a uniform mapping from $\tau$ to $\tau'$, then there exists a tile set $U_{\tau'}$ which is intrinsically universal at temperature $\tau'$ for the class of all 2HAM systems at temperature $\tau$.

\both{
\begin{lemma}\label{lem:strong-sim}
Let $\tau,\tau' \in \mathbb{Z}^+$ with $\tau < \tau'$, such that there exists a uniform mapping $M$ from $\tau$ to $\tau'$, and let $\calT = (T,S,\tau)$, be an arbitrary 2HAM system at temperature $\tau$.  Then, there exists $\calT' = (T',S',\tau')$ such that $\calT'$ strongly simulates $\calT$.
\end{lemma}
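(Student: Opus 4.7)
The plan is to construct $\calT'$ by a direct macrotile compilation of $\calT$, using the uniform mapping $M$ to translate each glue strength. First, I would invoke Lemma~\ref{lem:almost-linear} to assume without loss of generality that $M$ is almost linear, so there is a constant $c \in \mathbb{N}$ with $M(s) = cs$ for every $s \in \{1, \dots, \tau-1\}$ and $M(\tau) = \tau'$. For each tile type $t \in T$ I would build an $m \times m$ macrotile $\mu_t$ over $T'$, where $m$ is a fixed constant large enough to route the internal assembly. Each side of $\mu_t$ exposes, at a single designated boundary tile, a glue with the same label as the glue of $t$ on the corresponding side but with the strength transformed by $M$; all other boundary tiles expose null glues. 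Internally, $\mu_t$ has a rigid skeleton whose tiles are connected by strength-$\tau'$ glues, so that once any boundary tile of $\mu_t$ is present, the rest of $\mu_t$ grows deterministically to completion in $\calT'$ regardless of attachment order.

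Next I would define the initial state $S'$ and representation function $R$. The state $S'$ consists of (i) an infinite supply of each tile type in $T'$ needed to complete macrotile skeletons, and (ii) for each seed supertile in $S$, a single macro-seed supertile of $\calT'$ obtained by composing the corresponding macrotiles $\mu_t$; this ensures that no element of $S'$ exceeds a single $m \times m$ square when $S$ is the default state. The representation function $R : B^{T'}_m \dashrightarrow T$ maps an $m$-block to $t$ whenever the block contains enough of the skeleton of $\mu_t$ to be identifiable, and is undefined (empty macrotile) otherwise. Cleanliness is handled by arranging the boundary-exposed glues so that only one position per side extends a glue outward from a macrotile, so growth never produces diagonal fuzz.

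Verifying the simulation: equivalent productions and following ($\calT \dashv_R \calT'$) follow routinely since every attachment in $\calT'$ is either an internal macrotile-completion step (which leaves $\tilde R$ unchanged, giving a $\to^{\le 1}$ step in $\calT$) or a macrotile-to-macrotile combination (which by construction corresponds to an attachment in $\calT$). The central condition is strong modeling: suppose $\tilde\gamma \in C^{\tau}_{\ta,\tb}$ in $\calT$, witnessed by a set $G$ of abutting glues whose strengths sum to at least $\tau$; fix any $\ta', \tb' \in \prodasm{\calT'}$ with $\tilde R(\ta') = \ta$ and $\tilde R(\tb') = \tb$. I would first grow $\ta'$ and $\tb'$ to $\ta'', \tb''$ that complete every macrotile whose boundary glue might participate in the binding. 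Along the proposed interface, the exposed glues in $\calT'$ are exactly the images under $M$ of the glues in $G$, so their strengths sum to at least $M$-sum of the strengths of $G$. Since $G$ sums to at least $\tau$ in $\calT$, the uniform mapping property guarantees the mapped sum is at least $\tau'$, so the resulting combined assembly $\tilde\gamma'$ is $\tau'$-stable and lies in $C^{\tau'}_{\ta'', \tb''}$, with $\tilde R(\tilde\gamma') = \tilde\gamma$.

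The main obstacle I expect is verifying the \emph{converse} direction inside strong modeling, namely that no $\tau'$-stable combination of macrotiles in $\calT'$ occurs at a position or along an interface that does \emph{not} correspond to a valid $\tau$-stable combination in $\calT$. Controlling this requires the boundary glue layout to be rigid enough that any combination of two macro-supertiles aligns on the $m$-grid, and that the only subsets of boundary glues that sum to at least $\tau'$ are exactly the images under $M$ of subsets summing to at least $\tau$ in $\calT$; this latter property is the ``only if'' direction of the uniform mapping definition and is precisely why the hypothesis that $M$ is uniform (not merely strength-preserving on singletons) is essential. Once this geometric alignment is established, the rest of the simulation conditions follow from the corresponding properties of $\calT$.
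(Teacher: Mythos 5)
Your construction takes a genuinely different route from the paper's, and it is substantially more complicated than necessary. You compile each tile of $T$ into an $m \times m$ macrotile with an internal skeleton of strength-$\tau'$ glues, and then must worry about partial macrotiles as producible intermediates, about placing the designated boundary glue at a consistent offset on every side of every macrotile (otherwise two adjacent macrotiles' glues will not abut), about what the representation function should do to a half-built macrotile, and — as you note yourself — about ruling out spurious $\tau'$-stable combinations along interfaces that have no counterpart in $\calT$. That last concern is in fact the $\calT \dashv_R \calT'$ condition (Definition~\ref{scott-defn:alt-equiv-dynamic-t-to-s}), not part of strong modeling, and you sketch only a plan of attack for it rather than an argument.

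The paper's proof sidesteps every one of these issues by observing that scale factor~$1$ suffices. It leaves the tile shapes and glue labels unchanged and simply replaces each glue of strength $s$ with a glue of strength $M(s)$; the representation function is then the obvious bijection on tile types, and $S'$ is $S$ with tiles swapped for their copies. With scale factor~$1$ there are no partial macrotiles and no geometric-alignment questions: an assembly $\alpha'$ over $T'$ has literally the same shape, glue positions, and binding graph as the corresponding $\alpha$ over $T$. Hence every cut of $\alpha'$ breaks a multiset $G'$ of glues whose strengths are $\{M(\operatorname{str}(g)) : g \in G\}$ for the multiset $G$ broken by the same cut in $\alpha$, and the defining biconditional of a uniform mapping gives $\sum_{g'\in G'}\operatorname{str}(g') \ge \tau' \iff \sum_{g\in G}\operatorname{str}(g) \ge \tau$ immediately, so $\alpha'$ is $\tau'$-stable iff $\alpha$ is $\tau$-stable, and two supertiles can combine in $\calT'$ iff their images can combine in $\calT$. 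Both directions of the simulation conditions then fall out in one stroke, with no further case analysis. Your approach is not unsalvageable, but to make it a proof you would need to pin down boundary-glue placement, specify $R$ on partial blocks, and actually establish the no-spurious-combinations direction; switching to scale factor~$1$ makes all of that moot and is the cleaner path.
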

}%

To prove Lemma~\ref{lem:strong-sim}, we show how to create $T'$ from $T$ by using the mapping $M$.  $T'$ is essentially identical to $T$, but for each glue $g$ on a tile in $T$, if its strength is given by the function $\texttt{str}(g)$, then the strength of that glue in $T'$ is equal to $M(\texttt{str}(g))$.  Due to the properties of a uniform mapping, we show that if and only if a multiset of glues on a pair of supertiles over $T$ allow those supertiles to bind in $\calT$, the mapped glues over supertiles in $T'$ will allow the equivalent supertiles in $\calT'$ to bind.  Thus, $\calT'$ will correctly strongly simulate $\calT$.

\ifabstract
\later{
\begin{proof}
To prove Lemma~\ref{lem:strong-sim}, we create the tile set $T'$ as follows.  For each $t \in T$, create a tile $t'$ which is identical in all properties except for the strengths of the glues.  For each glue strength, if the strength of that glue on $t$ was $s$, then make its strength on $t'$ equal to $M(s)$.  To create $S'$, we simply create and add a copy, $\alpha'$, of each $\alpha \in S$ by swapping the tiles of $T$ for the corresponding tiles in $T'$. $\alpha'$ is guaranteed to be $\tau'$-stable if and only if $\alpha$ is $\tau$-stable because of the following. Every cut across a supertile $\alpha'$ will break some multiset $G'$ of glues in $T'$. Similarly, the same cut across $\alpha$ will break some multiset $G$ of glues in $T$.  Let $n_1 = \Sigma_{g\in G} \texttt{str}(g)$ (where $\texttt{str}(x)$ is the function which returns the strength of glue $x$).  By the definition of our tile set $T'$ and the assembly $\alpha'$ based on $\alpha$, we know that $\Sigma_{g' \in G'} \texttt{str}(g') = \Sigma_{g \in G} M(\texttt{str}(g))$, and we set $n_2$ equal to this summation. By the definition of $M$ we know that $n_1 \ge \tau$ if and only if $n_2 \ge \tau'$, and thus $\alpha'$ is $\tau'$-stable if and only if $\alpha$ is $\tau$-stable.

To prove that $\calT' = (T',S',\tau')$ strongly simulates $\calT$, we simply let the scale factor of the simulation be $1$ and the $1$-block representation function $R: B^{T'}_1 \rightarrow T$ map each tile of $T'$ directly to the single, unique tile which it is a (glue-strength-modified) copy of. (Note that $R$ is a bijection here.)  Now, we start with the base case of singleton tiles of $T'$ and the assemblies in $S'$ (i.e. $T' \cup S'$), and compare their behavior to the singleton tiles of $T$ and assemblies of $S$ (i.e. $T \cup S$) to which they map.  Let $\ta'$ be any element of $(T' \cup S')$, and $\ta$ be the corresponding element of $T \cup S$ (i.e. $\tilde{R}(\ta')$).  By the definition of the tiles in $T'$, all glues exposed on the perimeter of $\ta'$ are identical in type and location to those on $\ta$, and the strength $s'$ of each is equal to $M(s)$ of the strength $s$ of the corresponding glue on $\ta$.  Let $\tb \in (T \cup S)$ be such that $\ta$ can $\tau$-stably combine with $\tb$ to produce $\tg \in \prodasm{\calT}$, and let $G$ be the (multi)set of glues which bind between $\ta$ and $\tb$.  Clearly $\Sigma_{g \in G} \texttt{str}(g) \ge \tau$, meaning the summation of the strengths of the binding glues is $\ge \tau$ since $\tg$ is $\tau$-stable.  By letting $\tb'$ be the supertile such that $\tilde{R}(\tb') = \tb$, we can verify that $\ta'$ can $\tau'$-stably combine with $\tb'$ to form $\tg'$, where $\tilde{R}(\tg') = \tg$, since by the uniform mapping $M$ from $\tau$ to $\tau'$ and our assignment of glue strength values in $T'$ we know that $\ta'$ and $\tb'$ would have the same (multi)set $G$ of binding glues (modulo the modified strengths) and that $\Sigma_{g \in G} M(\texttt{str}(g)) \ge \tau'$ by definition of the uniform mapping $M$.  Furthermore, for any (multi)set of glues $G_2$ over the glues in $T$ such that $\Sigma_{g\in G_2} \texttt{str}(g) < \tau$, for the corresponding set $G_2'$ over the glues in $T'$, $\Sigma_{g\in G_2'} M(\texttt{str}(g)) < \tau'$, again by the definition of the uniform mapping $M$.  Thus it is shown that a pair of supertiles $\ta',\tb' \in (T' \cup S')$ will be able to $\tau'$-stably combine if and only if $\ta,\tb \in (T \cup S)$, where $\tilde{R}(\ta') = \ta$ and $\tilde{R}(\tb') = \tb$, can $\tau$-stably combine, completing the base case.  This argument can then be applied recursively to all producible supertiles in both systems, so that for every set of producible supertiles $\ta,\tb,\tg \in \prodasm{\calT}$, $\ta$ can $\tau$-stably combine with $\tb$ to form $\tg$ if and only if for $\ta',\tb',\tg' \in \prodasm{\calT'}$ where $\tilde{R}(\ta') = \ta$, $\tilde{R}(\tb') = \tb$, and $\tilde{R}(\tg') = \tg$, $\ta'$ can $\tau'$-stably combine with $\tb'$ to form $\tg'$.  Therefore, $\calT$ and $\calT'$ have equivalent productions, $\calT$ follows $\calT'$, and $\calT'$ strongly models $\calT$, and thus $\calT'$ strongly simulates $\calT$.
\qed
\end{proof}
}%
\fi

Lemma~\ref{lem:strong-sim} shows that as long as there is a uniform mapping between two temperatures, for each system at the lower temperature there exists a system at the higher temperature which can strongly simulate it.  Furthermore, Corollaries~\ref{cor:no-uniform-mapping} and \ref{cor:finite-no-mapping} show us that there are only a very few temperatures greater than a given $\tau$ for which a uniform mapping does not exist.  Theorem~\ref{thm:uniform-mapping} tells us that we can efficiently find a uniform mapping $M$ if one exists, and by the proof of Lemma~\ref{lem:strong-sim} we can also see that the generation of the simulating system merely requires $M$ and time linear in the size of the system to be simulated.  We now show that such a strongly simulating system can be created for a tile set which is intrinsically universal for systems at $\tau$, resulting in a tile set which is IU for systems at temperature $\tau$ while strongly simulating them at $\tau'$.

\both{
\begin{theorem}\label{thm:IU-down}
Let $\tau,\tau' \in \mathbb{Z}^+$ with $1 < \tau < \tau'$, such that there exists a uniform mapping $M$ from $\tau$ to $\tau'$.  Then there exists a tile set $U_{\tau'}$ which is intrinsically universal for the class of all 2HAM systems at temperature $\tau$, such that the simulating systems using $U_{\tau'}$ are at temperature $\tau'$.
\end{theorem}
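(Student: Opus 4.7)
The plan is to reduce Theorem~\ref{thm:IU-down} to the composition of two ingredients: (i) the known intrinsic universality result of \cite{2HAMIU}, which asserts that for every temperature $\tau \ge 2$ there is a tile set $U_\tau$ that is intrinsically universal for the class of 2HAM systems at temperature $\tau$, and (ii) the glue-strength rewriting from the proof of Lemma~\ref{lem:strong-sim}, which takes any tile set at temperature $\tau$ and produces a tile set at temperature $\tau'$ that strongly simulates it (at scale factor $1$, via a bijective $1$-block representation function).

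First, I would invoke ingredient (i) to fix a tile set $U_\tau$, computable functions $\mathcal{R}_\tau$ and $\mathcal{S}_\tau$, and (per system $\calT \in \frakC_\tau$) a scale $m_\calT$ and representation $R_\calT = \mathcal{R}_\tau(\calT)$ such that $\calU_{\tau,\calT} = (U_\tau, \mathcal{S}_\tau(\calT), \tau)$ simulates $\calT$ at scale $m_\calT$. Second, I would apply the construction in the proof of Lemma~\ref{lem:strong-sim} to $U_\tau$ using the uniform mapping $M$ from $\tau$ to $\tau'$ (obtained from Theorem~\ref{thm:uniform-mapping}): replace every glue strength $s$ on every tile of $U_\tau$ by $M(s)$. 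Call the resulting tile set $U_{\tau'}$. Note crucially that $U_{\tau'}$ depends only on $U_\tau$ and $M$, and not on any particular $\calT$, so it is a single, fixed tile set, as the definition of intrinsic universality requires.

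Next I would define the two computable functions witnessing intrinsic universality of $U_{\tau'}$ for the class of 2HAM systems at temperature $\tau$. Given $\calT \in \frakC_\tau$, let $\mathcal{S}_\tau(\calT)$ be the initial state from the prior result, and rewrite every supertile in it by swapping each tile in $U_\tau$ for its glue-strength-modified copy in $U_{\tau'}$; call the resulting state $\mathcal{S}'(\calT)$. Let $\mathcal{R}'(\calT) = R_\calT \circ \rho$, where $\rho : B^{U_{\tau'}}_1 \to B^{U_\tau}_1$ is the block-representation induced by the bijection between $U_{\tau'}$ and $U_\tau$ (as described in the proof of Lemma~\ref{lem:strong-sim}). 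The scale factor remains $m_\calT$. The claim is then that $\calU'_\calT = (U_{\tau'}, \mathcal{S}'(\calT), \tau')$ simulates $\calT$ at scale $m_\calT$ under the representation function $\mathcal{R}'(\calT)$.

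To verify the claim, I would argue that simulation composes. By Lemma~\ref{lem:strong-sim} applied to $\calU_{\tau,\calT}$, the system $\calU'_\calT$ strongly simulates $\calU_{\tau,\calT}$ at scale $1$, and hence also simulates it at scale $1$. By hypothesis $\calU_{\tau,\calT}$ simulates $\calT$ at scale $m_\calT$. Composition of the representation functions and scale factors then yields a simulation of $\calT$ at scale $m_\calT$. This composition is straightforward because the inner simulation is at scale $1$ and uses a bijective representation, so equivalent productions, the follows relation, and the weakly-models relation all transport directly from $\calU_{\tau,\calT}$ to $\calU'_\calT$ by rewriting tiles. The main obstacle I anticipate is checking that the clean-map (no-diagonal-fuzz) requirement and the initial-state size constraint (no macrotile in the initial state exceeds an $m_\calT \times m_\calT$ square) are preserved by the glue-strength rewrite; both follow because the rewrite is a geometry-preserving bijection on tiles, leaving shapes, positions, and $m_\calT$-block boundaries untouched. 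With these routine verifications, $U_{\tau'}$, $\mathcal{R}'$, $\mathcal{S}'$, and $\tau'$ witness intrinsic universality at temperature $\tau'$ for the class of 2HAM systems at temperature $\tau$.
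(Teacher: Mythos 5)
Your proposal is correct and follows essentially the same route as the paper's own proof: invoke the intrinsic universality result of \cite{2HAMIU} to obtain $U_\tau$, apply the glue-strength rewriting of Lemma~\ref{lem:strong-sim} to obtain a single fixed tile set $U_{\tau'}$, and compose the scale-$1$ bijective simulation of $U_\tau$ by $U_{\tau'}$ with the existing scale-$m$ simulation of $\calT$ by $U_\tau$. The additional checks you flag (computability of $\mathcal{R}'$ and $\mathcal{S}'$, preservation of the clean-map condition and the initial-state size bound under the geometry-preserving bijection) are correct and worth noting, though the paper leaves them implicit.
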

}%

\ifabstract
\later{
\begin{proof}
By Theorem 3 of \cite{2HAMIU}, for each $\tau \ge 2$ there exists a tile set $U_{\tau}$ such that $U_{\tau}$ is intrinsically universal for the class of all 2HAM systems at temperature $\tau$, such that the simulating systems using $U_{\tau}$ are at temperature $\tau$.  Since there exists a uniform mapping $M$ from $\tau$ to $\tau'$, we simply apply the technique used in the proof of Lemma~\ref{lem:strong-sim} to generate the tile set $U_{\tau'}$ from $U_{\tau}$ and the representation function $R': B^{U_{\tau'}}_1 \rightarrow U_{\tau}$ (note that $R'$ is again a bijection here).  The $U_{\tau'}$ thus generated is intrinsically universal for $\tau$ as follows.  Let $\calT = (T,S,\tau)$ be an arbitrary 2HAM system at temperature $\tau$. Let $\mathcal{U}_{\calT} = (U_{\tau},S_{\calT},\tau)$ be the temperature $\tau$ system which uses $U_{\tau}$ to simulate $\calT$ at scale factor $m$, and let $R: B^{U_{\tau}}_m \dashrightarrow T$ be the representation function mapping blocks of tiles from $U_{\tau}$ to tiles of $T$. We now define the system $\mathcal{U}'_{\calT} = (U_{\tau'}, S'_{\calT}, \tau')$ which uses the previously defined tile set $U_{\tau'}$, and since $U_{\tau'}$ is used to simulate $U_{\tau}$ at scale factor 1, we make the assemblies in $S'_{\calT}$ as exact copies of the assemblies of $S_{\calT}$ but with each tile $t \in U_{\tau}$ replaced by the tile returned by $R'^{-1}(t)$.  We make the scale factor for the simulation of $\calT$ by $\mathcal{U'}_{\calT}$ to be $m$ and the representation function $R'' = R \circ R'$.  Since $\mathcal{U'}_{\calT}$ strongly simulates (and therefore by definition also simulates) $\mathcal{U}_{\calT}$ at scale factor 1 under $R'$, and $\mathcal{U}_{\calT}$ simulates $\calT$ at scale factor $m$ under $R$, then $\mathcal{U'}_{\calT}$ simulates $\calT$ at scale factor $m$ under $R''$.
\qed
\end{proof}
}%
\fi

The proof of Theorem~\ref{thm:IU-down} simply makes use of the result of \cite{2HAMIU} showing that for the class of systems at each temperature $\tau \ge 2$, there exists a tile set which is IU for that class.  That IU tile set simulates at temperature $\tau$, so we use Lemma~\ref{lem:strong-sim} to show that for $\tau' > \tau$ where a uniform mapping exists from $\tau$ to $\tau'$, we can make a strongly simulating tile set at temperature $\tau'$ for the tile set which is IU for $\tau$ systems.

Note that the results of \cite{2HAMIU} provide for a variety of tile sets for each $\tau > 1$ such that each is IU for that $\tau$.  These tile sets provide for a variety of tradeoffs in scale factor, tile set size, and number of seed assemblies. Any such tile set $U_{\tau}$ can be used to create the tile set $U_{\tau'}$ from Theorem~\ref{thm:IU-down} to achieve the same tradeoffs since the simulation of $U_{\tau}$ by $U_{\tau'}$ is at scale factor 1 and there is a bijective mapping of tile types from $U_{\tau'}$ to whichever $U_{\tau}$ is chosen. Furthermore, an IU tile set at temperature $\tau$ can be chosen which is IU in terms of either strong simulation or standard simulation, and by those definitions the result still holds. 

\section{Impossibility of Strong Simulation at Higher Temperatures}\label{sec:impossibility}

Intuitively, it may appear that the class of systems at higher temperatures is more ``powerful'' than the class of systems at lower temperatures.  In this section, we show that this is not strictly the case. Here we present a sketch of the proof by giving an example of a tile set $U$ such that there exists a 2HAM TAS $\mathcal{T} = (T, S, 3)$ such that for any initial configuration $S_{\mathcal{T}}$ over $U$, the 2HAM TAS $\mathcal{U} = (U, S_{\mathcal{T}}, 4)$ does not strongly simulate $\mathcal{T}$. This gives an intuitive idea of the general proof which can be found in Section~\ref{sec:impossibility-proof}.

\ifabstract
\later{
\section{Proofs from Section~\ref{sec:impossibility}: Impossibility of Strong Simulation at Higher Temperatures}\label{sec:impossibility-proof}
In this section we present a more formal version of the proof of Lemma~\ref{lem:StrongImpossible}.\\
} %
\fi

\begin{theorem} \label{lem:StrongImpossible}
Let $\tau, \tau' \in \mathbb{N}$ be such that (1) $2 < \tau < \tau'$ and (2) there does not exist a uniform mapping from $\tau$ to $\tau'$.  For every tile set $U$, there exists a 2HAM TAS $\mathcal{T} = (T, S, \tau)$ such that for any initial configuration $S_{\mathcal{T}}$ over $U$, the 2HAM TAS $\mathcal{U} = (U, S_{\mathcal{T}}, \tau')$ does not strongly simulate $\mathcal{T}$.
\end{theorem}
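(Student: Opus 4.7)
The plan is to prove the statement by contradiction, reducing to a counting argument that invokes Corollary~\ref{cor:no-uniform-mapping-implic}. Fix the tile set $U$ and suppose, toward contradiction, that for the system $\mathcal{T}$ I construct below, some initial configuration $S_\mathcal{T}$ over $U$ yields a $\mathcal{U} = (U, S_\mathcal{T}, \tau')$ that strongly simulates $\mathcal{T}$.

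\emph{Construction of $\mathcal{T}$.} I would build $(T,S,\tau)$ so that it produces, via a rigid scaffold, a supertile $\alpha$ exposing exactly $N = 2\tau - 2$ strength-$1$ glues at consecutive ``slot'' positions $1,2,\ldots,N$ along one edge. For each $K \subseteq \{1,\ldots,N\}$ with $|K| \in \{\tau-1, \tau\}$, I also produce a supertile $\beta_K$ whose matching edge exposes strength-$1$ glues exactly at the positions of $K$ and mismatched (or missing) glues elsewhere. The scaffolding is arranged so that the \emph{only} $\tau$-stable combinations involving $\alpha$ and any $\beta_K$ occur precisely when $|K| = \tau$ (binding strength $\tau$), while no $\beta_K$ with $|K| = \tau-1$ can combine with $\alpha$ (binding strength $\tau - 1 < \tau$).

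\emph{Boundary analysis in the simulator.} Under the assumption that $\mathcal{U}$ strongly simulates $\mathcal{T}$ at scale $m$ via representation $R$, I would fix a producible macrotile representative $\alpha'$ of $\alpha$ and, for each $K$ with $|K| = \tau$, invoke strong simulation to obtain grown representatives $\alpha''_K, \beta''_K$ of $\alpha, \beta_K$ that $\tau'$-stably combine. Meanwhile, the ``follows'' relation together with equivalent productions prohibits, for any $K$ with $|K| = \tau - 1$, \emph{any} representative pair for $(\alpha, \beta_K)$ from combining in $\mathcal{U}$. Partition the common macrotile boundary into $N$ consecutive $m$-wide strips aligned with $\alpha$'s slots. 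Since $\beta_K$ and $\beta_{K'}$ differ only at slots in $K \mathbin{\triangle} K'$, a canonical family of representatives (extracted via a pigeonhole argument over the finitely many possible macrotile boundary patterns) can be chosen so that each slot $i$ contributes a well-defined nonnegative integer binding strength $f_i$, with the total binding strength equal to $\sum_{i \in K} f_i$.

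\emph{Counting contradiction.} These facts translate into: for every $\tau$-subset $K$, $\sum_{i \in K} f_i \ge \tau'$, and for every $(\tau-1)$-subset $K$, $\sum_{i \in K} f_i \le \tau' - 1$. Sorting $f_1, \ldots, f_N$ in decreasing order as $g_1 \ge g_2 \ge \cdots \ge g_N$, applying the first inequality to the $\tau$ smallest values yields $\tau\, g_{N-\tau+1} \ge g_{N-\tau+1} + \cdots + g_N \ge \tau'$, so $g_{N-\tau+1} \ge \lceil \tau'/\tau \rceil$. Since $N = 2\tau - 2$, we have $N - \tau + 1 = \tau - 1$, hence $g_{\tau-1} \ge \lceil \tau'/\tau \rceil$ and $g_1 + \cdots + g_{\tau-1} \ge (\tau-1)\lceil \tau'/\tau \rceil \ge \tau'$ by Corollary~\ref{cor:no-uniform-mapping-implic}. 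But $\{g_1, \ldots, g_{\tau-1}\}$ indexes a $(\tau-1)$-subset of slots, contradicting the second inequality $\sum \le \tau' - 1 < \tau'$.

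\emph{Main obstacle.} The heart of the argument lies in the boundary analysis: extracting the per-slot strengths $f_i$ as well-defined integers despite the freedom that strong simulation affords (distinct representatives for different $K$, arbitrary growth prior to combination, fuzz, mismatched glues, and so on). The construction of $\mathcal{T}$ must ensure that every $\beta_K$ shares a rigid scaffold that uniquely pins down the macrotile skeleton, that the slot tiles of $\alpha$ are distinguishable so that $R$ assigns recognizable $m$-blocks to each slot, and that only finitely many combinatorial boundary-pattern types are relevant so that a pigeonhole argument over choices of $K$ isolates a single common pattern realizing both the $\tau$-subset lower bound and the $(\tau-1)$-subset upper bound on slot-sum binding strengths.
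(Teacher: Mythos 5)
You have the right high-level skeleton: build a system in which a fixed supertile can bind with exactly-$\tau$ matching contact points but not $\tau-1$, push this through strong simulation, and derive a contradiction from $(\tau-1)\lceil\tau'/\tau\rceil \ge \tau'$ (Corollary~\ref{cor:no-uniform-mapping-implic}). This is the same counting punchline the paper uses. But the step you flag as the ``main obstacle'' --- extracting well-defined per-slot strengths $f_i$ --- is exactly where your proposal breaks, and your proposed fix (pigeonhole over the family of $\beta_K$'s with a fixed small $N = 2\tau-2$) runs the pigeonhole in the wrong direction. Strong simulation does \emph{not} hand you a single canonical pair $(\alpha'', \beta'')$ reusable across all $K$: for each $K$ the grown representatives $\alpha''_K,\beta''_K$ may carry different fuzz and tiling near the boundary, so the slot-$i$ contribution can legitimately vary with $K$. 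With only $2\tau-2$ fixed, pairwise-distinct slots there is nothing to pigeonhole over; the number of boundary patterns depends on the (unbounded) scale factor $m$, not on $\tau$, so you cannot force agreement across the $\binom{N}{\tau}$ subsets. Moreover, your slots are \emph{consecutive} $m$-strips, so neighboring macrotiles (and their allowed fuzz) can interact across slot boundaries, which defeats the additive per-slot decomposition you need even for a single pair of representatives.

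The paper's ladder construction is engineered specifically to supply what your argument is missing. The half-ladder has a number of rungs that grows with $k = (|U|+1)^{4m^2}$, and the rungs are placed at exponentially increasing heights so they cannot interfere. The pigeonhole is then taken over the rung \emph{positions within a single maximal simulator supertile} $\tilde l'$, yielding $2\tau-1$ rungs whose $m$-block neighborhoods are literally identical (the $x$ rungs). Because they are identical, one fixed mate macrotile (the over-binding $o$ rung in $\tilde{r}_p'$, whose existence follows from the absence of a uniform mapping) binds with the same strength $\ge\lceil\tau'/\tau\rceil$ against \emph{every} $x$ rung --- that is what makes the per-contact strength well-defined, and it is achieved without needing any consistency across different $K$'s. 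Two further ingredients you would need but do not supply: (i) a concrete finite tile set realizing all the required producible supertiles (the ladder tile set does this cleanly, whereas your family of $\beta_K$'s for all $K$ is not obviously producible from one tile set while forbidding unwanted combinations), and (ii) a dynamics lemma like the paper's Lemma~\ref{lem:noShift}, which shows the simulator cannot assemble the copies of $\tilde{r}_p'$ with a lateral or vertical shift --- without this, the ``only the $o$ rungs align'' claim is unjustified. As written, your proof has a genuine gap precisely at the per-slot strength extraction and cannot be completed along the lines you sketch.
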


\both{
\noindent {\em Proof:}
As in~\cite{2HAMIU}, the idea behind this proof is to use Definitions~\ref{scott-defn:alt-equiv-dynamic-t-to-s} and~\ref{scott-defn:alt-equiv-dyanmic-s-to-t-strong} in order to show two producible supertiles in $\mathcal{T}$ which cannot bind due to insufficient strength, but whose simulating supertiles in $\mathcal{U}$ can combine.   This will contradict the definition of simulation.  A large part of the terminology and notation in this proof are borrowed from~\cite{2HAMIU}.
\begin{figure}[htp]

\begin{center}
\includegraphics[width=4.5in]{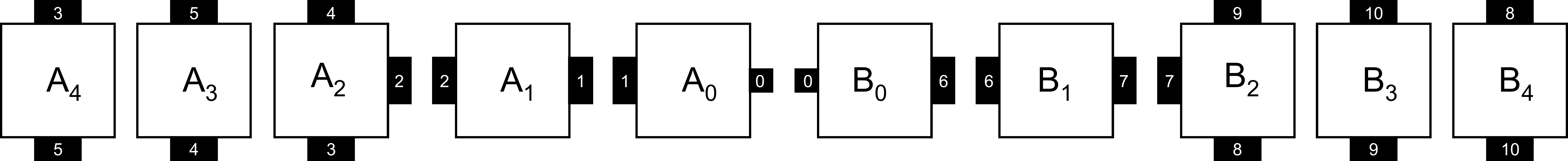}
\caption{The tile set for the proof of Theorem~\ref{lem:StrongImpossible}.  Black rectangles represent strength-$\tau$ glues (labeled 1-8), and black squares represent the strength-1 glue (labeled 0).}
\label{fig:ladders-tile-set}
\end{center}

\end{figure}

Our proof is by contradiction. Therefore, suppose, for the sake of obtaining a contradiction, that there exists an intrinsically universal tile set $U$ such that, for any 2HAM TAS $\mathcal{T} = (T,S,\tau)$, there exists an initial configuration $S_{\mathcal{T}}$ and $\tau'\geq \tau$, such that $\mathcal{U} = \left(U, S_{\mathcal{T}}, \tau'\right)$ strongly simulates $\mathcal{T}$ and there does not exist a uniform mapping from $\tau$ to $\tau'$. Define $\mathcal{T} = (T, \tau)$ where $T$ is the tile set defined in Figure~\ref{fig:ladders-tile-set}, the default initial state is used, and $\tau > 2$.  Let $\mathcal{U} = \left(U, S_{\mathcal{T}}, \tau'\right)$ be the temperature $\tau'\geq \tau$ 2HAM system, which uses tile set $U$ and initial configuration $S_{\mathcal{T}}$ (depending on $\mathcal{T}$) to strongly simulate $\calT$ at scale factor $m$. Let $\tilde{R}$ denote the supertile representation function that testifies to the fact that $\mathcal{U}$ strongly simulates $\mathcal{T}$.
}%

\begin{figure}[htp]
\begin{center}
\includegraphics[width=2in]{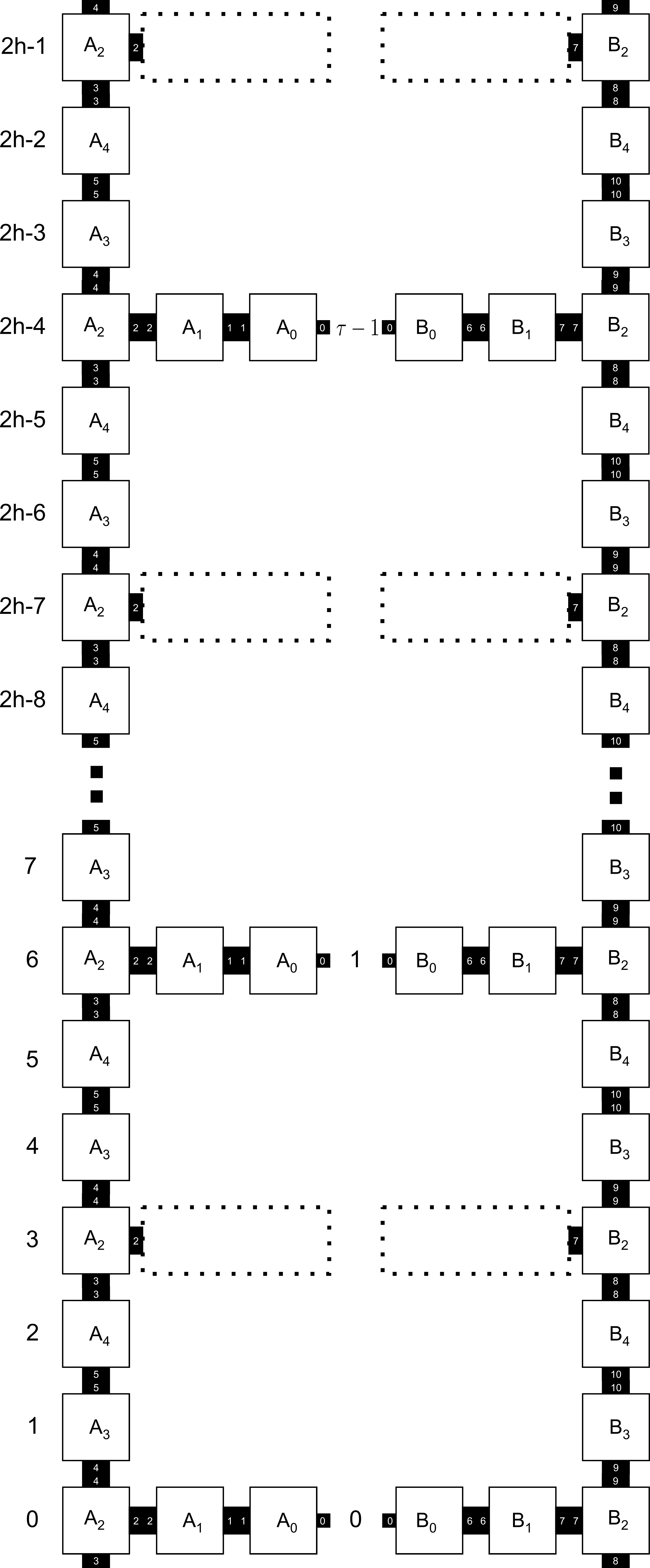}
\caption{Example half-ladders with $\tau$ rungs.}
\label{fig:ladders}
\end{center}
\end{figure}

\both{
We say that a supertile $\tilde{l} \in \prodasm{\mathcal{T}}$ is a $d$-rung \emph{left half-ladder} of height $h \in \mathbb{N}$ if it contains $h$ tiles of the type $A_2$ and $h-1$ tiles of type $A_3$, arranged in a vertical column, plus $d$ tiles each of the types $A_1$ and $A_0$ for $d \in \mathbb{N}$. 
(An example of a $\tau$-rung left half-ladder is shown on the left in Figure~\ref{fig:ladders}. The dotted lines show positions at which tiles of type $A_1$ and $A_0$ could potentially attach, but since a $\tau$-rung half-ladder has exactly $\tau$ of each, only $\tau$ such locations have tiles.)
Essentially, a $d$-rung left half-ladder consists of a single-tile-wide vertical column of height $2h-1$ with an $A_2$ tile at the bottom and top, and those in between alternating between $A_2$, $A_3$, and $A_4$ tiles. To the east of exactly $d$ of the $A_2$ tiles an $A_1$ tile is attached and to the east of each $A_1$ tile an $A_0$ tile is attached.
These $A_1$-$A_0$ pairs, collectively, form the $\tau$ \emph{rungs} of the left half-ladder. We enumerate the $A_2$ tiles appearing in $\tilde{l}$ from north to south and denote the $i^{th}$ $A_2$ tile by $A_{2,i}$. Thus, $A_{2,0}$ denotes the northernmost $A_2$ tile in $\tilde{l}$ and $A_{2,(d-1)}$ denotes the southernmost tile in $\tilde{l}$.  We can define $d$-rung \emph{right} half-ladders similarly. A $d$-rung \emph{right half-ladder} of height $h$ is defined exactly the same way but using the tile types $B_3$, $B_2$, $B_1$, and $B_0$ and with rungs growing to the left of the vertical column. The east glue of $A_0$ is a strength-$1$ glue matching the west glue of $B_0$.
}%

\begin{figure}[htp]
\vspace{-16pt}
\centering
  \subfloat[][]{%
        \label{fig:impossibility-example1}%
        \includegraphics[height=7in]{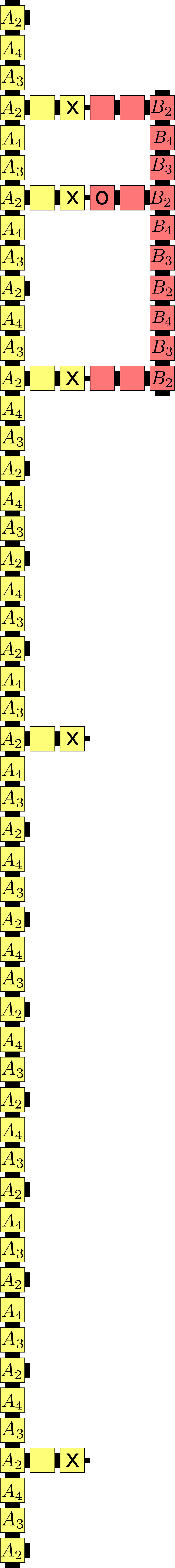}}
        \quad\quad\quad\quad\quad\quad\quad\quad
  \subfloat[][]{%
        \label{fig:impossibility-example2}%
        \includegraphics[height=7in]{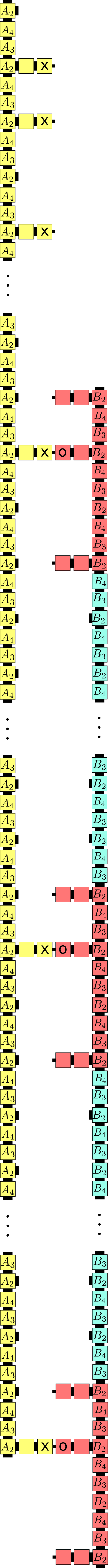}}
  \caption{The squares in this figure depict macrotiles which assemble in $\mathcal{U}$ and simulate tiles $\mathcal{T}$ when $\tau'=4$ and $\tau = 3$.}
  \label{fig:impossibility-examples}
\end{figure}

\both{
We say that a supertile consisting only of tiles of type $A_2$, $A_3$, and $A_4$ is a \emph{left bar} provided that the northernmost tile in the supertile is $A_4$ and the southernmost tile in the supertile is $A_3$.  The height of a bar is the number of $A_2$ tiles appearing in the bar.  We define a \emph{right bar} similarly.
} %
In the case where $\tau = 3$ and $\tau' = 4$, note that there does not exist a uniform mapping from $\tau$ to $\tau'$. Also, in this case, Figure~\ref{fig:impossibility-examples} shows the main idea of the proof of Theorem~\ref{lem:StrongImpossible}.

Consider the left half-ladder shown in Figure~\ref{fig:impossibility-example1}. We show that for sufficiently many rungs, some macrotile (labeled $x$) must repeat an arbitrary number of times. Therefore, for strong simulation, there must be a left half-ladder, $\tilde{l}'$, with rungs that contain these macrotiles. $\tilde{l}'$ is depicted by yellow tiles. By assumption, $\mathcal{T}$ is strongly simulated by $\mathcal{U}$, therefore, there must be a $3$ rung right half-ladder which we call $r_p'$ that binds to exactly three of the rungs of $\tilde{l}'$. $\tilde{r_p}'$ is depicted by red tiles. Note that because $\tau' > \tau$, it must be the case that some rung binds with strength at least $\lceil \frac{\tau'}{\tau}\rceil$ (we say that such a rung ``over-binds''.)  Moreover, we show that we can choose $x$ such that $x$ belongs to an ``over-binding'' rung and such that the distance between each consecutive macrotile $x$ is increasing.  Then, as depicted in Figure~\ref{fig:impossibility-example2}, we use the assumption of strong simulation to construct a right half-ladder which we call $\tilde{r}_{bar}'$ that consists of $\tau - 1$ copies of the supertile $\tilde{r}_p'$ bound to spacer macrotiles such that each copy of $\tilde{r}_p'$ is precisely and appropriately spaced. The tiles which bind between copies of $\tilde{r}_p'$ supertiles are depicted by blue tiles. Note that each $\tilde{r}_p'$ contains an ``over-binding'' rung. Then, the spacings of the $\tilde{r}_p'$ supertiles of $\tilde{r}_{bar}'$ are chosen so that only ``over-binding'' rungs attach to $\tilde{l}'$ and each ``over-binding'' rung attaches to a rung of $\tilde{l}'$ with at least strength $\lceil \frac{\tau'}{\tau}\rceil$. Finally, given the assumption that there is not a uniform mapping from $\tau$ to $\tau'$, it follows that $(\tau-1)\lceil \frac{\tau'}{\tau} \rceil \geq \tau'$. We then show that this implies that $\tilde{l}'$ and $\tilde{r}_{bar}'$ can bind in $\mathcal{U}$, but that $\tilde{R}(\tilde{l}')$ cannot stably bind to $\tilde{R}(\tilde{r}_{bar}')$. Thus, we arrive at a contradiction. It should be noted that the proof is not merely combinatorial and relies on arguing about the dynamics of $\mathcal{U}$, though we have not indicated that here. Please see Section~\ref{sec:impossibility-proof} for more detail.

\ifabstract
\later{

Let $LEFT \subseteq \prodasm{T}$ and $RIGHT \subseteq \prodasm{T}$ be the set of all left and right half-ladders of height $h$, respectively. Note that there are $\binom{h}{\tau}$ $\tau$-rung half-ladders of height $h$ in $LEFT$ ($RIGHT$). Define, for each $\tilde{l} \in LEFT$, the \emph{mirror image} of $\tilde{l}$ as the supertile $\bar{\tilde{l}} \in RIGHT$ such that $\bar{\tilde{l}}$ has rungs at the same positions as $\tilde{l}$.

For some $\tilde{l} \in LEFT$, we say that $\tilde{\hat{l}} \in \prodasm{U}$ is a \emph{simulator} left half-ladder of height $h$ if $\tilde{R}\left( \tilde{\hat{l}}\right) = \tilde{l}$. Note that $\tilde{\hat{l}}$ need not be unique, e.g., $\tilde{\hat{l}}$ and $\tilde{\hat{l}}'$ could differ by a single tile (the latter could have no simulation fuzz and the former could have one tile of fuzz) yet satisfy $\tilde{R}\left(\tilde{\hat{l}}'\right) = \tilde{l}$. The notation $C^{\tau}_{\tilde{\alpha},\tilde{\beta}}$ is defined as the set of all supertiles that result in the $\tau$-stable combination of the supertiles $\tilde{\alpha}$ and $\tilde{\beta}$.

For some $\tilde{\hat{r}} \in \prodasm{U}$, we say that $\tilde{\hat{r}}$ is a \emph{mate} of $\tilde{\hat{l}}$ if $\tilde{R}\left(\tilde{\hat{r}}\right) = \tilde{r} \in RIGHT$, where $\tilde{r} = \bar{\tilde{l}}$, $C^{\tau}_{\tilde{l},\tilde{r}} \ne \emptyset$ (they combine in $\mathcal{T}$), and $C^{\tau'}_{\hat{\tilde{l}},\hat{\tilde{r}}} \ne \emptyset$ (they combine in~$\mathcal{U}$). For a simulator left half-ladder $\tilde{\hat{l}}$, we say that $\tilde{\hat{l}}$ is \emph{combinable} if $\tilde{\hat{l}}$ has a mate.

Let $\tilde{\alpha'} \in \prodasm{U}$ be such that $\tilde{R}(\tilde{\alpha'})=\tilde{\alpha}$ for some $\tilde{\alpha} \in \prodasm{T}$.  We say that a simulator supertile $\tilde{\alpha'}$ is a \emph{maximal simulator supertile} provided that the attachment of any other supertile in $\prodasm{U}$ implies $\tilde{R}(\tilde{\alpha'}) \neq \tilde{\alpha}$.

\begin{observation} \label{ob:max}
Let $\ta, \tb \in \prodasm{T}$, and suppose that $\tilde{\gamma} \in C^\tau_{\ta,\tb}$.  Furthermore, let $\ta', \tb' \in \prodasm{U}$ be such that $\tilde{R}(\ta')=\ta$, $\tilde{R}(\tb')=\tb$, and suppose that $\ta'$ and $\tb'$ are both maximal simulator supertiles.  Then, there exists $\tilde{\gamma}'\in C^\tau_{\ta',\tb'}$ such that $\tilde{R}(\tilde{\gamma}')=\tilde{\gamma}$.
\end{observation}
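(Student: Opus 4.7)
The plan is to derive this as an essentially immediate consequence of the strong simulation hypothesis together with the definition of a maximal simulator supertile. Since the proof of Theorem~\ref{lem:StrongImpossible} operates under the standing assumption that $\mathcal{U}$ strongly simulates $\mathcal{T}$ via $\tilde{R}$, I would invoke Definition~\ref{scott-defn:alt-equiv-dyanmic-s-to-t-strong} applied to $\ta,\tb$ (with $\tilde{\gamma}\in C^\tau_{\ta,\tb}$) and to the given $\ta',\tb'$. This immediately yields supertiles $\ta'',\tb'',\tilde{\gamma}'\in \prodasm{\mathcal{U}}$ with $\ta'\to_{\mathcal{U}}\ta''$, $\tb'\to_{\mathcal{U}}\tb''$, $\tilde{R}(\ta'')=\ta$, $\tilde{R}(\tb'')=\tb$, $\tilde{R}(\tilde{\gamma}')=\tilde{\gamma}$, and $\tilde{\gamma}'\in C^{\tau'}_{\ta'',\tb''}$. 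The remaining task is to argue that in fact $\ta''=\ta'$ and $\tb''=\tb'$, so that the combination witnessed by the definition already occurs at the original pair.

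The second step, which is the heart of the argument, uses maximality to force $\ta''=\ta'$ (and symmetrically $\tb''=\tb'$). The key observation is that assembly in the 2HAM is monotonic: $\ta'\to_{\mathcal{U}}\ta''$ implies that (up to translation) some $\alpha'\in\ta'$ is a subassembly of some $\alpha''\in\ta''$. Because the representation function $R^*$ is defined blockwise and agrees with itself on any common subdomain, every macroblock of $\alpha'$ that $R$ maps to a tile of $\ta$ continues to map to that same tile inside $\alpha''$, so $\ta\sqsubseteq \tilde{R}(\ta'')$. Suppose for contradiction $\ta''\neq\ta'$; then the sequence $\ta'\to_{\mathcal{U}}\ta''$ contains at least one attachment, and by the definition of maximal simulator supertile the result of the very first such attachment, call it $\tilde{\delta}$, satisfies $\tilde{R}(\tilde{\delta})\neq\ta$. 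Combined with $\ta\sqsubseteq\tilde{R}(\tilde{\delta})$, this gives $\tilde{R}(\tilde{\delta})\supsetneq\ta$. Monotonicity again propagates the strict containment to $\tilde{R}(\ta'')\supsetneq\ta$, contradicting $\tilde{R}(\ta'')=\ta$. Hence $\ta''=\ta'$, and symmetric reasoning shows $\tb''=\tb'$.

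Combining the two steps, $\tilde{\gamma}'\in C^{\tau'}_{\ta',\tb'}$ with $\tilde{R}(\tilde{\gamma}')=\tilde{\gamma}$, which is the desired conclusion (interpreting the temperature in the statement as $\tau'$, i.e.\ the temperature at which the simulator operates). The main obstacle I expect is bookkeeping the clean mapping condition and the precise way maximality---a single-attachment notion---propagates to multi-attachment sequences; the resolution is to package this as the monotonicity lemma above, so that any single ``bad'' attachment that breaks $\tilde{R}=\ta$ cannot be repaired by further growth in $\mathcal{U}$.
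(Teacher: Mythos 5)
Your high-level decomposition matches the paper's intent: the paper's proof is a one-liner ("This observation follows directly from Definition~\ref{scott-defn:alt-strong-simulate}"), and you correctly invoke strong modeling to obtain $\ta'', \tb'', \tilde{\gamma}'$ and then try to collapse $\ta'' = \ta'$ and $\tb'' = \tb'$ via maximality. Your reading of $C^\tau_{\ta',\tb'}$ as actually meaning $C^{\tau'}_{\ta',\tb'}$ is also right.

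However, the monotonicity lemma you propose as the "heart of the argument" does not follow from anything in the paper. You claim that because $R^*$ is defined blockwise, any macroblock of $\alpha' \in \ta'$ that $R$ maps to a tile of $\ta$ must continue to map to that same tile in $\alpha'' \supseteq \alpha'$. But $R$ is an arbitrary partial function on $m$-blocks: its input is the \emph{entire} $m$-block, not just the tiles common to $\alpha'$ and $\alpha''$. Growing $\alpha'$ to $\alpha''$ can add tiles inside existing blocks, and nothing prevents $R$ from sending the enlarged block to a different tile type, or to nothing at all. (The clean-mapping condition only constrains \emph{where} nonempty blocks may appear, not how $R$ behaves as blocks fill in.) So the inference $\ta \sqsubseteq \tilde{R}(\tilde{\delta})$, and hence $\tilde{R}(\tilde{\delta}) \supsetneq \ta$ and the propagation to $\tilde{R}(\ta'') \supsetneq \ta$, is unsupported. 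In particular, a representation function could take a block from "maps to $t$" to "undefined" on the first attachment (satisfying the single-step maximality condition) and then back to "maps to $t$" after further growth, giving a $\ta'' \neq \ta'$ with $\tilde{R}(\ta'') = \ta$, and your contradiction never arrives.

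The gap disappears if the definition of maximal simulator supertile is read as a statement about \emph{any} nonempty growth sequence rather than a single attachment: "$\ta' \to_{\mathcal{U}} \tilde{\delta}$ with $\tilde{\delta} \neq \ta'$ implies $\tilde{R}(\tilde{\delta}) \neq \ta$." Under that reading, $\tilde{R}(\ta'') = \ta$ and $\ta' \to \ta''$ immediately force $\ta'' = \ta'$, and the observation does in fact follow "directly" as the paper asserts, with no monotonicity lemma needed. Given the paper's phrasing, that broader reading is almost certainly what is intended, and you should state it explicitly and rely on it rather than on a monotonicity property of $R$ that the model does not grant you.
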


This observation follows directly from Definition~\ref{scott-defn:alt-strong-simulate}, specifically its requirement that $\calU$ strongly model $\calT$ (Definition~\ref{scott-defn:alt-equiv-dyanmic-s-to-t-strong}).

Let $k = (|U|+1)^{4m^2}$, which is the number of ways to tile a neighborhood of four $m \times m$ squares (i.e. 4 $m$-block assemblies, or macrotiles, in the simulator) from a set of $|U|$ distinct tile types and possibly empty positions.  Let $\tilde{l} \in \prodasm{T}$ be the $(2\tau-1)(k+1)$-rung left half-ladder of height $2^{(2\tau-1)(k+1)}+2$ such that there is a rung protruding from every $A_{2, 2^i}$ tile for $i \in [0, 2(\tau-1)(k+1)-1]$.  Let $\tilde{l}' \in \prodasm{U}$ be such that $\tilde{R}(\tilde{l}')=\tilde{l}$ and $\tilde{l}'$ is a maximal simulator supertile.

\begin{observation}
There are at least $(2\tau-1)$ neighborhoods in $\tilde{l}'$ that map to tiles of $A_0$ (plus any additional simulator fuzz that connects to simulated $A_0$ tiles) which are tiled the same.
\end{observation}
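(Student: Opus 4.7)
The approach is a direct application of the pigeonhole principle. First, I would count the $A_0$-tiled positions in $\tilde{l}$: because $\tilde{l}$ is a $(2\tau-1)(k+1)$-rung left half-ladder and every rung contains exactly one tile of type $A_0$, the assembly $\tilde{l}$ contains precisely $(2\tau-1)(k+1)$ positions at which an $A_0$ tile appears. Under the representation function $\tilde{R}$, each such position is simulated in $\tilde{l}'$ by an $m\times m$ macrotile, together with whatever simulator fuzz attaches to that macrotile and that itself maps under $\tilde{R}$ to (or connects to) the simulated $A_0$ tile. Call each such macrotile-plus-fuzz an \emph{$A_0$-neighborhood} of $\tilde{l}'$.

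Next, I would bound the number of distinct configurations an $A_0$-neighborhood can take. By the clean-mapping requirement of Section~\ref{sec:defsSim}, simulator fuzz can appear only in the four cardinal directions adjacent to a macrotile, so an $A_0$-neighborhood is contained in a region consisting of at most four $m\times m$ squares of the simulator's lattice. Each of these $4m^2$ cells is either empty or holds one of the $|U|$ tile types of $U$, which by the stated definition yields at most $(|U|+1)^{4m^2} = k$ distinct tilings.

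The exponentially-growing vertical spacing of the rungs in $\tilde{l}$ (rungs protrude at the $A_{2,2^i}$ tiles) guarantees that the $A_0$-neighborhoods arising from distinct rungs are pairwise disjoint subassemblies of $\tilde{l}'$, so each can legitimately be counted as a separate object for a pigeonhole argument. Distributing $(2\tau-1)(k+1)$ objects among at most $k$ equivalence classes of tilings, the pigeonhole principle then forces at least
\[
\left\lceil \frac{(2\tau-1)(k+1)}{k} \right\rceil \;\geq\; 2\tau - 1
\]
of the $A_0$-neighborhoods to share an identical tiling, which is exactly the claim. The only detail that requires care is making disjointness of the neighborhoods precise, but given the spacing of the rungs in $\tilde{l}$ this is essentially immediate, so no serious obstacle is expected.
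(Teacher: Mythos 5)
Your proof is correct and follows essentially the same pigeonhole argument as the paper: there are $(2\tau-1)(k+1)$ rungs, at most $k=(|U|+1)^{4m^2}$ distinct ways to tile an $A_0$-neighborhood, so some tiling must repeat at least $2\tau-1$ times. You actually state the pigeonhole bound $\lceil (2\tau-1)(k+1)/k\rceil \ge 2\tau-1$ more explicitly than the paper, which merely asserts that having $(k+1)$ rungs forces one repeat and then multiplies through; your version is the cleaner formulation. One minor remark: the disjointness of the $A_0$-neighborhoods is not actually needed for the counting itself (each rung is a distinct object by position, so pigeonhole applies regardless of overlap); the exponential spacing is important later in the proof, where disjointness of $l'$ and $r_{bar}'$ and the freedom to place copies of $\tilde r_p'$ at prescribed offsets are used, but it is not load-bearing for this observation.
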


To see why this observation holds, note that for every $(k+1)$ rung configurations, there must exist at least two configurations which are the same (since there are only $k$ ways to tile neighborhoods that map to $A_0$ plus any additional simulator fuzz that connects to simulated $A_0$ tiles).  Now, note that this implies for every $(2\tau-1)(k+1)$ rung configurations, there must exist at least $(2\tau-1)$ configurations which are the same.  Let $\mathcal{X}$ be the ordered list of such rung configurations, such that $x_0 \in \mathcal{X}$ is the northernmost.
If a rung contains the tile denoted $x_i$, we say that it is the \emph{$x_i$ rung}.  In addition, we say that any $x_i$ rung is an $x$ rung.

\begin{figure}[htp]
\begin{center}
\includegraphics[height=8.0in]{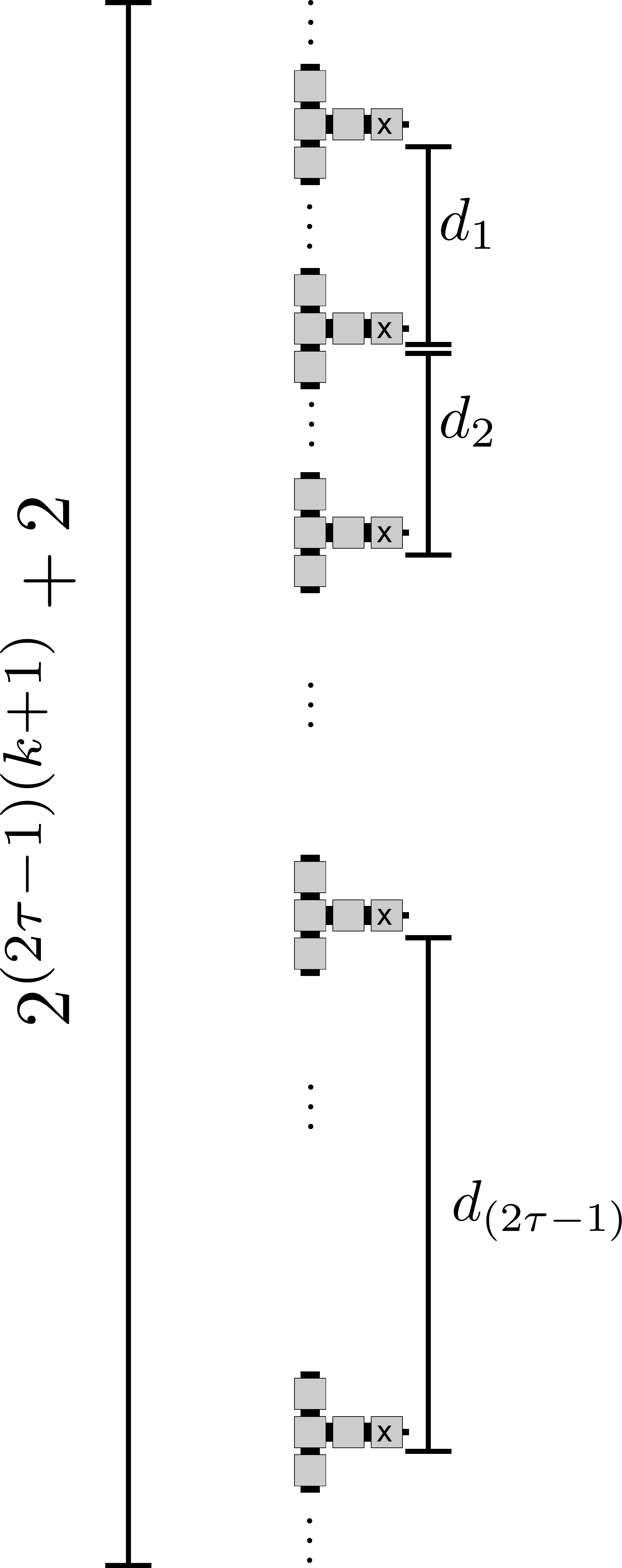}
\caption{The left half-ladder which we refer to as $\tilde{l}$.}
\label{fig:l}
\end{center}
\end{figure}

Let $\tilde{r_p} \in \prodasm{T}$ be the $\tau$-rung right half-ladder of minimal height such that its rungs are spaced the same as the first $\tau$ rungs of $\tilde{l}$.  That is $\tilde{r_p}$ has $\tau$ rungs which are capable of aligning with the northernmost $\tau$ $x$ rungs of $\tilde{l}$.  Define $\tilde{r_p}' \in \prodasm{U}$ to be a maximal simulator supertile such that $\tilde{R}(\tilde{r_p}')=\tilde{r_p}$.

Notice that $\tilde{r_p}$ and $\tilde{l}$ can combine as shown in Figure~\ref{fig:lr}.  Denote this supertile by $\tilde{f_p}$.  Let $\tilde{f_p}'$ be such that $\tilde{R}(\tilde{f_p}')=\tilde{f_p}$ and $\tilde{f_p}' \in C^\tau_{\tilde{l}',\tilde{r_p'}}$.  Now, observe that since there does not exist a uniform mapping between $\tau$ and $\tau'$, there exists at least one neighborhood in $\tilde{r_p'}$ that maps to a tile of type $B_0$ (plus any additional simulator fuzz that connects to simulated $B_0$ tiles) which contributes a binding strength of at least $\lceil \frac{\tau'}{\tau} \rceil$.  We denote the tiles in $\tilde{r_p} \in \prodasm{T}$ whose corresponding neighborhoods in $\tilde{r_p}' \in \prodasm{U}$ contribute strength at least $\lceil \frac{\tau'}{\tau} \rceil$ by $o$.  If a rung contains, an $o$ tile, we say that it is an \emph{$o$ rung}.

\begin{figure}[htp]
\begin{center}
\includegraphics[height=2.0in]{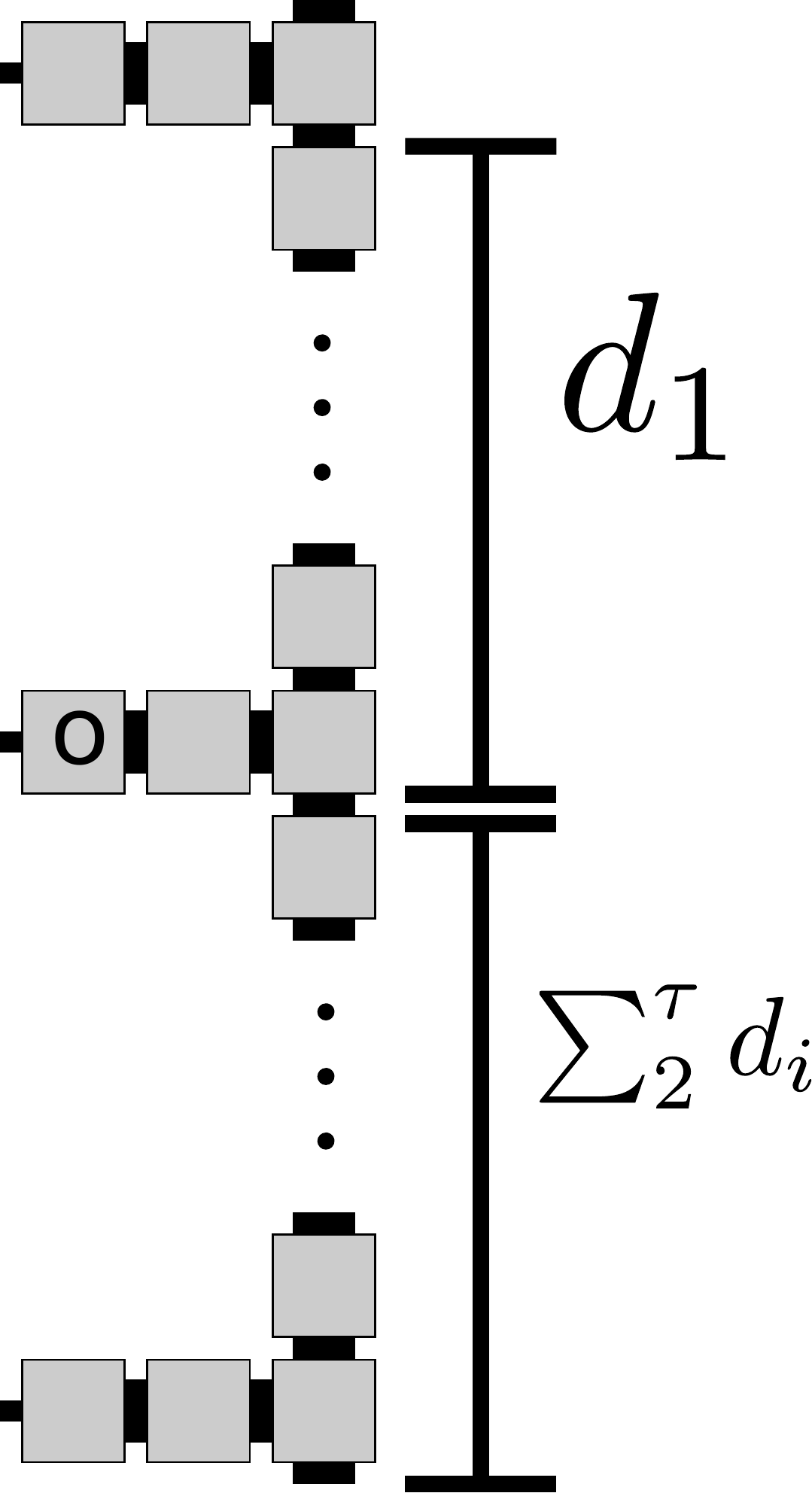}
\caption{The right half-ladder which we refer to as $\tilde{r}$.}
\label{fig:r}
\end{center}
\end{figure}

\begin{figure}[htp]
\begin{center}
\includegraphics[height=8.0in]{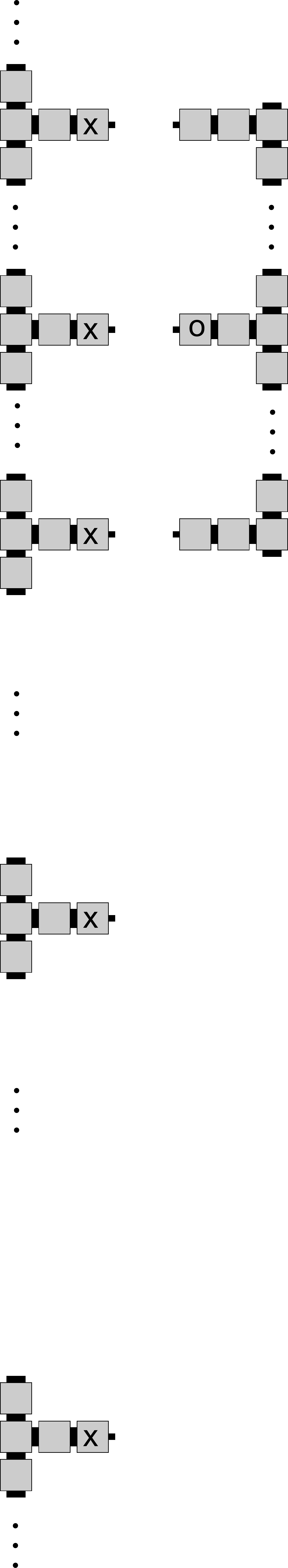}
\caption{The two supertiles $\tilde{l}$ and $\tilde{r}$ combining.}
\label{fig:lr}
\end{center}
\end{figure}

Let $h_r$ be the height of $r_p$, and let $d_i$ be the sum of all the $A_2$ tiles that lie between the rung which contains $x_i$ and the rung which contains $x_{i-1}$ including the boundary tiles.   We denote by $\tilde{b_i}$ the bars in $\prodasm{T}$ of length $d_i-h_r$ for $i \in [\tau+1, 2\tau-1]$.  Intuitively, the length of these bars is such that whenever $\tilde{r_p}$ supertiles attach to the north and south of the bar $b_i$ and an $o$ rung of the north $\tilde{r_p}$ supertile is aligned with the rung which contains $x_{i-1}$, the southern $\tilde{r_p}$ supertile will have an $o$ rung which aligns with the rung $x_i$.  Let $\tilde{b_i}' \in \prodasm{T}$ be a maximal simulator supertile such that $\tilde{R}(\tilde{b_i}') = \tilde{b_i}$.

Let $\tilde{r_i} \in \prodasm{T}$ be the supertile composed of $\tilde{b_i}$ attached to the southernmost tile of $\tilde{r_p}$(see Figure~\ref{fig:ri}).  Define $\tilde{r_i}' \in \prodasm{U}$ to be the set of supertiles such that 1) $\tilde{R}(\tilde{r_i}') = \tilde{r_i}$ and 2) $\tilde{r_i}' \in C^\tau_{\tilde{r_p}',\tilde{b_i}'}$.  We know that such an $\tilde{r_i}'$ exists because we defined $\tilde{r_p}$ and $\tilde{b_i}'$ to be maximal.

\begin{figure}[htp]
\begin{center}
\includegraphics[height=4.0in]{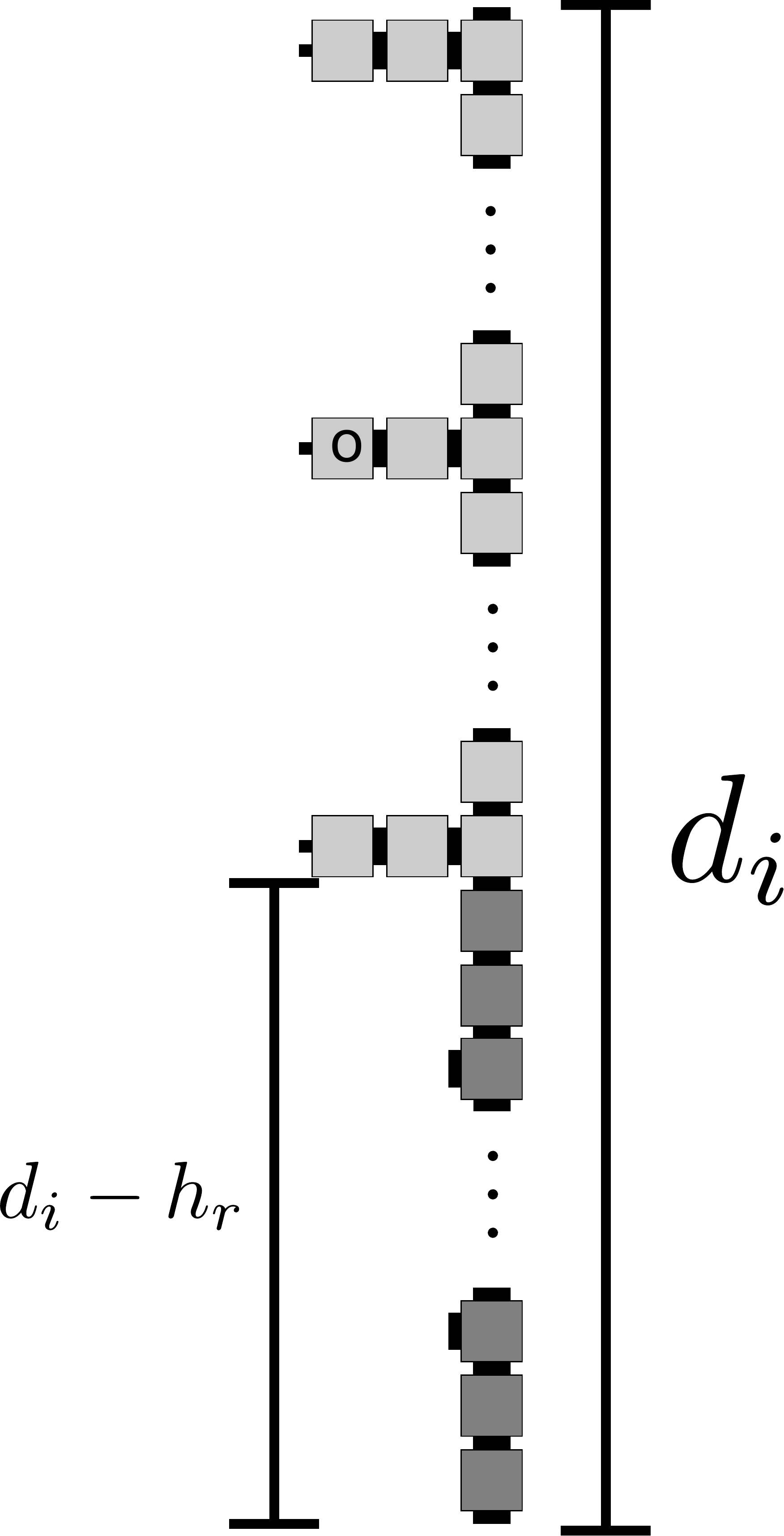}
\caption{The supertile $\tilde{r_{i}}$ formed by combining $\tilde{r_p}$ and a ``bar'' of length $\tilde{d_i}$ for $i \in [\tau, 2\tau-1]$.}
\label{fig:ri}
\end{center}
\end{figure}

Let $\tilde{r_{bar}} \in \prodasm{T}$ be the right half-ladder formed by combining the $\tau-1$ distinct $\tilde{r_i}$ supertiles so that $\tilde{r_{\tau}}$ is the northern most supertile in $\tilde{r_{bar}}$, $\tilde{r_{\tau+1}}$ is attached to the south of $\tilde{r_{\tau}}$, and in general $\tilde{r_{\tau+i+1}}$ is attached to the south of $\tilde{r_{\tau+i}}$ for $i \in [0, \tau-1]$.  Furthermore, $\tilde{r_{bar}}$ is such that $\tilde{r_p}$ attaches to the south of $\tilde{r_{2\tau+1}}$ and nothing else attaches below it.  See Figure~\ref{fig:rbar} for a depiction of $\tilde{r_{bar}}$.

Let $\tilde{r_{bar}}' \in \prodasm{U}$ be a supertile such that $\tilde{R}(\tilde{r_{bar}}') = \tilde{r_{bar}}$ and $\tilde{r_{bar}}'=\res{\tilde{r_i}' \mid \tau + 1 \leq i \leq 2\tau}$ where we define $\tilde{r_{2\tau}}$ to be the supertile $\tilde{r_p}$ and all other $\tilde{r_i}$ to be as above.  Since we defined $\tilde{r_p}'$ and $\tilde{b_i}$ to be maximal, it follows from a straight forward extension of Observation~\ref{ob:max} that this is a valid assembly sequence.

\begin{figure}[htp]
\begin{center}
\includegraphics[height=8.0in]{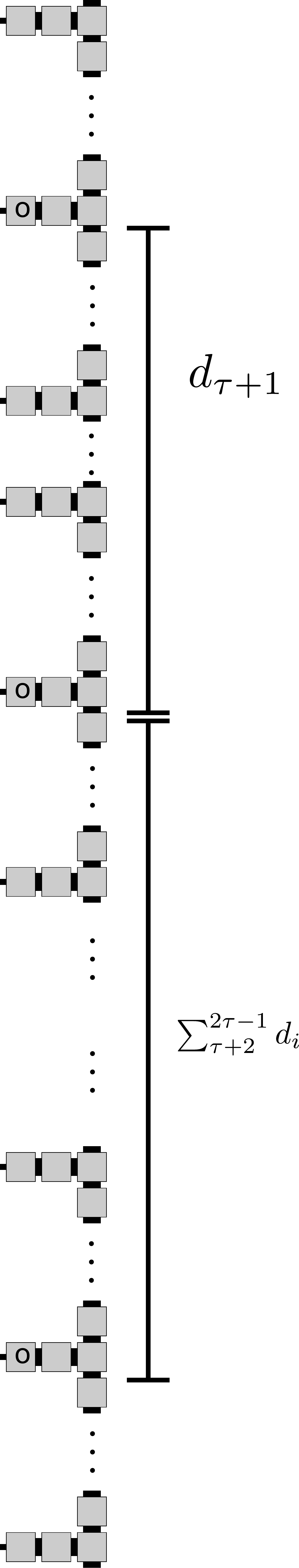}
\caption{The supertile $r_{bar}$ formed by combining the supertiles $\tilde{r_1}, \tilde{r_2}, ..., \tilde{r_{\tau-1}}$.}
\label{fig:rbar}
\end{center}
\end{figure}

Let $\tilde{\alpha_x}' \in \prodasm{U}$ be the supertile formed by attaching $\tilde{r_{bar}}'$ to the bottom $\tau-1$ $x$ rungs of $\tilde{l}'$ so that the $o$ rungs of $\tilde{r_{bar}}'$ align with the $x$ rungs of $\tilde{l}'$.  The supertile which $\tilde{\alpha_x}$ maps to is shown in Figure~\ref{fig:lrbar} (note that this supertile is not $\tau$-stable).

Let $\tilde{\alpha_x}$ be the supertile formed by attaching $\tilde{r_{bar}}$ to the bottom $\tau-1$ rungs of $\tilde{l}'$ so that the $o$ rungs of $\tilde{r_{bar}}$ align with the $x$ rungs of $\tilde{l}$ (see Figure~\ref{fig:lrbar}).  Note that $\tilde{\alpha_x} \notin \prodasm{T}$.  Let $\tilde{\alpha_x}'$ be a supertile made from tiles in $U$ which is such that $\tilde{R}(\tilde{\alpha_x}')=\tilde{\alpha_x}$.

\begin{claim}
Let $\tilde{l}'$, $\tilde{r_{bar}}'$, and $\tilde{\alpha_x}'$ be the supertiles described above.  Then $\tilde{\alpha_x}' \in C^\tau_{\tilde{l}',\tilde{r_{bar}}'}$.
\end{claim}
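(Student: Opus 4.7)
The plan is to verify two things. First, that $\tilde{l}'$ and $\tilde{r_{bar}}'$ can be positioned so that the resulting combination is a single (disjoint) assembly mapping under $\tilde{R}$ to $\tilde{\alpha_x}$. Second, that the total binding strength along their common interface is at least $\tau'$, so that the combination is $\tau'$-stable in $\mathcal{U}$.

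For the geometric placement, recall that each bar height $d_i - h_r$ was chosen precisely so that when the northernmost $\tilde{r_p}$ component of $\tilde{r_{bar}}$ has its $o$ rung aligned with an $x$ rung of $\tilde{l}$, each successive $\tilde{r_p}$ component attached south of a bar has its $o$ rung aligned with the next $x$ rung of $\tilde{l}$. Since $\tilde{r_{bar}}'$ was built via maximal simulator copies of the $\tilde{r_p}$ and $\tilde{b_i}$ supertiles (using Observation~\ref{ob:max} iteratively), and $\tilde{l}'$ is itself a maximal simulator supertile for $\tilde{l}$, the macrotile layout of $\tilde{r_{bar}}'$ (resp.\ $\tilde{l}'$) faithfully reflects the geometry of its image under $\tilde{R}$. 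Thus we can translate $\tilde{r_{bar}}'$ flush against the east side of $\tilde{l}'$, disjointly and without overlap, so that each of the $\tau-1$ $o$ rungs of $\tilde{r_{bar}}'$ lies adjacent to an $x$ rung of $\tilde{l}'$. The resulting assembly maps under $\tilde{R}$ to $\tilde{\alpha_x}$.

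For the binding strength, the pigeonhole observation guarantees that all $x$ rungs of $\tilde{l}'$ have identical tilings of the four $m$-block neighborhood (including any simulator fuzz) in which they sit. The original combination $\tilde{f_p}' \in C^{\tau'}_{\tilde{l}',\tilde{r_p}'}$ established that the $o$ rung of $\tilde{r_p}'$ contributes strength at least $\lceil \tau'/\tau\rceil$ to the binding against an $x$ rung of $\tilde{l}'$; because the local glue interface at an $o$-rung/$x$-rung abutment is determined entirely by the adjacent macrotile neighborhoods, and these are identical at every $x$ rung, the same strength contribution of at least $\lceil \tau'/\tau\rceil$ occurs at each of the $\tau-1$ such alignments in the present configuration. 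Summing over these aligned rungs, the interface strength between $\tilde{l}'$ and $\tilde{r_{bar}}'$ is at least $(\tau-1)\lceil \tau'/\tau\rceil$, which by Corollary~\ref{cor:no-uniform-mapping-implic} (using the hypothesis that no uniform mapping exists from $\tau$ to $\tau'$) is at least $\tau'$. Therefore the combination is $\tau'$-stable, yielding $\tilde{\alpha_x}' \in C^{\tau'}_{\tilde{l}',\tilde{r_{bar}}'}$.

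The main obstacle is to verify that strengths really do add across the $\tau-1$ aligned rung interfaces without interference from the rest of the boundary. This reduces to the fact that the non-rung portions of $\tilde{l}$'s east boundary and of $\tilde{r_{bar}}$'s west boundary contribute no glue bonds, and that each maximal simulator supertile constructed along the way preserves the macrotile geometry of its $\tilde{R}$-image, so that the local glue environment at each $o$-rung/$x$-rung abutment is exactly the same as in the original $\tilde{f_p}'$ combination used to certify the $\lceil \tau'/\tau\rceil$ bound. Modulo this bookkeeping, the argument reduces to the two items above, and the claim follows.
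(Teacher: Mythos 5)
Your overall plan --- establish geometric alignment and disjointness, then sum the per-rung binding strengths to reach $\tau'$ via Corollary~\ref{cor:no-uniform-mapping-implic} --- matches the structure of the paper's proof, and your identification of the $\lceil \tau'/\tau \rceil$ per-rung bound and the pigeonhole identical-tiling observation is correct.

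However, there is a genuine gap in the geometric step. You write that because $\tilde{r_{bar}}'$ was built from maximal simulator supertiles, ``the macrotile layout of $\tilde{r_{bar}}'$ faithfully reflects the geometry of its image under $\tilde{R}$,'' and you later call this ``bookkeeping.'' That is precisely the non-trivial content of the argument, and it does not follow from maximality or from Observation~\ref{ob:max}. The definition of simulation only constrains how $m$-block regions map under $R$ and permits fuzz; a priori, when copies of $\tilde{r_p}'$ and $\tilde{b_i}'$ are concatenated to build $\tilde{r_{bar}}'$, the successive copies could be offset from one another by a vector not aligned to the macrotile grid (a nonzero horizontal shift, or a vertical shift different from $k_j m$), which would destroy the alignment between the $o$ rungs of $\tilde{r_{bar}}'$ and the $x$ rungs of $\tilde{l}'$ that your strength computation relies on. The paper rules this out with Lemma~\ref{lem:noShift}, whose proof is a genuine dynamics argument: it considers long chains of $\eta > 3m$ repeated $\tilde{r_p}'$/$\tilde{b}_{k_j}'$ attachments and shows that any off-grid shift would accumulate into tiles lying outside permitted fuzz regions, or into empty $m$-blocks that must map to a tile, contradicting the definition of simulation. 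This is exactly why the paper's prose warns that ``the proof is not merely combinatorial and relies on arguing about the dynamics of $\mathcal{U}$.'' The same lemma is also what licenses your disjointness claim (together with the exponentially increasing rung spacing that forces potential overlaps to occur only near rungs). Without an argument of the Lemma~\ref{lem:noShift} type, the assertion that rungs align as expected is unsupported and the strength count does not go through.
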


Before we begin the proof of the claim, we introduce some notation.  Let $r_p' \in \tilde{r_p}'$.  Denote some subassembly in $r_p'$ which maps to an $o$ rung in $r_p$ by $\alpha_{xr}$.  Now let $l' \in \tilde{l}'$ be the assembly which combines with $r_p'$ to form a member of $\tilde{f_p}'$.  Denote the subassembly in $l'$ that maps to the $x$ rung which binds with the previously mentioned $o$ rung by $\alpha_{xl}$.

To prove this claim, we show that there exists $l' \in \tilde{l}'$ and $r_{bar}' \in \tilde{r_{bar}}'$ such that 1) $l'$ and $r_{bar}'$ are disjoint and 2) $l' \cup r_{bar}' \equiv \alpha_x'$ is $\tau'$-stable and 3) $\alpha_x' \in \tilde{\alpha_x}'$.  Let $r_{bar}' \in \tilde{r_{bar}}'$, and choose $l' \in \tilde{l'}$ so that the northernmost $o$ rung of $r_{bar}'$ and the rung $x_{\tau}$ in $l'$ lie in the same position relative to each other as the $\alpha_{xr}$ and $\alpha_{xl}$ subassemblies mentioned above.  Note that since $\tilde{l}'$ and $\tilde{r_p}'$ combine, $\alpha_{xr}$ and $\alpha_{xl}$ are disjoint.

We say that a subassembly $\alpha$ is a \emph{rung of $r_{bar}'$} provided that $\alpha$ is a subassembly of $r_{bar}'$ and $\alpha$ maps onto a rung of $r_{bar}$.  We say $\alpha$ is an \emph{$o$ rung of $r_{bar}'$} provided that its is a rung of $r_{bar}'$ and it maps to an $o$ rung in $r_{bar}$.

We now argue that $r_{bar}'$ and $l'$ are disjoint assemblies.  Note that by the way Definition~\ref{scott-defn:alt-strong-simulate} restricts fuzz and the way we chose $r_{bar}'$ and $l'$ to have exponentially increasing distance between rungs, we need only to check that the neighborhoods of subassemblies in $l'$ mapping onto the rungs $x_{\tau+i}$ for $i \in [0, \tau-1]$ do not overlap the subassemblies in $r_{bar}'$ mapping onto rungs in $r_{bar}$.  Recall that by construction, the $o$ rungs of $r_{bar}'$ are all the same up to translation.  It follows from this fact and Lemma~\ref{lem:noShift} that the rungs of $r_{bar}'$ all lie in the same positions relative to the rung $x_{\tau+i}$ for $i \in [0, \tau-1]$. Now, recall that we chose $l'$ and $r_{bar}'$ so that the northernmost $o$ rung of $r_{bar}'$ and the rung $x_{\tau}$ in $l'$ lie in the same position relative to each other as the $\alpha_{xr}$ and $\alpha_{xl}$ subassemblies.  This along with the fact that $\alpha_{xr}$ and $\alpha_{xl}$ are disjoint, implies that $l'$ and $r_{bar}'$ are disjoint.

To see that $l' \cup r_{bar}'$ is $\tau'$-stable, we first recall that the sum of all the glues shared between $\alpha_{xl}$ and $\alpha_{xr}$ is at least $\lceil \frac{\tau'}{\tau} \rceil$.  As noted above, all of the rungs of $r_{bar}'$ all lie in the same positions relative to the rung $x_{\tau+i}$ for $i \in [0, \tau-1]$.  These two facts together imply that every rung on $r_{bar}'$ binds to the rungs on $l'$ with strength $\lceil \frac{\tau'}{\tau} \rceil$.  Consequently, $l'$ and $r_{bar}'$ will bind with total strength $(\tau-1)\lceil \frac{\tau'}{\tau} \rceil$ since there are $\tau-1$ rungs in $r_{bar}'$.  Note that by the assumption there is not a uniform mapping from $\tau$ to $\tau'$ and Corollary~\ref{cor:no-uniform-mapping-implic}, $(\tau-1)\lceil \frac{\tau'}{\tau} \rceil \geq \tau'$.

The third condition is straight forward to see.  Thus the proof of the claim is complete.

\begin{figure}[htp]
\begin{center}
\includegraphics[height=8.0in]{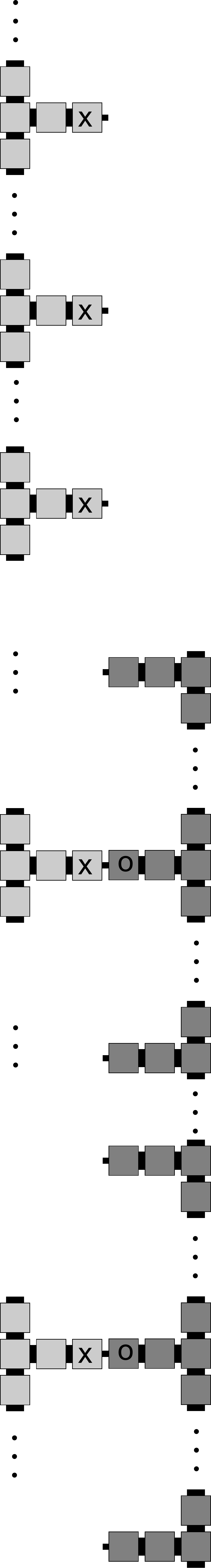}
\caption{The supertiles $\tilde{r_{bar}}$ and $\tilde{l}$ combining in $\prodasm{U}$.}
\label{fig:lrbar}
\end{center}
\end{figure}

However, $\tilde{R}(\tilde{r_{bar}}') = \tilde{{r_bar}}$ and $\tilde{R}(\tilde{l}') = \tilde{l}$, but $\tilde{R}(\tilde{\alpha_x}') \notin C^\tau_{\tilde{l}',\tilde{r_{bar}}'}$ because only $\tau-1$ rungs align in $\tilde{\alpha_x}'$ and therefore interact in $\mathcal{T}$ with strength at most $\tau-1$.  This is a contradiction.

} %
\fi

\ifabstract
\later{
\subsection{Dynamics of half-ladder assembly}

This section used the same notation as Section~\ref{sec:impossibility}.

For $q\in \N$ and each $1\leq j \leq q$, let $\tilde{r_{j}}$ be a copy of the $\tilde{r_p}$ supertile in $\prodasm{T}$, and for $k_j\in N$, let $\tilde{b_{k_j}}$ denote a copy of a bar in $\prodasm{T}$ consisting of $k_j$ tiles. Moreover, let $q$ be in $\N$, then for $1\leq j \leq q$, let $\tilde{r_{j}}'$ be a maximal simulator supertile in $\prodasm{U}$ that represents $\tilde{r_{j}}$, and let $\tilde{b_{k_j}}'$ be a maximal simulator supertile in $\prodasm{U}$ that represents $\tilde{b_{k_j}}$.

In addition, for $1\leq j \leq q-1$ and $k_j\in \N$ for each $j$, we let $\tilde{rb}$ be the supertile in $\prodasm{T}$ that is obtained by the assembly of supertiles that assemble via the sequence described as follows. For convenience, let $\tilde{rb}_{0}$ denote $\tilde{r}_1$. Then, we let $\tilde{rb}_j^*$ denote the supertile that results when $\tilde{rb}_{j-1}$ and $\tilde{b}_{k_j}$ bind via the $\tau$-strength glue labeled $8$ in Figure~\ref{fig:ladders-tile-set} exposed by the southeasternmost tile of $\tilde{rb}_{j-1}$ and the northernmost tile of $\tilde{b}_{k_j}$. Moreover, we let $\tilde{rb}_j$ denote the supertile obtained when $\tilde{rb}_*$ and $\tilde{r}_{j+1}$ bind via the $\tau$-strength glue labeled $9$ in Figure~\ref{fig:ladders-tile-set} exposed by the southernmost tile of $\tilde{rb}_*$ and the northeasternmost tile of $\tilde{r}_{j+1}$.  Then we let $\tilde{rb} = \tilde{rb}_{q-1}$. Note that $\tilde{rb}$ is a $\tau$ rung right half-ladder.

Then, since (1) $\mathcal{U}$ strongly simulates $\mathcal{T}$ and (2) for each $j$ such that $1\leq j\leq q$, $\tilde{r}_{j}'$ and $\tilde{b}_{k_j}$ are maximal simulator supertiles, it follows from Claim~\ref{ob:max} that the assembly sequence described above for $\tilde{rb}$ gives rise to an assembly sequence in $\mathcal{U}$ that corresponds to replacing each supertile in the assembly sequence described above with the appropriate $m$-block macrotile representative in $\prodasm{U}$. We let $\tilde{rb}'$ denote the supertile in $\prodasm{U}$ that results from this assembly process.

The following lemma puts a restriction on how $\tilde{rb}'$ can assemble.
Intuitively, Lemma~\ref{lem:noShift} states that as $\tilde{rb}'$ assembles, each consecutive supertile $\tilde{r}_{j+1}'$ that attaches in the assembly of $rb'$ does so such that it is ``aligned'' with the previous $\tilde{r}_{j}'$. Note that even though the lemma here is stated for right half-ladders an analogous lemma holds for left half-ladders.

\begin{lemma}\label{lem:noShift}
For each $j$ such that $1\leq j\leq q-1$ and vector $\vec{v} = (0,m)$ (where $m$ is the macrotile size), the subassemblies $r_{j}'$ and $r_{j+1}'$ of $rb' \in \tilde{rb}'$ satisfy the following equation: $r_{j}'(p - k_j \vec{v}) = r_{j+1}'(p)$ for all $p\in\Z^2$.
\end{lemma}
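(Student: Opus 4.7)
The plan is to prove Lemma~\ref{lem:noShift} by induction on $j$, exploiting the fact that $\mathcal{U}$ strongly simulates $\mathcal{T}$ together with the single-column structure of the simulated supertiles $\tilde{r}_p$ and $\tilde{b}_{k_j}$. Since $\tilde{r}_{j}$ and $\tilde{r}_{j+1}$ are both copies of $\tilde{r}_p$, and since $r_j',r_{j+1}'$ are drawn from maximal simulator supertile classes representing $\tilde{r}_p$, the subassemblies $r_j'$ and $r_{j+1}'$ must be translates of one another in $\Z^2$. The entire content of the lemma is then to pin down this translation vector as precisely $k_j\vec{v}$.

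First I would argue that the translation has no horizontal component. In $\mathcal{T}$, each tile of the bar $\tilde{b}_{k_j}$ lies in the same vertical column as the main column of $\tilde{r}_p$, and the attachment of $\tilde{r}_{j+1}$ to $\tilde{rb}_j^*$ via the strength-$\tau$ glue labeled $9$ occurs at the unique lattice position directly south of the southernmost tile of $\tilde{b}_{k_j}$. Because $R$ is an $m$-block representation, the macrotile grid of $\mathcal{U}$ is fixed once $r_j'$ is placed; the column tiles of $\tilde{r}_j$, $\tilde{b}_{k_j}$, and $\tilde{r}_{j+1}$ must therefore be represented by macrotiles in a common column of that grid. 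Any horizontal offset of $r_{j+1}'$ would either produce a neighborhood that does not map under $\tilde{R}$ to the correct tile of $\tilde{r}_p$, or would place a non-empty macrotile diagonally adjacent to a macrotile of $b_{k_j}'$, violating the clean-mapping requirement that $\mathcal{U}$ exhibit no diagonal fuzz.

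Next I would determine the vertical component. By construction $\tilde{b}_{k_j}$ contributes exactly $k_j$ column tiles between $\tilde{r}_j$ and $\tilde{r}_{j+1}$ in $\mathcal{T}$, and under $\tilde{R}$ each of these is the image of a single $m$-block of $\mathcal{U}$, arranged vertically in one macrotile column by the argument of the previous paragraph. Hence the northernmost column macrotile of $r_{j+1}'$ lies exactly $k_j m$ units south of the northernmost column macrotile of $r_j'$. Combining this with the no-horizontal-shift conclusion, $r_{j+1}'$ is the translate of $r_j'$ by $k_j \vec{v}$, i.e.\ $r_j'(p - k_j\vec{v}) = r_{j+1}'(p)$ for all $p\in\Z^2$, and the induction closes immediately by applying the same argument to the next stage $\tilde{rb}_{j+1}^*$.

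The main obstacle I anticipate is making the horizontal-alignment step fully rigorous. A priori, an assembly sequence in $\mathcal{U}$ could form a correct representation of $\tilde{rb}_{j+1}$ in which $r_{j+1}'$ is shifted by some nonzero horizontal amount, provided the glue-level combinatorics inside the simulator accidentally accommodate it. To rule this out I would perform a local case analysis at the junction between $b_{k_j}'$ and the incoming $r_{j+1}'$, using maximality of the simulator supertiles (so that no further tiles can be appended before attachment) together with the clean-mapping clause of Definition~\ref{scott-defn:alt-simulate} to force each column macrotile of $r_{j+1}'$ into the unique lattice column already occupied by the column macrotiles of $r_j'$ and $b_{k_j}'$.
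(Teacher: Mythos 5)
Your proposal correctly identifies the content of the lemma (the offset between consecutive $r_j'$ copies must be exactly $k_j\vec{v}$ with $\vec{v}=(0,m)$), and you are right to separate the horizontal and vertical components. However, the argument you give for ruling out other offsets is purely local, carried out at a single junction between $b_{k_j}'$ and $r_{j+1}'$, and you yourself flag this as the weak point. That worry is well-founded: the clean-mapping clause explicitly permits up to one macrotile of cardinal fuzz, so a one-macrotile horizontal shift, or a vertical discrepancy of a single macrotile row, cannot in general be excluded by inspecting one junction — the slack may simply be absorbed into the fuzz budget or redistributed into which $m$-block a given tile neighborhood is assigned to. Your appeal to ``the macrotile grid of $\mathcal{U}$ is fixed once $r_j'$ is placed'' does not by itself forbid the representation function from tolerating such a small shift.

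The paper's proof closes exactly this gap with a pumping (amplification) argument, which is absent from your proposal. Observe that, by maximality, $\eta$ copies of $r_j'$ can be interleaved with $\eta-1$ copies of $b_{k_j}'$ to produce a simulator representative of a tall bar-and-rung assembly $\beta$. Writing the per-step offset as $c\vec{v}$ with $\vec{v}=(x,y)$, the displacement between the $1$st and $\eta$th copy of $r_j'$ is $(\eta-1)c\vec{v}$, which grows without bound in $\eta$. Taking $\eta>3m$: if $x\neq 0$ the accumulated horizontal drift places tiles outside every $m$-block that either represents a tile of $\beta$ or counts as fuzz; if $-cy > -k_jm$ the southernmost represented tile of $\beta$ is forced to correspond to an empty $m$-block; and if $-cy < -k_jm$ tiles again escape the permitted fuzz region. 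Each case contradicts the simulation. It is this amplification that converts a locally-absorbable discrepancy into a global violation, and it is the step your local case analysis at the junction would not supply. The induction scaffolding around it is fine, but without pumping the horizontal and vertical steps of your argument do not go through.
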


\begin{proof}
Notice that for a fixed value of $j$, there is \emph{some} vector $\vec{v} = (x,y)$ with $x,y\in \Z^2$ with $y > 0$ and some constant $c\in \N$ with $c>0$, such that $r_{j}'(p - c \vec{v}) = r_{j+1}'(p)$. For this fixed $j$, we first show that $x=0$ and $y=m$ by contradiction. First, for the sake of contradiction, suppose that, $x\neq 0$. Then, note that since $\tilde{r_{j}}'$ and $\tilde{b}_{k_j}'$ are maximal simulator supertiles, they can bind in a single assembly step to form a supertile representative of the assembly in $\prodasm{T}$ resulting when the supertiles $\tilde{r}_{j}$ and $\tilde{b}_{k_j}$ bind via the $\tau$-strength glue labeled $8$ in Figure~\ref{fig:ladders-tile-set} exposed by the southeasternmost tile of $\tilde{r}_{j}$ and the northernmost tile of $\tilde{b}_{k_j}$.  For the same reason,
$\tilde{r}_{ij}'$ and $\tilde{b}_{k_j}'$ can bind in a single assembly step to form a supertile representative of the supertile in $\prodasm{T}$ resulting when the supertiles $\tilde{r}_{ij}$ and $\tilde{b}_{k_j}$ bind via the $\tau$-strength glue labeled $9$ in Figure~\ref{fig:ladders-tile-set} exposed by the northeasternmost tile of $\tilde{r}_{j}$ and the southernmost tile of $\tilde{b}_{k_j}$. Moreover, for a constant $\eta$ in $\N$, $\eta$ copies of $r_{j}'$ and $\eta - 1$ copies of $b_{k_j}' \in \tilde{b_{k_j}}'$ can bind in $2\eta - 1$ assembly steps to form a supertile representative of the assembly, $\beta$ say, in $\prodasm{T}$ resulting when the $\eta$ copies of $\tilde{r}_{j}$ and $\eta - 1$ copies of $\tilde{b}_{k_j}$ bind via the $\tau$-strength glues labeled $8$ and $9$ in Figure~\ref{fig:ladders-tile-set} appropriately. Call this representative assembly in $\mathcal{U}$ $\beta'$, and for $1\leq z \leq \eta$, let $\rho_z$ denote the $z^{th}$ copy of $r_{j}'$ in $\beta'$ starting from the northernmost copy of $r_{j}'$ in $\beta'$.

Notice that for each $z< \eta$ and $p\in \Z^2$, $\rho_z(p + c\vec{v}) = \rho_{z+1}(p)$. In particular, under the assumption that $x\neq 0$, the $x$-values of the tile locations of $\rho_{z+1}$ are equal to the $x$-values of the tile locations of $\rho_z$ after shifting by $c*x$. Therefore, for $\eta > 3m$, the $x$-values of the tile locations of $\rho_{z+1}$ are equal to the $x$-values of the tile locations of $\rho_z$ after shifting by more than $3m*c*x$. This is a contradiction, since these shifted tiles contain tiles at locations outside of an $m$-block region that either maps to a tile or is part fuzz for our $m$-block representation of $\beta$, which violates the definition of simulation. See Figure~\ref{fig:shifted-x} for an example.

\begin{figure}[htp]
\centering
  \subfloat[][]{%
        \label{fig:shifted-rung}%
        \makebox[2in][c]{ \includegraphics[scale=1]{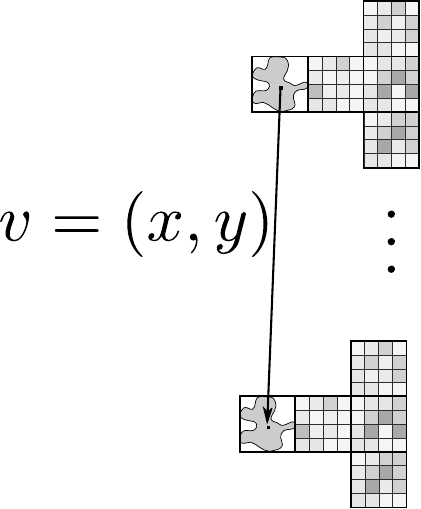}}
        }%
        \quad
  \subfloat[][]{%
        \label{fig:shifted-x-sub}%
        \makebox[2in][c]{ \includegraphics[scale=.4]{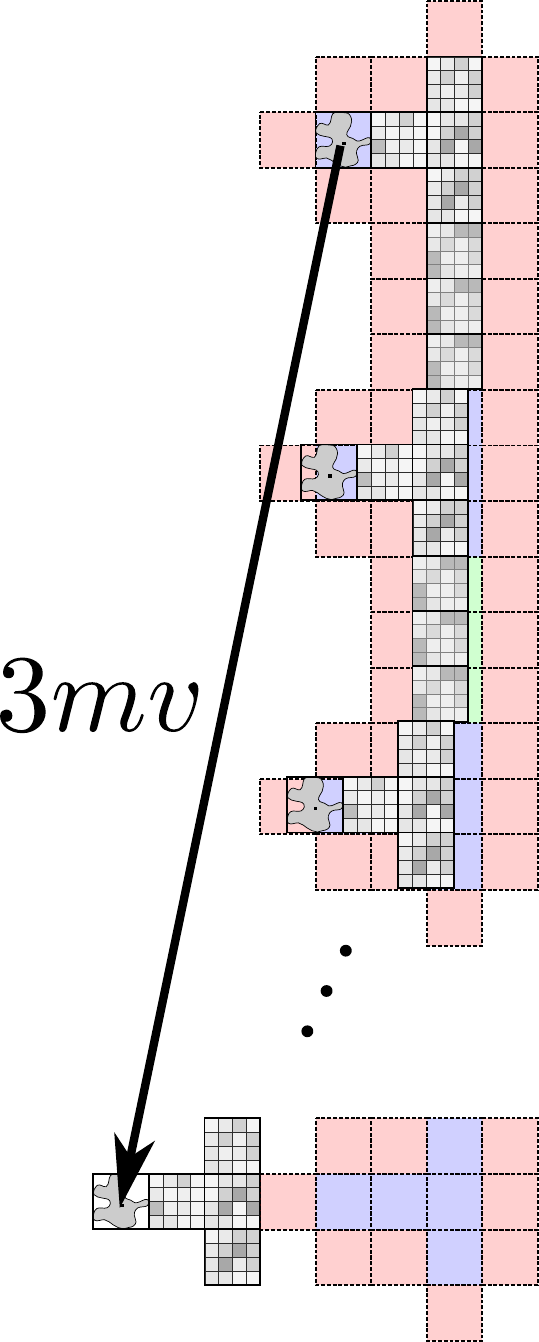}}
        }%
  \caption{(a) Two single rungs of distinct $r_{j}'$ subassemblies of $rb$ are shown. In general $r_{j}'$ will have $\tau$ rungs.  (b) A depiction of $\rho_\eta$ where $\vec{v}=(x,y)$, $x\neq0$, and $\eta= 3m\vec{v}$.}
  \label{fig:shifted-x}
\end{figure}

Therefore, the vector $\vec{v}$ must be of the form $(0,y)$. To complete the proof, it suffices to show that $-c*y = -k_j*m$. Again, we prove this by contradiction. Therefore, for the sake of contradiction, suppose that  $-c*y \neq -k_j*m$. First, if $-c*y > -k_j*m$, then for $\eta \geq 3m$, the $m$-block region of $\beta'$ representing the southernmost tile of $\beta$ must be empty. See Figure~\ref{fig:shifted-y2} for an example in this case. This is a contradiction since the representation function cannot map any empty $m$-block region to a tile. Finally, if $-c*y < -k_j*m$, then for $\eta \geq 3m$, $\rho_\eta$ must contain a tile at a location outside of an $m$-block region that either maps to a tile or is part fuzz for our $m$-block supertile representation of $\beta$, which once again violates the definition of simulation. See Figure~\ref{fig:shifted-y2} for an example in this case. Again, we arrive at a contradiction. Therefore, we see that $-c*y = - k_j*m$. Hence, we may choose the vector $\vec{v}$ to be $(0,m)$ and the constant $c$ to be $k_j$.

\begin{figure}[htp]
\centering
  \subfloat[][]{%
        \label{fig:shifted-y2}%
        \makebox[2in][c]{ \includegraphics[scale=.4]{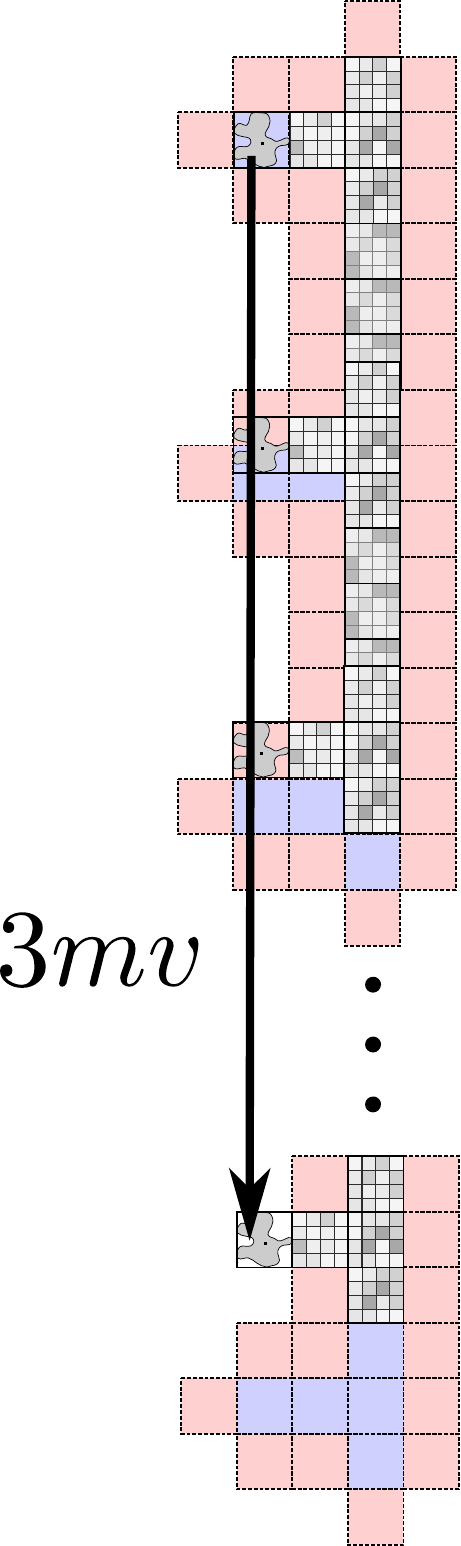}}
        }%
        \quad
  \subfloat[][]{%
        \label{fig:shifted-y1}%
        \makebox[2in][c]{ \includegraphics[scale=.4]{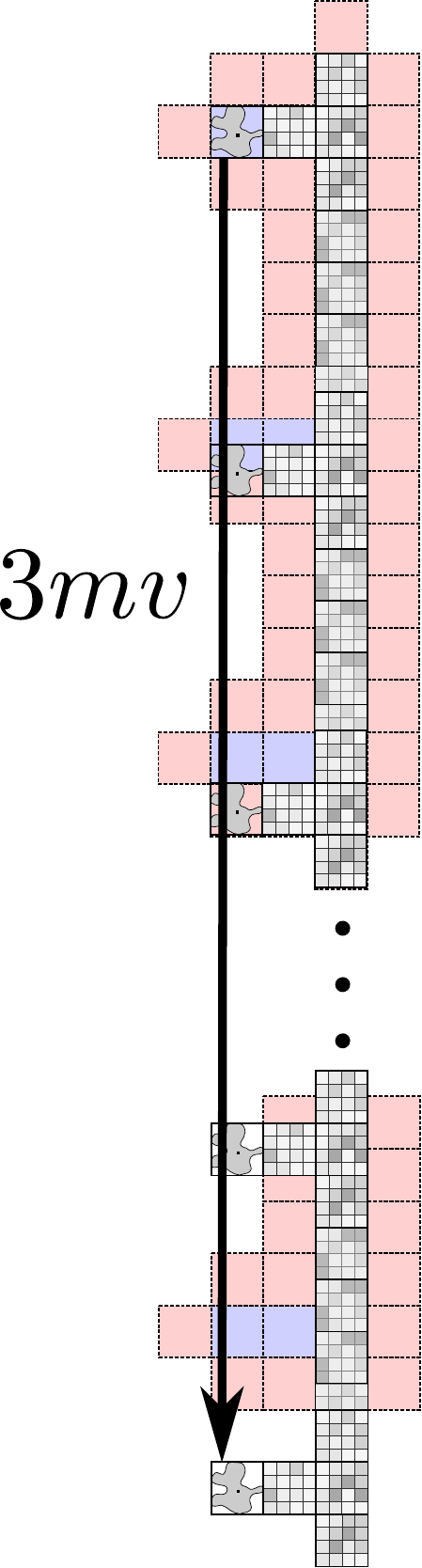}}
        }%
  \caption{(a) A depiction of $\rho_\eta$ where $\vec{v}=(0,y)$, $-c*y > -k_j*m$, and $\eta= 3m\vec{v}$. (b) A depiction of $\rho_\eta$ where $\vec{v}=(0,y)$, $-c*y < -k_j*m$, and $\eta= 3m\vec{v}$. Note that in this figure, as in Figure~\ref{fig:shifted-rung}, only single rungs are depicted. In general, each $r_{ij}'$ subassembly of $\rho_\eta$ will have $\tau$ rungs.}
  \label{fig:shifted-y}
\end{figure}

\end{proof}
} %
\fi

\section{Simulating Arbitrary Lower Temperature Ladder Systems}\label{sec:simLadder}
\vspace{-10pt}

We now prove that, even though higher temperature systems can only strongly simulate lower temperature ladder systems if a uniform mapping exists between the temperatures, a uniform mapping is not required for (standard) simulation.

\ifabstract
\later{
\section{Proof from Section~\ref{sec:simLadder}: Simulating Arbitrary Lower Temperature Ladder Systems}

}%
\fi

\both{
\begin{theorem} \label{thm:simLadder2}
For $\tau, \tau' \in \mathbb{N}$ where $1<\tau<\tau'$, let $\mathcal{T}$ be the ladder system at temperature $\tau$. Then, there exists a system $\mathcal{S}$ at temperature $\tau'$ which simulates $\mathcal{T}$.
\end{theorem}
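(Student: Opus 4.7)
My plan is to construct a simulator $\mathcal{S} = (S, S_\mathcal{T}, \tau')$ that standardly simulates the ladder system $\mathcal{T}$, exploiting a key freedom of weak modeling that strong simulation lacks: the simulator may select a specific combining partner for each starting representation. The construction uses non-uniform binding strengths on rung macrotiles together with a decoration-limit mechanism that prevents the ``over-binding'' configuration at the heart of the impossibility proof.

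Concretely, I would choose strengths $s_1 = \cdots = s_{\tau - 1} = 1$ and $s_\tau = \tau' - \tau + 1$, so that $\sum_{i=1}^\tau s_i = \tau'$ while the maximum $(\tau - 1)$-subset sum is $\tau' - 1$. Each rung macrotile in $\mathcal{S}$ has $\tau + 1$ variants: one undecorated variant contributing no strength at the half-ladder interface, and $\tau$ decorated variants, where variant $i$ exposes a strength-$s_i$ glue with a type-$i$ label that binds only its counterpart on the opposing half-ladder. A column-based licensing mechanism, originating from a distinguished ``cap'' tile that occurs exactly once per half-ladder, allocates exactly one consumable token per decoration type; rungs can only become decorated by consuming an available token, enforcing at most one decorated rung of each type per half-ladder representation even under hierarchical 2HAM combinations.

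I would then verify the three standard-simulation conditions. Equivalent production is immediate since the representation function $\tilde{R}$ forgets decoration information, so each $\mathcal{T}$-producible half-ladder has an $\mathcal{S}$-representation for every valid decoration assignment. The ``follows'' condition holds because any $\mathcal{S}$-combination achieving total binding $\geq \tau'$ must match all $\tau$ decoration types at $\tau$ aligned rung pairs, which forces $\geq \tau$ aligned simulated rungs---a valid $\mathcal{T}$-combination. Crucially, any $(\tau - 1)$-rung alignment contributes at most $\tau' - 1$, precisely ruling out the over-binding that drives Theorem~\ref{lem:StrongImpossible}. For weak modeling, given $l \to^1 l + r$ in $\mathcal{T}$ and any $l' \in \prodasm{\mathcal{S}}$ with $\tilde{R}(l') = l$, I would grow $l'$ by allocating any unused decoration tokens onto rungs of $l$ that align with rungs of $r$ under the combination offset, then pair the resulting $l''$ with a suitably decorated $r'' \in \prodasm{\mathcal{S}}$ representing $r$.

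The main obstacle is completing weak modeling in the edge case where $l'$ has already consumed all $\tau$ of its tokens at rungs of $l$ that do not align with any rung of $r$ under the given combination's offset. Since 2HAM growth cannot remove tiles, $l'$'s token placements are frozen, so the combining partner $r''$ must accommodate them. I would resolve this by engineering the licensing mechanism so that a consumed token's position is ``absorbable'' into \emph{some} valid alignment with $r$: either by constraining token placement to columnwise-fixed offsets that lie in every valid partner's aligned set, or by including in the macrotile interior (at sufficiently large scale $m$) an adapter substructure that allows $r''$ to be produced with matching-type decorations at precisely $l'$'s consumed positions---possible because the ladder system admits half-ladders with rungs at arbitrary positions. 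The detailed tile-level design for enforcing this property parallels the adapter constructions used in prior 2HAM intrinsic universality results.
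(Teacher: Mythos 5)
Your proposal shares the high-level goal with the paper (exploit nondeterminism and the existential freedom in weak modeling to sidestep the strong-simulation barrier), but the specific mechanism you propose has a gap that you yourself identify as the ``main obstacle,'' and your sketched fix does not resolve it.

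The problem is the frozen-token case. With undecorated rungs contributing \emph{zero} binding strength, consider a producible $l'\in\prodasm{\mathcal{S}}$ whose half-ladder has (say) $100$ rungs and whose $\tau$ decoration tokens are all consumed at rungs $1,\dots,\tau$. In $\mathcal{T}$ the represented half-ladder $l$ can combine with a right half-ladder $r$ whose $\tau$ rungs align only with $l$'s rungs at positions $50,\dots,50+\tau-1$. Any $r''$ with $\tilde{R}(r'')=r$ has rungs \emph{only} at those positions, so it cannot possibly carry decorations at positions $1,\dots,\tau$ --- placing anything there would change what $r''$ represents. In the overlap, $l'$'s rungs are all undecorated (its tokens are spent) and hence contribute strength $0$, so $l'$ and every $r''$ bind with strength $0$. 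Nor can $l'$ grow into a usable $l''$: tokens are exhausted and tiles cannot be removed. Thus weak modeling (Definition~\ref{scott-defn:alt-equiv-dyanmic-s-to-t-weak}) fails outright. Neither of your proposed patches closes this: ``columnwise-fixed offsets'' would require a single set of positions lying in the overlap of \emph{every} potential partner, which does not exist once $l$ has more than $\tau$ rungs; and the ``adapter substructure'' inside macrotiles cannot create decorated rung blocks on $r''$ at positions where $r$ has no rungs, because that would violate the representation function.

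The paper's construction differs in two ways that dissolve this obstacle. First, \emph{every} aligned rung pair contributes base strength $1$ (via the $H$ glue), so no producible simulator half-ladder is ever ``frozen out'' of binding. Second, only a \emph{single} strong pairing (strength $\tau'-\tau+1$) is needed to reach $\tau'$, and it is a pairing between a ``special'' rung on one side and an ordinary rung on the other. The four half-ladder subtypes $A,B,C,D$ with the cyclic special-rung assignment $\{(A,C),(B,A),(C,D),(D,B)\}$, together with the top/bottom backbone split enforcing at most one special rung per half-ladder, guarantee that the partner $\tb'$ can always be chosen with its special rung positioned anywhere in the overlap, pairing against one of $\ta'$'s abundant ordinary rungs. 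Your scheme inverts this by requiring $\tau$ distinct \emph{matching} decorations, all aligned --- which is both unnecessary (one matched special pair suffices once base strength is $1$) and, as shown above, impossible to guarantee for all producible $l'$.
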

}%

At a high-level, the construction which proves Theorem~\ref{thm:simLadder2} works by leveraging nondeterminism and the fact that for each pair of supertiles $\ta,\tb \in \prodasm{\calT}$ which are able to $\tau$-stably combine, for each $\ta' \in \prodasm{\calS}$ where $\tilde{R}(\ta') = \ta$, there simply must exist some $\tb' \in \prodasm{\calS}$ where $\tilde{R}(\tb') = \tb$ and $\ta'$ and $\tb'$ can $\tau'$-stably combine, but there may be many other $\tb'' \in \prodasm{\calS}$ where $\tilde{R}(\tb'') = \tb$ such that $\ta'$ and $\tb''$ cannot $\tau'$-stably combine.  Specifically, for each side of half-ladder, there are multiple types which can form, each with exactly $0$ or $1$ ``special'' rungs.  (See Figure~\ref{fig:ladders-standard-simulation} for a schematic example.)  All rungs on a left half-ladder can combine with all rungs on a right half-ladder with strength $1$, but whenever rungs of the same type combine, they do so with strength $\tau' - \tau + 1$.  The formation of all half-ladder supertiles guarantees that any pair of oppositely facing half-ladders can have no more than one pair of rungs with matching types, and for each half-ladder with $\tau$ or more rungs there exists a producible oppositely facing half-ladder with rungs in matching locations and one of them matching in type.  (Note that $\calS$ simulates at scale factor $2$.)  In such a way, $\tau$ rungs in matching locations of two oppositely facing half-ladders all guaranteed to be sufficient and necessary to form a ladder, and all possible half-ladder and ladder representing supertiles are producible, making $\calS$ correctly simulate $\calT$.

\ifabstract
\later{
\begin{proof} (Proof of Theorem~\ref{thm:simLadder2})
To prove Theorem~\ref{thm:simLadder2}, let $\tau,\tau' \in \N$ be arbitrary temperatures such that $1 < \tau < \tau'$, let $\mathcal{T} = (T,\tau)$ be the ladder system at temperature $\tau$ (see Figure~\ref{fig:ladders-tile-set} for the tile set $T$), let $\mathcal{S} = (T',\tau')$ be the system which simulates it at temperature $\tau'$, and let $R: B^{S}_2 \rightarrow T$ be the representation function mapping blocks of tiles from $T'$ to tiles of $T$. We will now show how to construct $T'$ such that $\mathcal{S}$ simulates $\calT$.  First, we will note that the scale factor of the simulation will be $2$, i.e. each tile of $T$ will be represented by a $2 \times 2$ block of tiles from $T'$.  Then, for each tile type $t \in T$, we will create $4$ tile types for $T'$ so that they can form a $2 \times 2$ square, and design each pair of matching glues within the interior of each block to be unique in all of $T'$ and with strengths set as follows. Each east or west glue will be of strength $\tau'$, each western pair of north and south glues will be of strength $\lceil \tau'/2 \rceil$, and the eastern pair of north south glues will be of strength $\lfloor \tau'/2 \rfloor$.   (See Figure~\ref{fig:2x2-block} for an example.) Any glue which was on the exterior of $t$ is now represented on the corresponding side of the $2 \times 2$ block, but instead split into two glues, one whose strength is $\lceil \tau'/2 \rceil$ and one $\lfloor \tau'/2 \rfloor$ (with the convention that the stronger is on the left or top).  The one exception is the glues at the end of rungs, which remain as single strength-$1$ glues on the bottom tile at the end of each rung for now.  Note that because of the strengths of the glues on their interiors, each block must form by the top two tiles and bottom two tiles each first combining in pairs, and then those two pairs can combine to form the full block.  We will now describe the further modifications to $T'$.

\begin{figure}[htp]
\centering
  \subfloat[][A tile/block not at the end of a rung]{%
        \label{fig:2x2-block}%
        \includegraphics[width=2.5in]{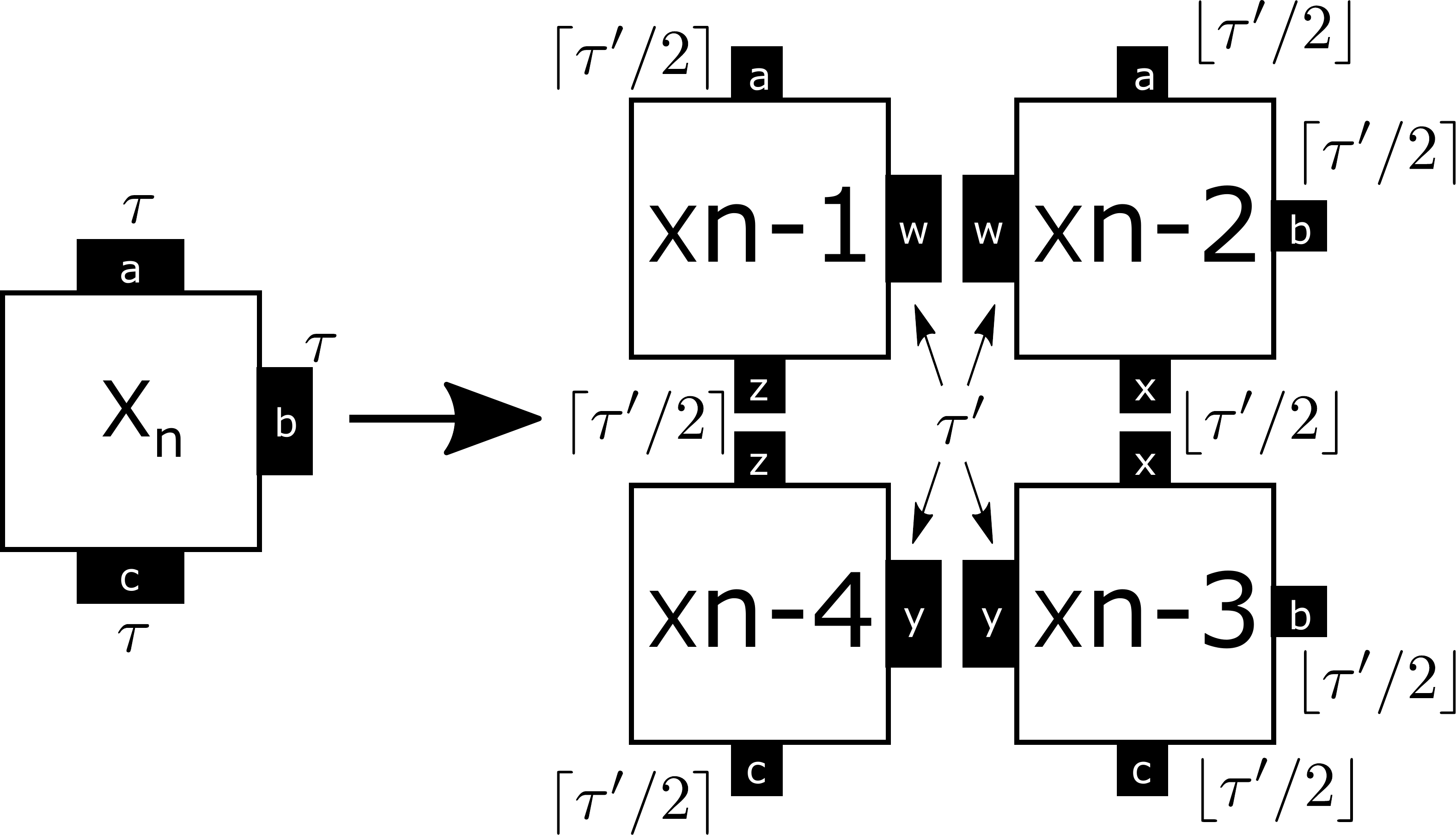}}
        \quad
  \subfloat[][A tile/block at the end of a rung]{%
        \label{fig:2x2-block-rung}%
        \includegraphics[width=2.5in]{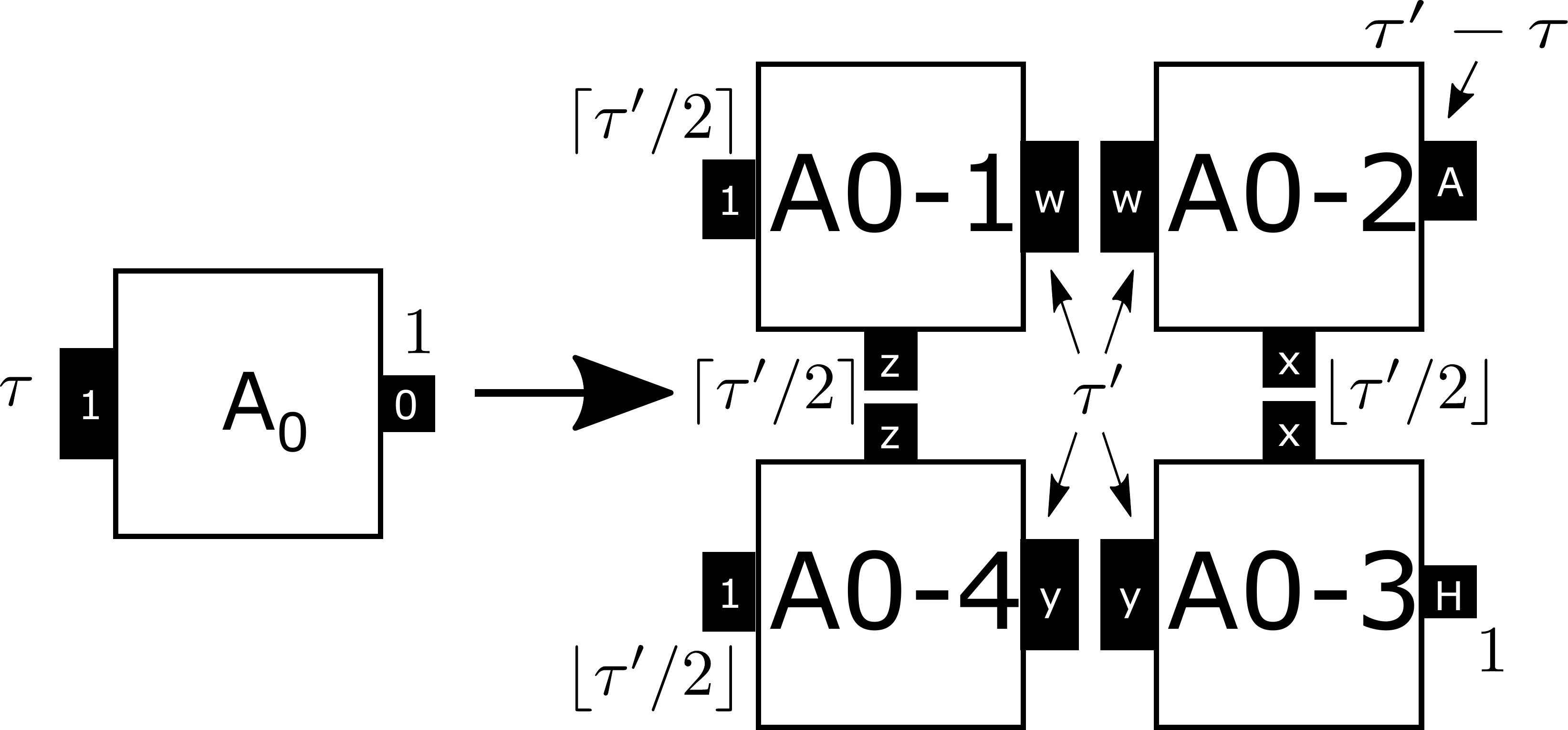}}
  \caption{Example creation of the initial $2 \times 2$ blocks of tiles in $T'$ (right of each) from a single tile in $T$ (left of each).  Glue strengths are labeled outside of glue boxes. Note that the $w$, $x$, $y$, and $z$ glues generically represent glues which are unique to each block and also to each location within that block.}
  \label{fig:no-strict}
\end{figure}

For both the left half-ladder tiles and right half-ladder tiles created for $T'$, we will make $2$ subsets of tiles by creating $2$ distinct copies of each of those tiles, for a total of $4$ subsets.  We will call the two subsets of tiles for left half-ladders the $B$ and $C$ sets, and those for the right half-ladders the $A$ and $D$ sets.  Each set will be designed so that only tiles from the same set can combine with each other within a half-ladder (i.e. no $A$ tile can bind to a $D$ tile, and no $B$ tile can bind to a $C$ tile, in the same half-ladder) by making each to have distinct glues unique to the type of that half-ladder at all locations other than those at the ends of rungs.  Now we will modify the $2$ tiles which form the ends of each rung of each subset of types.  As shown in Figure~\ref{fig:2x2-block-rung}, we make the bottom tile of the end of each rung expose a strength $1$ glue of type $H$.  Then, for each half-ladder type (i.e. $A$, $B$, $C$, or $D$) we make the top tile at the end of each rung expose a strength $\tau' - \tau$ glue of type matching the half-ladder type ($A$, $B$, $C$, or $D$).  We will thus call the rung type the same as the label of that top glue.

The next modification for each subset will be to create $4$ new tiles for another $2\times 2$ block which will be at the base of a rung, i.e. as a leftmost (rightmost) block of a left (right) half-ladder which attaches to blocks above, below, and to the right (left) to initiate a rung.  The tiles of each such block will be formed similarly to the other tiles of its group, except that for this block, the type of rung which can attach will not match the type of the half-ladder.  We will call this a \emph{special} rung for the half-ladder, and the following listing shows the pairs consisting of first the type of half-ladder, and then second the type of the special rung: $\{(A,C),(B,A),(C,D),(D,B)\}$.  The glues on the side of this $2\times 2$ block facing the rung are unique to the type of rung to attach, and two new blocks are created for each rung so that the rung has the correct type by ending with an exposed $H$ glue on the bottom and a glue of the matching type on the top.  Figure~\ref{fig:ladders-standard-simulation} shows how a half ladder of each type could contain rungs solely of its type or also include one of the special type designated for it.

The final modification to each of the subsets of half-ladder tiles consists of once again making a duplicated copy of subsets of tiles.  This time, for each type of half-ladder, we make two copies of each of the tile types which form the blocks of its ``backbone'' (i.e. the vertical column on the left side of a left-half ladder or the right side of a right half-ladder), except for the block which connects to the special rung.  For instance, for the $A$-type half-ladder tiles that form the backbone blocks (excluding the block attaching to the special rung), we make one copy which we will refer to as the \texttt{top}, and one we refer to as the \texttt{bottom}.  We augment each glue on the north and south sides of the tiles which are at the boundaries of those blocks by adding a $U$ to those of the top set and a $D$ to those of the bottom set.  This essentially ensures that there are two independent sets of backbone blocks for each type of half-ladder which are unable to attach to each other.  Finally, we mark the southern glues of the southern tiles of the backbone block which attaches to the special rung with a $D$ and the northern glues of the northern tiles of that block with a $U$.  (See Figure~\ref{fig:ladder-tile-sets-A-and-B} for an example of the tile types for the $A$ and $B$ type half-ladders.  Special rungs are located in the middle of each example half-ladder formation.)

The block representation function $R$ maps all blocks of tiles of $T'$, or portions of blocks containing the top $2$ tiles, to the tiles in $T$ which can appear in corresponding locations of (half-)ladders in $\calT$.  This means that all blocks, or top halves of blocks, of each type of half-ladder, irrespective of being in the \texttt{top} or \texttt{bottom} sets or part of special rungs, map to the base tiles of $T$ from which each originated as a copy (possibly after multiple steps of copying).  For instance, every block in $\mathcal{S}$ which is in a backbone of a left-ladder and also connects to a rung (or can connect to a rung), maps to tile type $A_2$ of $T$ (see Figure~\ref{fig:ladders-tile-set}).  This includes blocks of $B$ and $C$ types, those which connect to $B$ and $C$ type rungs, respectively, in both the \texttt{top} and \texttt{bottom} sets of each type, as well as to special rungs of types $A$ and $D$, respectively.

}%
\fi

\begin{figure}[htp]
\begin{center}
\includegraphics[width=3.0in]{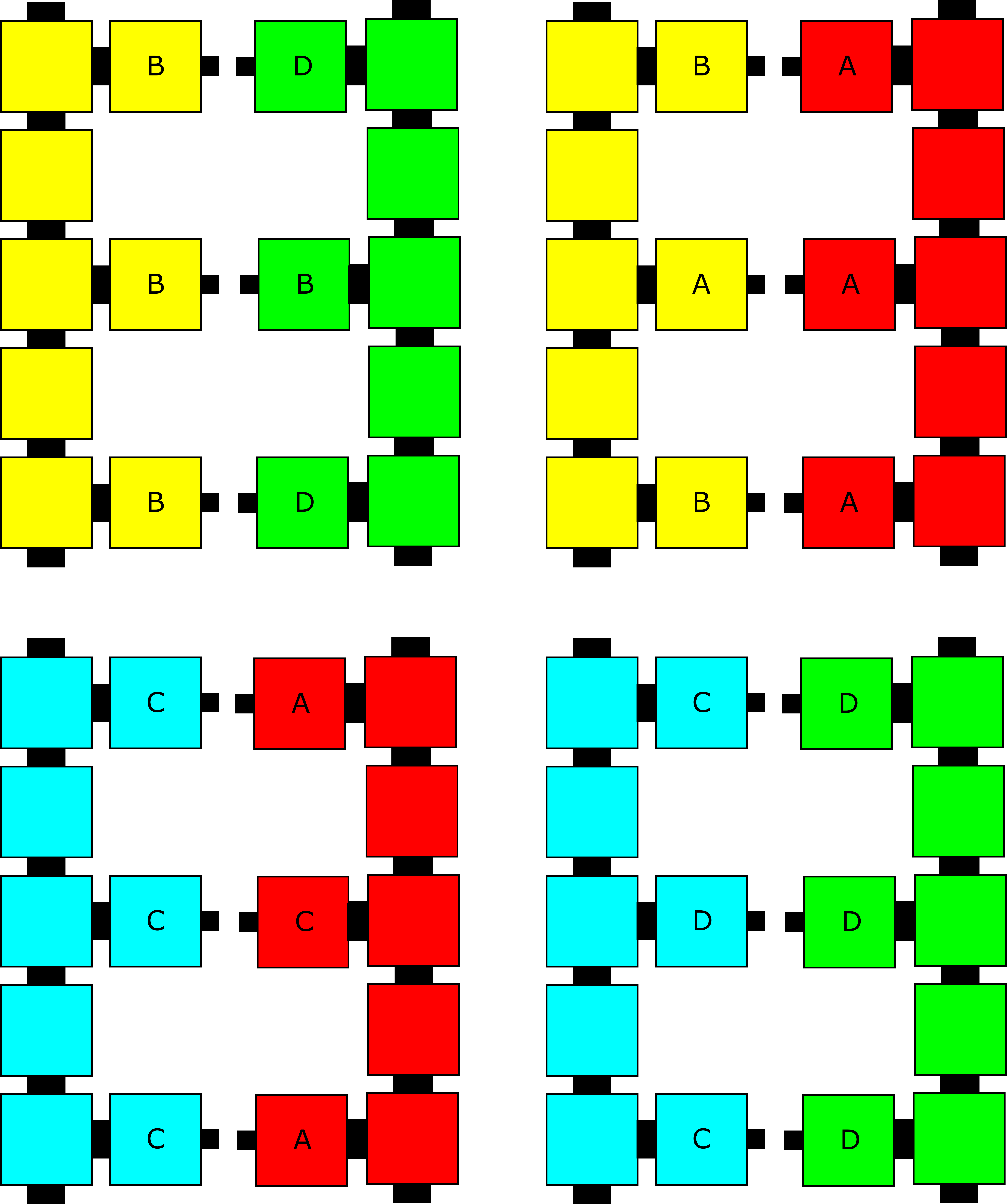}
\caption{Intuitive sketch of the set of half-ladders possible in the high temperature system $\calS$ which simulates a low temperature ladder system $\calT$, shown without scaling.  Yellow: $B$-type half-ladders, Blue: $C$-type half-ladders, Red: $A$-type half-ladders, Green: $D$-type half-ladders. Each type of half-ladder is shown once with no special rung and once with one special rung (the most possible), and each is paired with the type of half-ladder with which it could bind if each had at least $\tau$ rungs in matching locations (after translating appropriately).  Note that the spacing and ordering of rungs can be arbitrary, and also that spacing tiles are left out for compactness, so rungs are closer together and shorter than they would actually be.  All pairs of rungs of different types bind with each other with strength $1$ (due to the $H$ glues on their bottom tiles - not shown), and all pairs of rungs of the same type bind with strength $\tau' - \tau + 1$ due to the sum of the $H$ glue (strength 1) and ``type'' glue (strength $\tau' - \tau$) bindings.}
\label{fig:ladders-standard-simulation}
\end{center}
\end{figure}

\ifabstract
\later{
We now argue why the tile set $T'$, thus constructed, when used in system $\mathcal{S}$ at temperature $\tau'$, correctly simulates $\calT$.  The first key observation is that, in any producible assembly $\alpha' \in \prodasm{\mathcal{S}}$ which represents some portion of a half-ladder, i.e. $\tilde{R}(\alpha') = \alpha$ where $\alpha \in \prodasm{\calT}$ and $\alpha$ is a supertile containing only tiles from either a left half-ladder or a right half-ladder, $\alpha'$ can contain either $0$ or $1$ special rungs. Since the tile types for each half-ladder type are constructed similarly, with the main variance being the type of special rung associated with each half-ladder type, without loss of generality we will discuss the formation of left half-ladders of type $B$.  As shown in Figure~\ref{fig:ladder-B-tile-set}, the tiles for $B$-type ladders can form blocks, or the top or bottom halves of blocks, independently, but no single tile of one block can combine with any portion of a different block because all glues external to all blocks are of strength $< \tau'$.  Portions of blocks must be at least size $2$ before they can interact with other (possibly partial) blocks on their north or south, and must be fully formed before they can interact with blocks on their east or west.  Because half-blocks can interact with other blocks and half-blocks, we will now analyze how they are handled.

The representation function $R$ is defined so that it maps any complete block or any top pair of tiles from a block over $T'$ to the corresponding tile in $T$.  That means that any supertile in $\calS$ which contains just the top half of a block maps to a supertile in $\calT$ which has a tile in the corresponding location, but a supertile in $\calS$ which contains just the bottom half of a block maps to a supertile in $\calT$ which does not have a tile in the corresponding location.  Note that only supertiles representing half-ladders of the same side can potentially bind with each other along interior boundaries of blocks due to the fact that the only glues that left half-ladder tiles and right half-ladder tiles have in common are at the ends of rungs, so without loss of generality, we'll focus on left half-ladder supertiles.  Whenever two left half-ladder supertiles in $\calS$ combine using the exterior glues of blocks, then the mapping of the supertiles in $\calS$ to those in $\calT$ is straightforward and doesn't require additional discussion at this point.  However, when two supertiles combine by using the interior glues of a block, then the first thing to note is that those must be the north/south glues of the interior of the block because the only way any portion of a block can be attached to a block on its east or west is for it to be a complete block since to bind with strength $\tau'$ on those sides requires both the bottom and top glues of that side which can only be present if both halves of the block have combined.  Then, we'll refer to the top supertile as $\ta'$ and the bottom as $\tb'$ and note that $\ta'$ maps to a supertile in $\calT$ with a tile in the location corresponding to the partial block, but $\tb'$ maps to a supertile in $\calT$ with that position empty, so the definition of simulation holds.  The additional remaining cases occur when either $\ta'$ or $\tb'$ has an entire block at the location where they could potentially combine and the other has half of a block.  If both have the top halves of their blocks, then they each map to supertiles in $\calT$ which have tiles at that location, so their inability to bind in $\calS$ again matches the fact that they wouldn't be able to bind in $\calT$.  However, if $\ta'$ has the entire block and $\tb'$ has only the bottom half of the block, $\ta'$ would map to a supertile in $\calT$ which has a tile in that location but $\tb'$ would map to one which does not, and thus the two supertiles in $\calT$ that they map to would be able to combine, but $\ta'$ and $\tb'$ would not.  However, by the definition of simulation, $\calS$ must weakly model $\calT$, and by this definition it must simply be the case that there exists some supertile which maps to the same thing as $\tb'$ to which $\ta'$ can bind (and vice versa), and this clearly exists since the supertile which consists of $\tb'$ minus the bottom half of the block is clearly producible because that portion of the block must not be attached to anything other than the block to its south, and the two tiles to which it is bound are stably bound to each other regardless of its attachment (via their shared $\tau'$-strength bond), and thus whatever assembly sequence produced $\tb'$ could be altered to produce $\tb''$ which is exactly $\tb'$ without this northmost half block by simply omitting the step where it binds.  Similarly, the $\ta''$ necessary to bind with $\tb'$ is also producible (i.e. it is the same as $\ta'$ but without the bottom half of its souther block).  Therefore, $\calS$'s simulation of $\calT$ is preserved regardless of the possible assembly sequences.

Because of the above argument, from this point we will only talk about supertiles in $\calS$ which are composed of completed blocks.

While forming half-ladders (i.e. before they are parts of half-ladders which combine to form ladders), blocks of a given type of half-ladder which are part of the \texttt{top} set can combine only with other blocks of that type and from the \texttt{top} set, or the block of that type which attaches to the special rung for that type.  (This holds similarly for those of the \texttt{bottom} set.)  We will now discuss those of the \texttt{top} set and note that the same holds for the \texttt{bottom} set.  The blocks of the \texttt{top} set of a $B$-type left half-ladder can combine to form arbitrarily tall backbones and with an arbitrary number of rungs of type $B$ with arbitrary spacing, which obviously models what the left half-ladder types of $\calT$ can form.  Additionally, a left half-ladder of type $B$ with $0$ or more type-$B$ rungs may combine with a type-$B$ backbone block which can connect to a special rung (which would be of type $A$).  Note that because this backbone block has \texttt{top} labeled glues only on its north side, such a block can only attach to the south of \texttt{top} left half-ladder.  However, \texttt{bottom} left half-ladders can form just as the \texttt{top}, and any can attach to the south side of the backbone block which attaches to the special rung.  In this way, left half-ladders of type $B$ can form with any possible number and spacing of rungs in the three following patterns (from bottom to top, and letting $B_{\texttt{bottom}}$ ($B_{\texttt{top}}$) denote a portion of a $B$-type half-ladder consisting only of blocks of the \texttt{bottom} (\texttt{top}) set): $B_{\texttt{bottom}}^*$, $B_{\texttt{bottom}}^*AB_{\texttt{top}}^*$, or $B_{\texttt{top}}^*$, which is essentially any pattern of type-$B$ rungs with either $0$ type-$A$ rungs, or exactly $1$ type-$A$ rung located at any particular rung location.  Furthermore, for each such half-ladder type, the analogous situation holds, with each half-ladder type being able to form with $0$ or $1$ special rungs and the other rungs in arbitrary number and at arbitrary spacing.  Since all left half-ladder blocks in $\calS$ map to left half-ladder tiles in $\calT$, regardless of their type, and vice versa for right-half ladders, $\calS$ correctly simulates $\calT$ in terms of half-ladder production and dynamics.

Now, we analyze the abilities of each possible type of half-ladder to combine with other half-ladders to form ladders. The key feature to note about the definition of simulation is its requirement that $\calS$ weakly models $\calT$.  Intuitively, what this requires is that whenever the simulator $\calS$ produces a supertile $\ta'$ which maps to some supertile $\ta$ in $\calT$, if $\ta$ can combine with $\tb$ to form $\tg$ in $\calT$, then there exists some $\tb'$ in $\calS$ which maps to $\tb$ and which $\ta'$ (or something that $\ta'$ can grow into while still mapping to $\ta$) can combine with to form $\tg'$ in $\calS$, and $\tg'$ maps to $\tg$.  The point to note is that there simply must \emph{exist} some such $\tb'$, and it need not be the case that \emph{any} arbitrary $\tb'$ (or something that any arbitrary $\tb'$ can grow into) that maps to $\tb$ can combine with a given $\ta'$.  (Essentially, each $\ta'$ must have some mate $\tb'$, but there may be many that it can never combine with. This is the key distinction between weakly models and strongly models, and thus between simulation and strong simulation.)  Therefore, we must simply show that (1) for every producible half-ladder $\ta'$ in $\calS$ with $\tau$ or more rungs, given that it maps to $\ta$ in $\calT$, any oppositely facing half-ladder $\tb$ in $\calT$ which can combine with $\ta$ to make a ladder has a corresponding oppositely facing half-ladder $\tb'$ in $\calS$ which $\ta'$ can $\tau'$-stably bind with it to form a ladder, and (2) no producible half-ladder in $\calS$ with $< \tau$ rungs can combine with any producible oppositely facing half-ladder to form a ladder.  (See Figure~\ref{fig:ladders-standard-simulation} for a schematic depiction of the following argument.)  Situation (1) can be easily shown by discussing the case of $B$-type half-ladders and noting that the argument is analogous for all other types.  There exist two main scenarios in this case.  Either (a) the $B$-type left half-ladder has $0$ special rungs of type $A$, or (b) it has $1$.  In case (a) where it has $0$, then for any right half-ladder in $\calT$ such that the left-half ladder that this one maps to can bind with it, there exists a producible $D$-type right half-ladder which maps to that in $\calT$ and which has exactly one $B$-type rung in a position which will align with a rung in the left-half ladder.  Since the half-ladders map to half-ladders with $\tau$ or more matching rungs (at the same relative offsets in $\calT$ and $\calS$), that means that at least $\tau-1$ pairs of rungs come together such that the left is type $B$ and the right is type $D$, and bind via their matching strength-$1$ $H$ glues, for a sum of $\tau-1$ strength bonding.  Additionally, one of the matching pairs is of two $B$-type rungs which will combine via their strength-$1$ $H$ glues and their strength $(\tau' - \tau)$ $B$ glues, for an overall binding strength of $(\tau' - \tau) + 1 + (\tau - 1) = \tau'$, and thus they can $\tau'$-stably bind.  In case (b), the $B$-type left half-ladder has at least $\tau-1$ type-$B$ rungs and one type-$A$ rung, and similar to the last argument but symmetric for the case of the $A$-type right half-ladder, there will exist a type $A$ right half-ladder which can combine along $\tau-1$ pairs of rungs (of types $B$ on the left and $A$ on the right) with strength $1$ and one pair of type-$A$ rungs for a total binding strength of $\tau'$.

The final thing to show is that ladders with fewer than $\tau$ matching rungs cannot form in $\calS$, since they cannot form in $\calT$. This follows directly from the observation that no half-ladder can have more than one special rung, and that for each possible half-ladder, whether or not it has a special rung, there exists no oppositely facing half-ladder with more than one rung which matches the types of any of its rungs.  (This can be seen in Figure~\ref{fig:ladders-standard-simulation}, since any additional rungs which could be on any half-ladder of a given type can only be of that type.) Therefore, any half-ladder which has $\tau'' < \tau$ rungs can find an oppositely facing half-ladder which has at most $\tau''$ rungs which can match with it, and at most one of the matching pairs can be of the same type.  Therefore, the maximum amount of binding strength between two half-ladders with $\tau''$ matching rungs is $\tau'' - 1$ from the pairs of rungs of different types and $(\tau' - \tau) + 1$ for the sum of the two glues on the matching pair of rungs, for a total of $\tau'' -1 + \tau' - \tau + 1 = \tau'' + \tau' - \tau < \tau + \tau' - \tau = \tau'$.  Thus, such a pair of half-ladders can bind with strictly less than $\tau'$ strength, meaning they cannot $\tau'$-stably combine.

We have thus shown that for any pair of producible assemblies $\ta',\tb' \in \prodasm{S}$ such that $\ta \rightarrow_{\mathcal{S}}^1 \tb$, $\tilde{R}(\ta) \rightarrow_\mathcal{T}^{\leq 1} \tilde{R}\left(\tb\right)$, since supertiles in $\calS$ can only either combine along block boundaries or the horizontal centers of blocks, and either way their combinations are followed by the assemblies in $\calT$ to which they map. Therefore, $\calT$ follows $\calS$.  We have further shown that whenever there exists a set of supertiles $\ta,\tb,\tg \in \prodasm{\calT}$ such that  $\ta$ can $\tau$-stably combine with $\tb$ to form $\tg$, then for all $\ta' \in \prodasm{\calS}$ such that $\tilde{R}(\ta') = \ta$, there exists some $\tb' \in \prodasm{\calS}$ where $\tilde{R}(\tb') = \tb$ such that either $\ta'$ or something that $\ta'$ can grow into while still representing $\ta$ can combine with $\tb'$ to form a $\tg' \in \prodasm{\calS}$ such that $\tilde{R}(\tg') = \tg$.  Therefore, $\calS$ weakly models $\calT$.  This proves that $\calS$ simulates $\calT$ at scale factor $2$ under the block representation function $R$ for arbitrary $1 < \tau < \tau'$.

\begin{figure}[htp]
\centering
  \subfloat[][Tile set for `$B$'-type left-half-ladders]{%
        \label{fig:ladder-B-tile-set}%
        \includegraphics[height=6.0in]{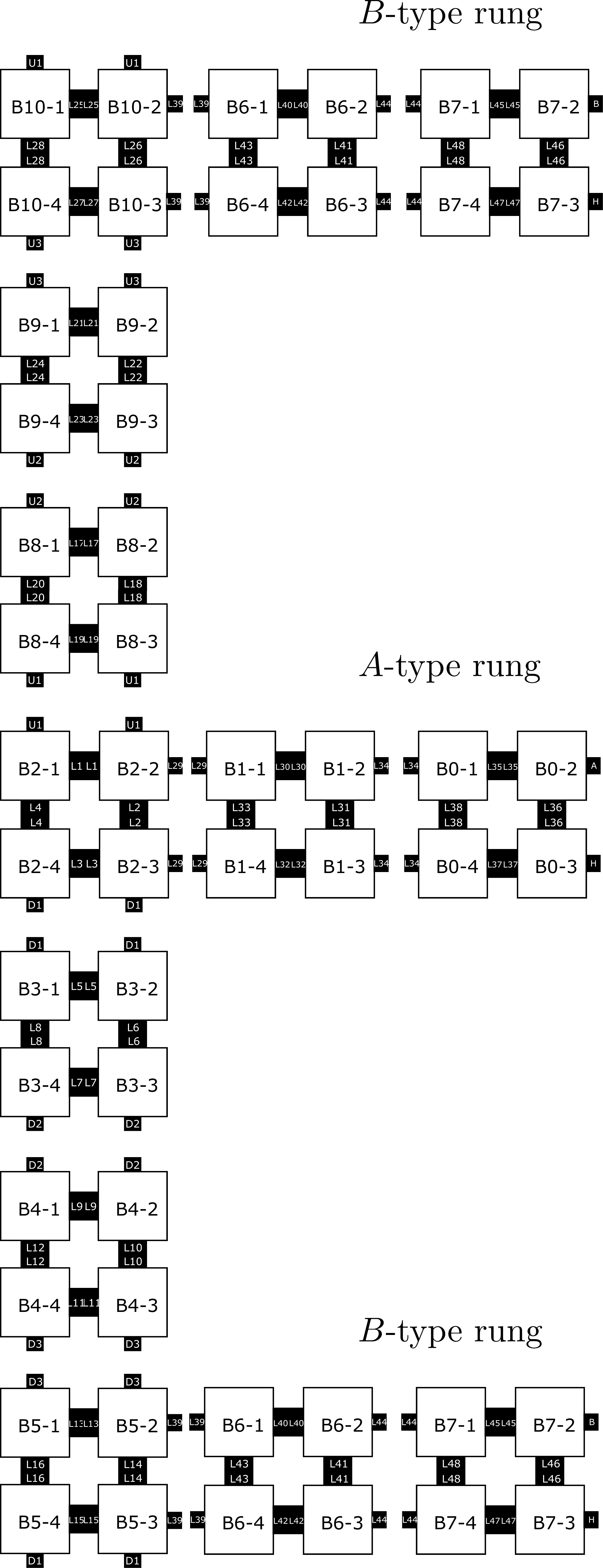}}
        \quad
  \subfloat[][Tile set for `$A$'-type right-half-ladders]{%
        \label{fig:ladder-A-tile-set}%
        \includegraphics[height=6.0in]{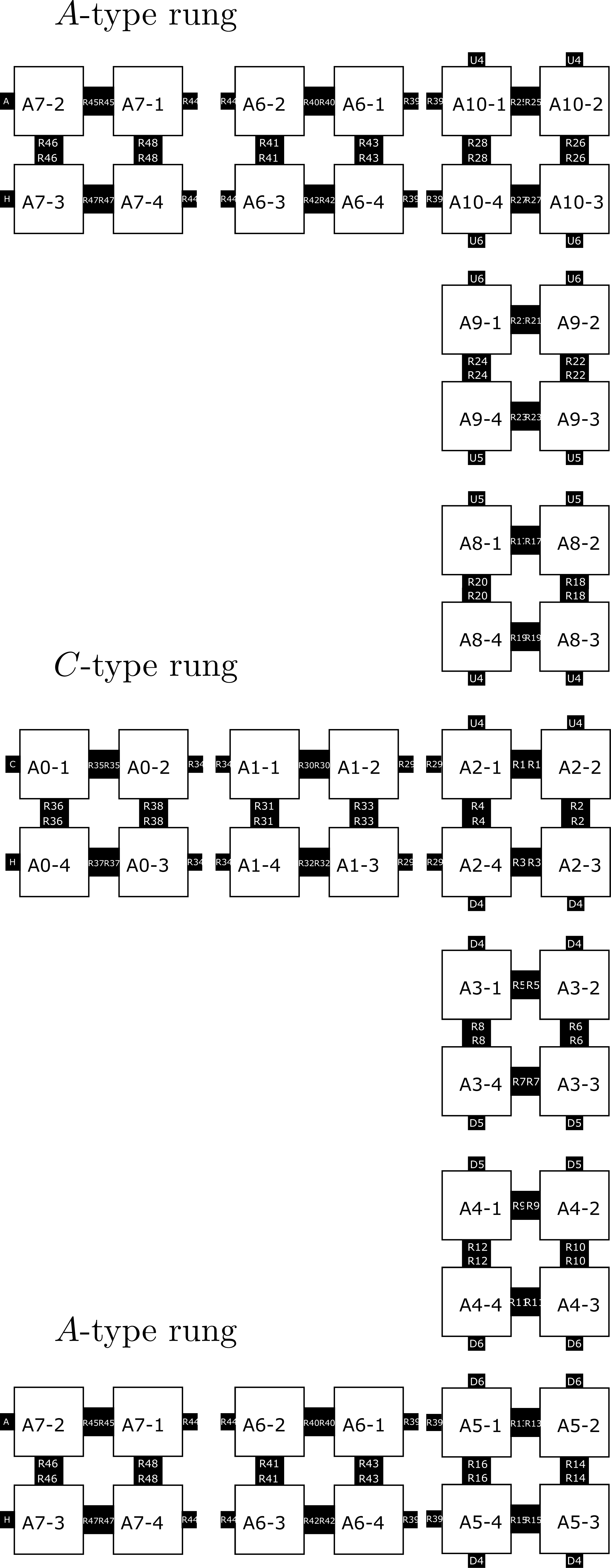}}
  \caption{Tile sets for high temperature simulation of ladders.  For each half-ladder, a special rung is shown as the middle rung.  Note that on each side, the top and bottom rungs contain duplicate tile types but are shown in each position to demonstrate how they can attach above or below special rungs.  All north/south glue pairs in the west halves of blocks (interior and exterior) are of strength $\lceil \tau'/2 \rceil$, and those in the east halves are of strength $\lfloor \tau'/2 \rfloor$.   For east/west glue pairs: those in the interior of blocks are of strength $\tau'$, those on the exterior of blocks (except for those at the ends of rungs) which are the north pair are of strength $\lceil \tau'/2 \rceil$ and the south pair are of strength $\lfloor \tau'/2 \rfloor$, and for those at the ends of rungs the top is strength $\tau'-\tau$ and the bottom is strength $1$.}
  \label{fig:ladder-tile-sets-A-and-B}
\end{figure}

\end{proof}
}%
\fi

\ifabstract
\later{

}
\fi
\iffull

\fi
\renewcommand\refname{\vspace{-45pt}\section*{References\vspace{-20pt}}}
\bibliographystyle{plain}%
{\bibliography{tam,experimental_refs,ca}}

\ifabstract
\newpage
\appendix

\begin{center}
	\Huge\bfseries
	Technical Appendix
\end{center}

\magicappendix
\fi

\end{document}